\newif\ifcommentson\commentsontrue
\newif\ifconferenceon\conferenceontrue
\newcommand{\conference}[1]{#1}
\newcommand{\arxiv}[1]{}
\newcommand{\conference}[1]{}
\newcommand{\arxiv}[1]{#1}
\newcommand{\nats}{\mathbb{N}}
\newcommand{\reals}{\mathbb{R}}
\newcommand{\nngreals}{\mathbb{R}_{\ge0}}
\renewcommand{\epsilon}{\varepsilon}
\newcommand{\calC}{\mathcal{C}}
\newcommand{\calF}{\mathcal{F}}
\newcommand{\calG}{\mathcal{G}}
\newcommand{\calH}{\mathcal{H}}
\newcommand{\calR}{\mathcal{R}}
\newcommand{\calU}{\mathcal{U}}
\newcommand{\calX}{\mathcal{X}}
\newcommand{\calY}{\mathcal{Y}}
\newcommand{\bmQ}{\mathbf{Q}}
\newcommand{\bmR}{\mathbf{R}}
\newcommand{\bmS}{\mathbf{S}}
\newcommand{\bmX}{\mathbf{X}}
\newcommand{\bmY}{\mathbf{Y}}
\newcommand{\bmc}{\mathbf{c}}
\newcommand{\bmp}{\mathbf{p}}
\newcommand{\bms}{\mathbf{s}}
\newcommand{\bmy}{\mathbf{y}}
\newcommand{\bmpi}{\mathbf{\pi}}
\newcommand{\bmLamb}{\Lambda}
\newcommand{\bmQxy}{\mathbf{Q}(y|x)}
\newcommand{\bmQxdy}{\mathbf{Q}(y|x')}
\newcommand{\bmQRR}{\bmQ_{\it RR}}
\newcommand{\bmQGLH}{\bmQ_{\it GLH}}
\newcommand{\bbE}{\mathbb{E}}
\newcommand{\hbmp}{\hat{\bmp}}
\begin{document}

\title{Toward Evaluating Re-identification Risks in the Local Privacy Model}
\author{Takao Murakami$^{1,*}$, Kenta Takahashi$^{2}$}
\address{$^{1}$AIST, Japan; 
  $^{2}$Hitachi, Japan;
  \\
  $^*$E-mail: {\small \tt{takao-murakami@aist.go.jp}}
}

\TDPRunningAuthors{Takao Murakami et al.}
\TDPRunningTitle{Toward Evaluating Re-identification Risks in the Local Privacy Model}
\TDPThisVolume{14}
\TDPThisYear{2021}
\TDPFirstPageNumber{79}

\maketitle

\begin{abstract}
LDP (Local Differential Privacy) has 
recently attracted much attention as a metric of data privacy that 
prevents 
the inference of personal data from obfuscated data in the local model. 
However, there are scenarios in which 
the adversary 
wants 
to perform \textit{re-identification attacks} to link the obfuscated data to users in this model. 
LDP can cause excessive obfuscation and destroy the utility in these scenarios because it is not designed to directly prevent re-identification. 
In this paper, we propose a 
measure of re-identification risks, 
which we call 
\textit{PIE (Personal Information Entropy)}. 
The PIE is designed so that it directly prevents re-identification attacks in the local model. 
It lower-bounds the lowest possible re-identification error probability (i.e., Bayes error probability) of the adversary. 
We analyze the relation between LDP and 
the PIE, 
and analyze the PIE and utility in distribution estimation for two obfuscation mechanisms providing LDP. 
Through experiments, we show that 
when we consider re-identification as a privacy risk, 
LDP 
can cause excessive obfuscation and destroy the utility. 
Then we show that 
the PIE can be used to guarantee low re-identification risks for the local obfuscation mechanisms 
while keeping high utility. 
\end{abstract}

\begin{keywords}
Bayes error, distribution estimation, local privacy model, re-identification, user privacy
\end{keywords}

\section{Introduction}
\label{sec:intro}
With the widespread use of personal computers, mobile devices, and IoT (Internet-of-Things) devices, a great amount of personal data 
(e.g., location data \cite{Zheng_WWW09}, rating history data \cite{RecommenderSystems_book}, browser settings \cite{Erlingsson_CCS14}) 
are increasingly collected and used for 
data analysis. 
However, 
the collection of personal data can raise serious privacy concerns. 
For example, users' sensitive locations (e.g., hospitals, stores) 
can be estimated from their location traces (time-series location trails). 
Even if location traces are pseudonymized, they 
can be re-identified (de-anonymized) to link location traces with user IDs \cite{Gambs_JCSS14,Mulder_WPES08,Murakami_TIFS17}. 
An anonymized rating dataset can also be de-anonymized to learn sensitive ratings of users \cite{Narayanan_SP08}. 

DP (Differential Privacy) \cite{Dwork_ICALP06,DP} is a privacy metric that protects users' privacy against adversaries with arbitrary background knowledge, and is known as a gold standard for data privacy. 
According to the underlying architecture, DP can be divided into 
the centralized DP and LDP (Local DP). 
The centralized DP assumes 
a 
centralized model, in which a trusted data collector, who has access to all user's personal data, obfuscates the data. 
On the other hand, LDP assumes 
a 
local model, in which a user obfuscates her personal data by herself and sends the obfuscated data to a (possibly malicious or untrustworthy) data collector. 
While all user's personal data can be leaked from the data collector by illegal access in the centralized model, LDP does not suffer from such data leakage. 
Thus 
LDP has been recently studied in the literature 
\cite{Bassily_STOC15,Cormode_SIGMOD18,Fanti_PoPETs16,Kairouz_ICML16,Murakami_USENIX19,Qin_CCS16,Wang_USENIX17} and has been adopted by 
several industrial applications 
\cite{Erlingsson_CCS14,Ding_NIPS17,Thakurta_USPatent17}. 

However, LDP is designed as a metric of \textit{data privacy} that aims to prevent the inference of personal data 
(i.e., guarantee the indistinguishability of the original data), and there are scenarios in which we should consider \textit{user privacy} 
that aims to prevent re-identification 
in the local model. 
Below we present two examples of the scenarios. 

The first example is an \textit{application that does not require user IDs}. 
The main application of LDP is estimating \textit{aggregate statistics} such as a distribution of personal data \cite{Fanti_PoPETs16,Kairouz_ICML16,Murakami_USENIX19,Wang_USENIX17} and heavy hitters \cite{Bassily_STOC15,Wang_USENIX17}. 
In this case, what are needed are each user's obfuscated data (e.g., noisy locations, noisy purchase history), and her user ID does not have to be collected. 
In fact, some applications (e.g., Google Maps, Foursquare, YouTube recommendations) can be used without 
requiring a user login. 
In such applications, the adversary (who can be either the data collector or outsider) obtains only the obfuscated data, and 
wants 
to perform a \textit{re-identification attack} to identify the user who has sent the obfuscated data.

The second example 
is \textit{pseudonymization}. 
Suppose a mobile application which sends both the user ID and personal (or obfuscated) data to the data collector. 
The data collector can \textit{pseudonymize} all personal (or obfuscated) data to reduce the risks to the users, as described in GDPR \cite{GDPR}. 
In this case, an outsider adversary who obtains the personal (or obfuscated) data via illegal access has to re-identify the user. 
Despite the importance of re-identification risks in the local model, a metric of re-identification risks in this model has not been well established (see Section~\ref{sec:related} for details).

One might think that LDP with a small privacy budget $\epsilon$ (e.g., $\epsilon \leq 1$ \cite{DP_Li}) is enough to prevent re-identification attacks because it guarantees the indistinguishability of the original personal data. 
In other words, if personal data are obfuscated so that the adversary cannot infer the original data, then it seems to be impossible for the adversary to re-identify the user. 
This is indeed the case -- we also show that LDP with a small privacy budget $\epsilon$ prevents re-identificatiion in our experiments. 
However, the real issue of LDP is that it is not designed to directly prevent re-identification, and it makes the original data indistinguishable from any other possible data in the data domain. 
This 
can cause 
excessive obfuscation and 
destroy 
the utility, 
as shown in this paper. 

\smallskip
\noindent{\textbf{Our Contributions.}}~~In this paper, we make the following contributions:

\smallskip
\noindent{\textbf{1) PIE (Personal Information Entropy).}}~~We 
propose 
a 
measure 
of re-identification risks in the local model, which we call the \textit{PIE (Personal Information Entropy)}. 
The PIE is given by the mutual information between a user and (possibly obfuscated) personal data. 
The PIE is applicable to any kind of personal data, and 
does not specify an identification algorithm used by an adversary. 
The PIE 
lower-bounds the lowest possible re-identification error probability (i.e., Bayes error probability) of the adversary. 
We also propose a privacy 
metric 
called \textit{PIE privacy} that upper-bounds the PIE irrespective of the adversary's background knowledge. 
We also show that \textit{pseudonymization (random permutation) alone} guarantees a high re-identification error probability in some cases using our PIE, 
whereas random permutation alone cannot guarantee 
DP 
even using its average versions (e.g., Kullback-Leibler DP \cite{Barber_arXiv14,Cuff_CCS16}, mutual information DP \cite{Cuff_CCS16}) or recently proposed shuffling techniques 
\cite{Balle_arXiv19,Erlingsson_SODA19}.

\smallskip
\noindent{\textbf{2) Theoretical Analysis of the PIE for Obfuscation Mechanisms.}}~~We 
analyze the PIE for two existing 
local obfuscation 
mechanisms: the RR (Randomized Response) for multiple alphabets \cite{Kairouz_ICML16} and the generalized version of local hashing in \cite{Wang_USENIX17}, which we call the GLH (General Local Hashing). 
Both of them are mechanisms providing LDP, and can be used to examine the relationship between LDP and the PIE. 

We 
first 
show that our PIE privacy is a relaxation of LDP; 
i.e., any LDP mechanism provides PIE privacy, hence upper-bounds the PIE. Then we show that this general upper-bound on the PIE for any LDP mechanism is loose, and 
show much tighter upper-bounds on the PIE for the RR and GLH. 

\smallskip
\noindent{\textbf{3) Theoretical Analysis of the Utility for Obfuscation Mechanisms.}}~~We 
analyze the utility of the RR and GLH 
for given PIE guarantees. 
Here, we 
consider 
discrete distribution estimation \cite{Agrawal_PODS01,Agrawal_SIGMOD05,Fanti_PoPETs16,Kairouz_ICML16,Kairouz_JMLR16,Murakami_USENIX19}, where personal data take discrete values, as a task for the data collector.

In our utility analysis, we show that our PIE privacy has a very different implication for utility and privacy than LDP. 
Specifically, 
let $\calX$ be a finite set of personal data. 
Then the GLH reduces the size $|\calX|$ of personal data to $g$ (i.e., dimension reduction) via random projection. When we use LDP as a privacy 
metric, 
the optimal value of $g$ is given by: $g=e^\epsilon+1$ \cite{Wang_USENIX17}, where $\epsilon$ is a 
privacy budget 
of LDP. 
In contrast, we show that when we use PIE privacy as a privacy 
metric, 
a larger $g$ provides better utility. 
In other words, we show an intuitive result that  \textit{compressing the personal data with a smaller $g$ results in the loss of utility} in our privacy 
metric. 
This result is caused by the fact that PIE privacy 
prevents 
the identification of users, whereas LDP 
prevents 
the inference of personal data. 

\smallskip
\noindent{\textbf{4) Evaluating the PIE for Obfuscation Mechanisms.}}~~We evaluate the privacy and utility 
the RR and GLH. 
We first show that LDP destroys utility when the privacy budget $\epsilon$ is small. 
For example, for distribution estimation of the most popular $20$ POIs (Point-of-Interests) in the Foursquare dataset \cite{Yang_WWW19}, the relative error of the RR was $1.05$ $(> 1)$ even when $\epsilon=10$ (which is considered to be still fairly large \cite{DP_Li}). 
In other words, LDP fails to guarantee meaningful privacy and utility for this task. 
This comes from the fact that LDP is not designed to directly prevent re-identification attacks. 

We next show that 
the PIE 
can be used to guarantee a low re-identification error while keeping high utility. 
For example, when we used 
the PIE 
to guarantee the re-identification error probability larger than 
$0.92$, 
the relative error of the RR for the top-$20$ POIs was 
$0.10$ 
$(\ll 1)$. 
This suggests that 
when we consider re-identification as a privacy risk, we should design 
a privacy metric that directly prevents re-identification attacks.

\smallskip
\noindent{\textbf{5) PSE (Personal Identification System Entropy).}}~~As explained above, we show upper-bounds on the PIE for the RR and GLH. 
Then a natural question would be ``how tight are these upper-bounds?'' 
To answer to this question, 
we 
introduce the \textit{PSE (Personal Identification System Entropy)}, 
which 
is a \textit{lower-bound} on the PIE and is equal to the PIE under some conditions. 
The PSE is designed to be easily calculated by specifying an 
identification algorithm. 
We show through experiments that our upper-bounds on the PIE for the RR and GLH are close to the PSE, which indicates that our upper-bounds are fairly tight and cannot be improved much.

One additional 
interesting feature of 
the PSE 
is that it 
can be used to compare the identifiability of personal data such as location traces and rating history with the identifiability of \textit{biometric data} such as a fingerprint and face.
Nowadays biometric authentication is widely used for various applications such as unlocking a smartphone, 
banking, and physical access control. 
Our 
PSE 
provides 
a 
new intuitive understanding 
of re-identification risks 
by comparing two different sources of information. 

Note that we 
do not consider privacy risks or obfuscation (e.g., adding DP noise) for biometric data. 
Our interest here is intuitive understanding of the identifiability of personal data (e.g., locations, rating history) through the comparison with biometric data.

For example, we show that 
a location trace with at least $500$ locations has higher identifiability than the face matcher in \cite{BSSR}. 
We also note that the face dataset in \cite{BSSR} has lower errors than the best matcher in the FRPC (Face Recognition Prize Challenge) 2017 \cite{Grother_NISTIR17} where face images were collected without tight quality constraints. 
In other words, we reveal the fact that \textit{the location trace is more identifiable than the face matcher that won the 1st place in the FRPC 2017}. 
We believe that this result is of independent interest. 

    

\smallskip
\noindent{\textbf{Remark.}}~~Our PIE privacy is based on the mutual information, which quantifies the \textit{average} amount of information about a user through observing (possibly obfuscated) personal data. 
Thus our PIE privacy is an average privacy notion, as with the KL (Kullback-Leibler)-DP \cite{Barber_arXiv14,Cuff_CCS16} and the mutual information DP \cite{Cuff_CCS16}. 
Average privacy notions have also been used in some studies on location privacy \cite{Romanelli_arXiv20,Shokri_SP11,Shokri_CCS12}. 

The caveat of the average privacy notion is that it may not guarantee the 
indistinguishability 
for every user; 
e.g., even if it guarantees the re-identification error probability larger than $0.99$, at most $1\%$ of users may be re-identified. 
Nevertheless, 
there are application scenarios in which 
the average privacy notion is quite useful in practice. 
One example is a 
prioritization system that determines, among several defenses, which one should be (or should not be) used. 
Another example is an alerting system adopted in \cite{Chia_SP19}, which notifies engineers if re-identification risks exceed pre-determined limits.
Thus, we use the average privacy notion as a starting point.

We also note that 
a \textit{worst-case} privacy notion (e.g., 
min-entropy \cite{Smith_FoSSaCS09}) 
can be used to guarantee stronger privacy for every user. 
We leave extending our PIE privacy to the worst-case notion for future work (Section~\ref{sec:conclusion} describes some open questions in this research direction).

\smallskip
\noindent{\textbf{Basic Notations.}}~~Let $\nats$, $\reals$, and $\nngreals$ be the set of natural numbers, real numbers, and non-negative real numbers, respectively. 
For $a \in \nats$, let $[a] = \{1, 2, \cdots, a\}$. 
For random variables $A$ and $B$, let $I(A; B)$ be the mutual information between $A$ and $B$. 
For two distributions $p$ and $q$, let $D(p||q)$ be the 
KL (Kullback-Leibler) divergence \cite{elements}. 
We simply denote the logarithm with base 2 by $\log$.
We use these notations throughout this paper.

\section{Related Work}
\label{sec:related}
In this section, we review the previous work. 
Section~\ref{sub:privacy_measures} describes 
LDP \cite{Duchi_FOCS13} 
and other privacy metrics. 
Section~\ref{sub:obf} explains the RR for multiple alphabets \cite{Kairouz_ICML16} and the GLH \cite{Bassily_STOC15}.

\subsection{Privacy Metrics}
\label{sub:privacy_measures}
\noindent{\textbf{LDP.}}~~Let 
$\calX$ be a finite set of personal data, and $\calY$ be a finite set of obfuscated data. 
Let $\bmQ$ be an obfuscation mechanism (a.k.a. masking method \cite{Torra_book}), which maps personal data $x\in\calX$ to obfuscated data $y\in\calY$ with probability $\bmQ(y|x)$.  
Then 
LDP is defined as follows:

\begin{definition}[$\epsilon$-LDP]
\label{def:LDP}
Let $\epsilon \in \nngreals$. 
An obfuscation mechanism $\bmQ$ provides \emph{$\epsilon$-LDP} if for any $x,x' \in \calX$ and any $y \in \calY$, 
\begin{align}
\bmQxy \leq e^\epsilon \bmQxdy.
\label{eq:LDP}
\end{align}
\end{definition}
Intuitively, LDP guarantees that the adversary who obtains $y$ cannot determine whether it comes from $x$ or $x'$ for any pair of $x$ and $x'$ in $\calX$ with a certain degree of confidence. 
The parameter $\epsilon$ is called the privacy budget. 
When the privacy budget $\epsilon$ is close to $0$, all of the data in $\calX$ are almost equally likely. 
Thus LDP strongly protects $y$ 
when $\epsilon$ is small; e.g., $\epsilon \leq 1$ \cite{DP_Li}.

\smallskip
\noindent{\textbf{Other Privacy Metrics.}}~~To date, 
numerous variants of DP (or LDP) have been proposed to provide different types of privacy guarantees. 
Examples are: Pufferfish privacy \cite{Kifer_TODS14}, $d_x$-privacy \cite{Chatzikokolakis_PETS13}, R\'{e}nyi DP \cite{Mironov_CSF17}, concentrated DP \cite{Dwork_arXiv16}, mutual information DP \cite{Cuff_CCS16}, personalized DP \cite{Jorgensen_ICDE15}, 
utility-optimized LDP \cite{Murakami_USENIX19}, 
and capacity-bounded DP \cite{Chaudhuri_NeuRIPS19}. 
A recent SoK paper proposed a systematic taxonomy of relaxations of DP, and classified the relaxations into seven categories based on which aspect of DP was modified \cite{Desfontaines_PoPETs20}. 

Our 
PIE privacy 
is also a variant of DP because it is a relaxation of LDP, as shown in Sections~\ref{sub:LDP_PIE}. 
Our 
PIE privacy 
differs from existing variants of DP in 
that 
PIE privacy aims at preventing re-identification attacks. 
In this regard, PIE privacy is different from any dimension of seven categories in the SoK paper \cite{Desfontaines_PoPETs20}.

Our 
PIE and PSE 
are also related to quantitative information flow \cite{Alvim_CSF12,Bordenabe_CSF16,Romanelli_arXiv20,Smith_FoSSaCS09}, where the amount of information leakage is measured by the mutual information or entropy. 
In particular, our PSE is closely related to a recent study \cite{Romanelli_arXiv20}, which uses the mutual information between a user ID and a re-identified user ID as a 
measure of re-identification risks. 

Our PSE differs from \cite{Romanelli_arXiv20} in that 
the PSE is given by the mutual information between a user ID and a \textit{score vector} (vector consisting of similarities or distances for all users) 
calculated by the adversary, rather than the re-identified user ID. 
We use the score vector because it contains much richer information than the identified user ID (this is well known in biometrics; e.g., see \cite{Ross06}). 
We also show that 
although the PSE is upper-bounded by the PIE, 
\textit{the PSE is equal to the PIE under some conditions} 
(Section~\ref{sub:PIE}). 
This property is very useful for evaluating how tight an upper-bound on the PIE is. 
In fact, we show that \textit{our upper-bounds on the PIE for the RR and GLH are close to the PSE} in our experiments (Section~\ref{sec:exp}). 
In contrast, it is difficult to evaluate the upper-bounds on the PIE using the re-identification user ID because it contains much less information. We also note that a score vector enables us to output a list of $k \in [n]$ users whose similarities (resp.~distance) are the highest (resp.~lowest) as an identification result.

\subsection{Obfuscation Mechanisms}
\label{sub:obf}
\noindent{\textbf{RR for Multiple Alphabets.}}~~The RR (Randomized Response) was originally introduced 
by Warner 
for binary alphabets \cite{Warner_JASA65}. 
Kairouz \textit{et al.} 
\cite{Kairouz_ICML16} 
studied 
the RR for $|\calX|$-ary alphabets. 

Given $\epsilon \in \nngreals$, let $\bmQ_{RR}$ be the \textit{$\epsilon$-RR} for $|\calX|$-ary alphabets. 
In the $\epsilon$-RR, the output range is identical to the input domain; i.e., $|\calX|=|\calY|$. 
Given personal data $x\in\calX$, the $\epsilon$-RR outputs obfuscated data $y\in\calX$ with 
probability: 
\begin{align}
\bmQRR(y | x) = 
\begin{cases}
 \frac{e^\epsilon}{|\calX|+e^\epsilon-1} & \text{(if $y = x$)}\\
 \frac{1}{|\calX|+e^\epsilon-1} & \text{(otherwise)}.\\
\end{cases} 
\label{eq:RR}
\end{align}
By (\ref{eq:LDP}) and (\ref{eq:RR}), the $\epsilon$-RR provides $\epsilon$-LDP. 

Note that the $\epsilon$-RR is a kind of PRAM (Post-RAndomization Method) \cite{Torra_book,Templ_TDP08,Mares_CSDA14}, which replace a category $x\in\calX$ with another category $y\in\calX$ according to a given transition matrix ($\bmQ$ can be viewed as a transition matrix), and that there is a connection between the literature of DP and that of PRAM. 
For example, Mar\'{e}s and Shlomo \cite{Mares_CSDA14} consider a PRAM transition matrix whose diagonal values 
do not exceed 
$0.75$ to reduce privacy disclosure risks. 
When 
$\epsilon = 1$ (which is considered to be acceptable in the literature of DP \cite{DP_Li}) in the $\epsilon$-RR, the diagonal values in $\bmQRR$ are smaller than $0.73$ for any $|\calX| \geq 2$ (by (\ref{eq:RR})). 

\smallskip
\noindent{\textbf{GLH.}}~~Bassily and Smith \cite{Bassily_STOC15} proposed an obfuscation mechanism based on random projection that maps personal data $x\in\calX$ to a single bit. 
Wang \textit{et al.} \cite{Wang_USENIX17} called this mechanism the BLH (Binary Local Hashing), and generalized it so that $x\in\calX$ is mapped to a value in $[g]$, where $g\in\nats$.
We 
call 
this generalized mechanism 
the \textit{GLH (General Local Hashing)}. 

The GLH consists of the following two steps: (i) apply random projection to $x\in\calX$, 
and then (ii) perturb the data using the RR for multiple alphabets. 
Formally, 
let $\calH=\{h: \calX \rightarrow [g]\}$ be a universal hash function family \cite{Carter_JCSS79}; i.e., for any distinct $x$ and $x'$ in $\calX$ and a hash function $h$ chosen uniformly at random in $\calH$, 
\begin{align*}
\Pr [h(x) = h(x')] \leq \frac{1}{g}. 
\end{align*}
Given $\epsilon \in \nngreals$, let $\bmQGLH$ be the \textit{$(g,\epsilon)$-GLH}. 
The $(g,\epsilon)$-GLH is an obfuscation mechanism with the input alphabet $\calX$ and the output alphabets $\calY=(\calH,[g])$. 
Given $x\in\calX$, the $(g,\epsilon)$-GLH randomly generates a hash function $h$ from $\calH$, and outputs $(h,y)\in\calY$ with 
probability: 
\begin{align}
\bmQGLH((h,y) | x) = 
\begin{cases}
 \frac{e^\epsilon}{g+e^\epsilon-1} & \text{(if $y = h(x)$)}\\
 \frac{1}{g+e^\epsilon-1} & \text{(otherwise)}.\\
\end{cases} 
\label{eq:GLH}
\end{align}
By (\ref{eq:LDP}) and (\ref{eq:GLH}), the $(g,\epsilon)$-GLH provides $\epsilon$-LDP. 
Wang \textit{et al.} \cite{Wang_USENIX17} found that for a fixed $\epsilon$, the value of $g$ that minimizes the variance in distribution estimation is given by: $g=e^\epsilon+1$. 

\section{Personal Information Entropy}
\label{sec:unified}
We propose the PIE (Personal Information Entropy) as a 
measure 
of re-identification risks in the local model. 
We first describe 
frameworks for obfuscation and identification 
assumed 
in our work in Section~\ref{sub:framework}. 
Then in Section~\ref{sub:PIE}, we introduce the PIE and a privacy metric called PIE privacy, which upper-bounds the PIE. 
We also introduce 
the PSE (Personal Identification System Entropy), which 
lower-bounds the PIE by specifying an 
identification algorithm. 
Finally we show several basic properties of the PIE 
in Section~\ref{sub:properties}.

\subsection{Obfuscation/Identification Framework}
\label{sub:framework}

\noindent{\textbf{Framework.}}~~Figures~\ref{fig:framework_obf} and \ref{fig:framework_ide} show 
an obfuscation framework and identification framework, 
respectively. 
We also show in Table~\ref{tab:notations} the notations used in this paper. 
Let $\Omega$ be a finite set of all human beings, and 
$\calU \subseteq \Omega$ be a finite set of users who use a certain application; e.g., location-based service, recommendation service. 
Let $n \in \nats$ be the number of users in $\calU$, and $u_i \in \calU$ be the $i$-th user; i.e., $\calU = \{u_1, \cdots, u_n\}$. 

\begin{figure}
\centering
\includegraphics[width=0.75\linewidth]{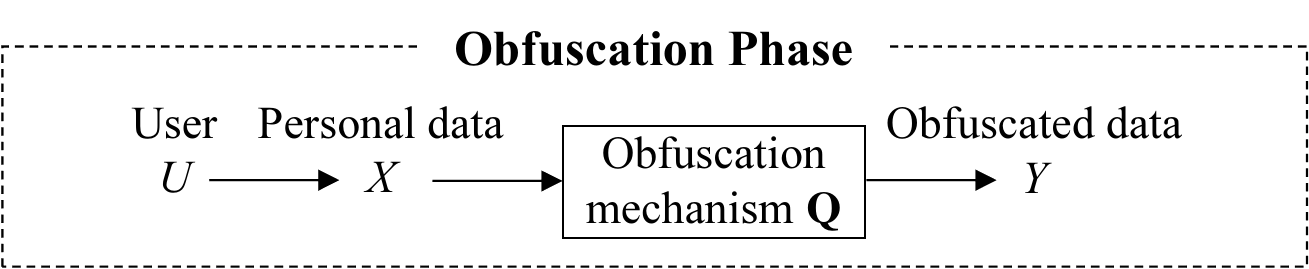}
\vspace{-2mm}
\caption{Framework for obfuscation. 
Our PIE is built upon this obfuscation framework. 
The PIE does not specify any identification algorithm.}
\label{fig:framework_obf}
\includegraphics[width=0.75\linewidth]{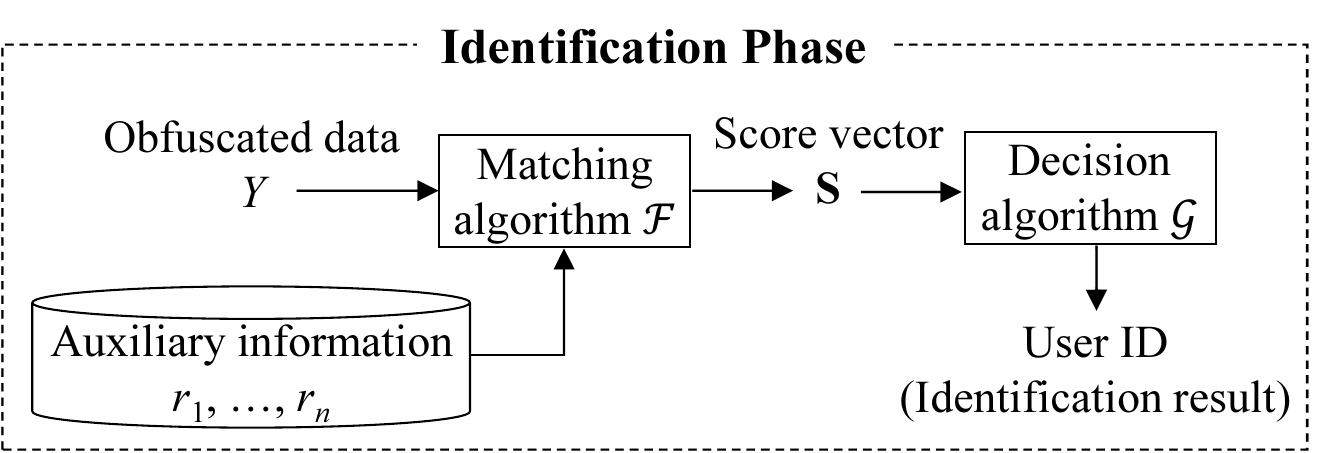}
\vspace{-2mm}
\caption{Framework for identification. 
A score vector $\bms=(s_1, \cdots, \allowbreak s_n)\in\reals^n$ consists of a numerical score (similarity or distance) $s_i\in\reals$ between obfuscated data $Y$ and auxiliary information $r_i$ of user $u_i$.
Our PSE is built upon this identification framework.}
\label{fig:framework_ide}
\end{figure}

\begin{table}[tp]
\caption{Notations in this paper.} 
\centering
\hbox to\hsize{\hfil
\begin{tabular}{l|l}
\hline
Symbol		&	Description\\
\hline
$\calU$	    &	Finite set of users.\\
$\calX$	    &	Finite set of personal data.\\
$\calY$	    &	Finite set of obfuscated data.\\
$\calR$	    &	Finite set of auxiliary data.\\
$n$ 	    &	Number of users in $\calU$ ($n\in\nats$).\\
$u_i$	    &	$i$-th user ($u_i\in\calU$).\\
$r_i$	    &	Auxiliary information of user $u_i$.\\
$s_i$	    &	Score of user $u_i$ ($s_i\in\reals$).\\
$\bms$	    &	Score vector ($\bms=(s_1, \cdots, \allowbreak s_n)\in\reals^n$).\\
$U$ 	    &	Random variable representing a user in $\calU$.\\
$X$ 	    &	Random variable representing personal data.\\
$Y$ 	    &	Random variable representing obfuscated data.\\
$\bmS$	    &	Random variable representing a score vector.\\
$p_U$       &   Distribution of $U$.\\
$p_{X|U=u_i}$   &   Distribution of $X$ given $U=u_i$.\\
$p_{U,X}$   &   Joint distribution of $U$ and $X$.\\
$\bmQ$	    &	Obfuscation mechanism.\\
$\calF$	    &	Matching algorithm.\\
$\calG$	    &	Decision algorithm.\\
\hline
\end{tabular}
\hfil}
\label{tab:notations}
\end{table}

In the obfuscation phase, 
a user 
obfuscates her personal data 
by herself using an obfuscation mechanism $\bmQ$ in her device, and sends the obfuscated data 
to a data collector. 
Let $U$, $X$, and $Y$ be random variables representing a user in $\calU$, personal data in $\calX$, and obfuscated data in $\calY$, respectively. 

User $U$ is randomly generated from some distribution over $\calU$, which can be either uniform or non-uniform. 
Let $p_U$ be a distribution of $U$; i.e., 
$p_U(u_i) = \Pr(U=u_i)$. 
Note that 
$p_U$ is a prior distribution an adversary has before observing $Y$. 
In other words, $p_U$ is a kind of the adversary's background knowledge. 
For example, 
suppose that each user sends a single obfuscated datum to a data collector, and that Alice's obfuscated data $Y$ is leaked to the adversary. 
In this case, 
the prior distribution $p_U$ for this adversary is uniform over $\calU$; i.e., 
$p_U(u_i) = \frac{1}{|\calU|}$ for any $u_i\in\calU$.
If some heavy users send obfuscated data many times and the adversary knows this fact, 
then $U$ is non-uniformly distributed. 

Given $U=u_i$, personal data $X$ is randomly generated from some distribution over $\calX$. 
Let $p_{X|U=u_i}$ be a distribution of $X$ given $U=u_i$; i.e., 
$p_{X|U=u_i}(x) = \Pr(X=x | U=u_i)$. 
If personal data $X$ is uniquely determined given user $u_i$ (i.e., 4-digit PIN of Alice is ``3928''), then $p_{X|U=u_i}$ is a point distribution that has probability $1$ for a single data point. 
Otherwise (e.g., Alice may visit many locations), $p_{X|U=u_i}$ is not a point distribution. 

Let $p_{U,X}$ be a joint distribution of $U$ and $X$; i.e., $p_{U,X}(u_i,x) = \Pr(U=u_i, X=x)$. 
Note that $p_{U,X}(u_i,x) = p_U(u_i) p_{X|U=u_i}(x)$. 
As with $p_U$, $p_{U,X}$ is also a kind of the adversary's background knowledge. 
Our PIE depends on 
$p_{U,X}$, 
but our PIE privacy does \textit{not} depend on 
$p_{U,X}$, 
as described in Section~\ref{sub:PIE} in detail. 

Given $X = x$, obfuscated data $Y$ is randomly generated using an obfuscation mechanism $\bmQ$, which maps $x$ to $y\in\calY$ with probability $\bmQ(y|x)$. 
Examples of $\bmQ$ include the randomized response \cite{Warner_JASA65,Kairouz_ICML16}, GLH \cite{Wang_USENIX17}, and RAPPOR \cite{Erlingsson_CCS14}. 
If an obfuscation mechanism $\bmQ$ is not used, then $Y=X$. 
Then the data collector estimates some aggregate statistics; e.g., histogram, heavy hitters. 

Our PIE is built upon the obfuscation framework in Figure~\ref{fig:framework_obf}, and is 
given by the mutual information between $U$ and $Y$, 
as described in Section~\ref{sub:PIE}. 
Therefore, the PIE 
does not specify 
any identification algorithm.

On the other hand, our PSE is designed to evaluate a lower-bound on the PIE through experiments by specifying an 
identification algorithm. 
The PSE is built upon the identification framework in Figure~\ref{fig:framework_ide}, where 
an identification system 
comprises 
two algorithms: a \textit{matching algorithm} and \textit{decision algorithm}. 
These algorithms are widely 
used for 
re-identification 
attacks 
in privacy literature  
\cite{Frankowski_SIGIR06,Gambs_JCSS14,Mulder_WPES08,Murakami_TIFS17,Narayanan_SP08,Shokri_SP11,Shokri_PETS11}. 

In the identification phase, 
the matching algorithm takes as input obfuscated data $Y$ and some \textit{auxiliary information} of user $u_i$, and outputs a \textit{numerical score} 
for $u_i$. 
The auxiliary information is background knowledge about user $u_i$ the system possesses. 
The score is either a \textit{similarity} or \textit{distance} between the obfuscated data $Y$ and 
auxiliary information of $u_i$. 
A large similarity (or small distance) indicates that it is highly likely that 
$Y$ 
belongs to $u_i$. 

For the auxiliary information, we can consider two possible models: the maximum-knowledge model and partial-knowledge model \cite{Domingo-Ferrer_PST15,Ruiz_PSD18}. 
The maximum-knowledge model assumes a worst-case scenario 
(though it may not be realistic). 
Specifically, 
this model assumes that 
the original personal data $X$ is used as auxiliary information. 
The partial-knowledge model considers a 
scenario 
where the adversary does not know the original personal data. 
For example, 
the auxiliary information in this model can be 
locations or video browsing history (other than $X$) disclosed by the users via SNS (e.g., Foursquare, Facebook). 
The de-anonymization attack against the Netflix Prize dataset \cite{Narayanan_SP08} also assumes that the adversary knows only a little bit about a user's rating history (e.g., two or three ratings per user) as auxiliary information, and therefore falls into the partial-knowledge model. 

Formally, let $\calR$ be the set of auxiliary information, and $r_i \in \calR$ be auxiliary information of user $u_i$. 
Let $\calF: \calY \times \calR \rightarrow \reals$ be the matching algorithm, which takes as input obfuscated data $y \in \calY$ and auxiliary information $r_i \in \calR$, and outputs a score $\calF(y,r_i) \in \reals$. 
Examples of $\calF$ include the algorithms based on the Markov 
model \cite{Gambs_JCSS14,Mulder_WPES08,Murakami_TIFS17,Shokri_PETS11}, TF-IDF \cite{Frankowski_SIGIR06}, and the cosine similarity measure \cite{Narayanan_SP08}. 
Let $s_i\in\reals$ be a score of user $u_i$; i.e., $s_i = \calF(y,r_i)$. 
Let $\bms=(s_1, \cdots, \allowbreak s_n)\in\reals^n$ be a \textit{score vector}, and 
$\bmS$ be a random variable representing a score vector. 

The decision algorithm decides who the user is based on a score vector. 
Formally, 
let $\calG: \reals^n \rightarrow \calU$ be the decision algorithm, which takes a score vector $\bms\in\reals^n$ as input and outputs an identified user ID $\calG(\bms)\in\calU$. 
Typically, the decision algorithm $\calG$ outputs a user ID whose similarity (resp.~distance) is the highest (resp.~lowest) \cite{intro,guide,Frankowski_SIGIR06,Gambs_JCSS14,Mulder_WPES08,Murakami_TIFS17,Narayanan_SP08,Shokri_PETS11}. 
We refer to this as a \textit{best score rule}. 
The decision algorithm may also output a list of $k \in [n]$ users whose similarities (resp.~distance) are the highest (resp.~lowest).

\smallskip
\noindent{\textbf{Interpretation as Biometric Identification.}}~~Readers might have noticed that 
the frameworks in Figures~\ref{fig:framework_obf} and \ref{fig:framework_ide} include 
biometric identification as a special case. 
Specifically, 
the obfuscation mechanism $\bmQ$ can be interpreted as a feature extractor in the context of biometrics. 
The auxiliary information $r_1, \cdots, r_n$ can be interpreted as biometric templates enrolled in the database. 
The matching algorithm and decision algorithm are commonly used in biometric identification \cite{intro,guide}. 
Since our PSE is 
built on 
these frameworks, 
it 
can also be applied to measure the identifiability of biometric data.

We emphasize again that we 
do not consider privacy risks or obfuscation (e.g., adding DP noise) for biometric data. 
Instead, we provide 
a new perspective about the identifiability of personal data through the comparison with another source of information; e.g., the best face matcher in the prize challenge, as described in Section~\ref{sec:intro}.

\subsection{PIE and PSE}
\label{sub:PIE}

\smallskip
\noindent{\textbf{PIE.}}~~We 
now 
introduce the \textit{PIE (Personal Identification Entropy)} of user $U$ obtained through 
obfuscated data $Y$. 
Specifically, we define the PIE as the mutual information between $U$ and $Y$: 
\begin{align}
\text{PIE} = I(U;Y) ~~ (bits).
\label{eq:PIE}
\end{align}
Note that traditional re-identification measures such as the re-identification rate \cite{Gambs_JCSS14,Mulder_WPES08,Murakami_TIFS17} specify an identification algorithm used by an adversary. 
However, even if the re-identification rate is low for the specific algorithm, 
the re-identification rate might be high for another algorithm used by the adversary. 
In contrast, the PIE does not specify the identification algorithm, and therefore is robust to the change of the identification algorithm. 
It is also robust to an unknown algorithm that may be used by the adversary in future. 

However, 
we specify a distribution 
$p_{U,X}$ 
to calculate $I(U;Y)$ in (\ref{eq:PIE}). 
When $I(U;Y)$ is close to $0$, almost no information about the user $U$ is obtained through 
the obfuscated data $Y$. 
Thus 
it is desirable to make $I(U;Y)$ small to prevent 
identification for any distribution 
$p_{U,X}$. 
Based on this, we define the notion called 
\textit{$(\calU,\alpha)$-PIE privacy:} 

\begin{definition}[$(\calU,\alpha)$-PIE privacy]
\label{def:PIE-privacy}
Let 
$\calU \subseteq \Omega$ 
and $\alpha \in \nngreals$. 
An obfuscation mechanism $\bmQ$ provides 
\emph{$(\calU,\alpha)$-PIE privacy} 
if 
\begin{align}
\sup_{p_{U,X}} I(U;Y) \leq \alpha ~~ (bits).
\label{eq:PIE-privacy}
\end{align}
\end{definition}
$(\calU,\alpha)$-PIE privacy guarantees that the PIE is upper-bounded by $\alpha$ for a user set $\calU$. 
Since the inequality (\ref{eq:PIE-privacy}) holds for any distribution $p_{U,X}$, 
the PIE is upper-bounded by $\alpha$ \textit{irrespective of the adversary's background knowledge}.
In other words, PIE privacy does not specify an identification algorithm nor the adversary's background knowledge, hence is robust to the change of the identification algorithm or the adversary's background knowledge.

Note that although $(\calU,\alpha)$-PIE privacy specifies a user set $\calU$, we show in Section~\ref{sec:theoretical} that LDP mechanisms $\bmQ$ such as the RR and GLH provide $(\calU,\alpha)$-PIE privacy for any 
$\calU$ with $|\calU| = n$, where $\alpha$ depends on $n$. 
Therefore, each user only has to know the number of users $n$ in the application to obfuscate her personal data with PIE guarantees, and does not have to know who else are using the application. 
We assume that the number of users $n$ is published by the data collector in advance.

The parameter $\alpha$ plays a role similar to the privacy budget $\epsilon$ in LDP. 
We explain how to set $\alpha$ in Section~\ref{sub:properties}.
We also show some basic properties of 
$(\calU,\alpha)$-PIE privacy 
in Section~\ref{sub:properties}. 
We 
show that 
$(\calU,\alpha)$-PIE privacy 
is a relaxation of $\epsilon$-LDP in Section~\ref{sub:LDP_PIE}. 

Cuff and Yu \cite{Cuff_CCS16} introduced the mutual information DP, which is a relaxation of DP using the mutual information. In the local privacy model, the notion in \cite{Cuff_CCS16} can be expressed as: $I(X;Y) \leq \alpha$. 
We call this notion \textit{MI-LDP (Mutual Information LDP)}. 
MI-LDP aims to prevent the inference of $X$, as with LDP. 
In contrast, PIE privacy aims to prevent the identification of $U$. 
This difference is significant. 
In fact, we show in Sections~\ref{sub:util_anal} and \ref{sub:implications} that 
our PIE privacy has a very different implication for utility and privacy than LDP. 

\smallskip
\noindent{\textbf{PSE.}}~~In Section~\ref{sec:theoretical}, we show that LDP mechanisms $\bmQ$ such as the RR and GLH provide PIE privacy, 
which 
upper-bounds the PIE. 
To evaluate how tight our upper-bounds 
are, 
we also 
introduce the \textit{PSE (Personal Identification System Entropy)}, which provides 
a \textit{lower-bound} on the PIE by specifying an 
identification algorithm. 

Specifically, we define the PSE as the mutual information between a user $U$ and a score vector $\bmS$: 
\begin{align}
\text{PSE} = I(U;\bmS) ~~ (bits).
\label{eq:PSE}
\end{align}
It 
is difficult to 
calculate 
the PIE in (\ref{eq:PIE}) through experiments, 
especially when $Y$ is in the high-dimensional feature space. 
On the other hand, 
the PSE can be easily calculated based on the theoretical results in \cite{Takahashi_IMAVIS14}. 
Therefore, we use $(\calU,\alpha)$-PIE privacy to guarantee that the PIE is less than or equal to $\alpha$ irrespective of the adversary's background knowledge, and use the PSE to evaluate how tight $\alpha$ is.
Note that a different identification algorithm leads to a different value of the PSE. 
In our experiments, we use an identification algorithm based on the Markov chain model \cite{Gambs_JCSS14,Mulder_WPES08,Murakami_TIFS17,Shokri_PETS11} to maximize the PSE (for details, see discussion after Proposition~\ref{prop:PIE_PSE}).

Below we explain how to calculate the PSE. 
Let $f_G$ (resp.~$f_I$) be a one-dimensional distribution of scores (similarities or distances) 
output by the matching algorithm $\calF$ 
when the obfuscated data and the auxiliary information belong to 
the same user (resp.~different users). 
We refer to $f_G$ as a genuine score distribution, and $f_I$ as an impostor score distribution in the same way as biometric literature \cite{guide,intro}. 

Then 
$I(U;\bmS)$ converges to the KL divergence \cite{elements} between $f_G$ and $f_I$ as $n$ increases: 
\begin{theorem}[Theorem 6 in \cite{Takahashi_IMAVIS14}]
\label{thm:BSE_KL} 
Let $f_G$ be a genuine score distribution and $f_I$ be an impostor score distribution (both $f_G$ and $f_I$ are one-dimensional). Then
\begin{align}
I(U;\bmS) \rightarrow D(f_G || f_I) \hspace{3mm} (n \rightarrow \infty).
\label{eq:BSE_KL}
\end{align}
\end{theorem}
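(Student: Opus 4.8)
The plan is to evaluate $I(U;\bmS)$ in closed form under the standard biometric identification model and show it decomposes into the target divergence plus a remainder that vanishes. First I would make explicit the modeling assumptions behind the identification framework of Figure~\ref{fig:framework_ide}: the prior is uniform, $p_U(u_i)=1/n$, and, conditioned on the true user being $U=u_i$, the scores $s_1,\dots,s_n$ are mutually independent, with the genuine score $s_i$ drawn from $f_G$ and each impostor score $s_j$ $(j\neq i)$ drawn from $f_I$. Writing $L_i = f_G(s_i)/f_I(s_i)$ for the per-coordinate likelihood ratio, these assumptions give
\begin{align}
p(\bms \mid u_i) = f_G(s_i)\prod_{k\neq i} f_I(s_k) = L_i \prod_{k=1}^n f_I(s_k),
\end{align}
and hence the marginal $p(\bms) = \frac{1}{n}\big(\prod_{k=1}^n f_I(s_k)\big)\sum_{j=1}^n L_j$.

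Second, I would form the log-likelihood ratio appearing in the mutual information. The common factor $\prod_k f_I(s_k)$ cancels, leaving
\begin{align}
\log\frac{p(\bms\mid u_i)}{p(\bms)} = \log L_i - \log\!\left(\tfrac{1}{n}\sum_{j=1}^n L_j\right).
\end{align}
By symmetry under the uniform prior, each user contributes equally to $I(U;\bmS)=\sum_i p_U(u_i)\,\bbE[\log(p(\bms\mid u_i)/p(\bms))\mid U=u_i]$, so it suffices to take the $i=1$ term. Evaluating the expectation under $U=u_1$ (so $s_1\sim f_G$ and $s_j\sim f_I$ for $j\ge 2$) splits the mutual information as
\begin{align}
I(U;\bmS) = \bbE_{s_1\sim f_G}\!\left[\log L_1\right] - \bbE\!\left[\log\!\left(\tfrac{1}{n}\sum_{j=1}^n L_j\right)\right].
\end{align}

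Third, I would identify the two pieces. The first is exactly $\bbE_{s_1\sim f_G}[\log(f_G(s_1)/f_I(s_1))] = D(f_G\|f_I)$, independent of $n$. For the second, the key identity is that the impostor-mean of the likelihood ratio equals $\bbE_{s\sim f_I}[L]=\int f_G(s)\,ds = 1$. Since there are $n-1$ impostor coordinates and a single genuine one, the lone genuine contribution $\tfrac{1}{n}L_1\to 0$ and the weak law of large numbers gives $\tfrac{1}{n}\sum_{j=1}^n L_j \to 1$ in probability, so that $\log(\tfrac{1}{n}\sum_j L_j)\to 0$ in probability; upgrading this to convergence of the expectation yields $I(U;\bmS)\to D(f_G\|f_I)$.

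The hard part is precisely this last upgrade: because $-\log(\cdot)$ blows up as its argument tends to $0$, convergence in probability does not by itself force the expectation to converge. I would close the gap with a uniform-integrability (or dominated-convergence) argument that controls the contribution of the rare events on which $\tfrac{1}{n}\sum_j L_j$ is atypically small. This is where mild regularity on $f_G$ and $f_I$ must be invoked---finiteness of $D(f_G\|f_I)$ together with a moment or tail condition on $L$ guaranteeing that the left tail of $\log\tfrac{1}{n}\sum_j L_j$ is uniformly integrable in $n$. Everything else is the routine cancellation above and the identity $\bbE_{s\sim f_I}[L]=1$.
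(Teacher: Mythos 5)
You should first note that the paper does not contain a proof of this statement at all: Theorem~\ref{thm:BSE_KL} is imported verbatim as ``Theorem 6 in \cite{Takahashi_IMAVIS14}'', and the text explicitly defers to that reference (``Theorem~\ref{thm:BSE_KL} is proved in \cite{Takahashi_IMAVIS14}\ldots''). So there is no in-paper argument to compare against; what you have written is the standard derivation underlying the cited result. Your skeleton is the right one: under a uniform prior and coordinatewise independence of the scores, the common factor $\prod_k f_I(s_k)$ cancels, $I(U;\bmS) = D(f_G\|f_I) - \bbE\bigl[\log\bigl(\tfrac{1}{n}\sum_j L_j\bigr)\bigr]$, the identity $\bbE_{s\sim f_I}[L]=1$ (which requires $f_G\ll f_I$, i.e.\ $D(f_G\|f_I)<\infty$) drives the law of large numbers, and the remainder vanishes. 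This is exactly how such ``identification capacity'' results are obtained.

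Two points keep this from being a complete proof of the theorem \emph{as stated}. First, your independence assumption is stronger than what the paper's framework gives you: in Figure~\ref{fig:framework_ide} every coordinate is $s_j=\calF(Y,r_j)$ with the \emph{same} realization of $Y$, so the impostor scores are exchangeable but generally dependent through $Y$; conditional independence of $(s_1,\dots,s_n)$ given $U$, with marginals $f_G$ and $f_I$, is an idealization inherited from the biometrics literature, and your proof needs it to be stated as a hypothesis rather than derived. Second, and more seriously, the step you flag as ``the hard part'' is a genuine gap, not a routine one: $-\log$ is unbounded near $0$, so $\tfrac{1}{n}\sum_j L_j\to 1$ in probability does not give $\bbE[\log(\tfrac{1}{n}\sum_j L_j)]\to 0$ without a uniform-integrability or left-tail control argument, and that argument demonstrably requires regularity hypotheses absent from the theorem statement (e.g.\ $\int f_G^2/f_I<\infty$ makes the Jensen direction $\bbE[\log\bar L]\le\log\bbE[\bar L]\to 0$ work, and some condition such as $f_I\ll f_G$ on the common support or a moment bound on $\log L$ is needed to keep the left tail of $\log\bar L$ from contributing). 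As written, the theorem has no such hypotheses, so your proof cannot close without either adding them or appealing to whatever conditions \cite{Takahashi_IMAVIS14} imposes. You have correctly located the difficulty, but naming a gap is not the same as closing it.
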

Theorem~\ref{thm:BSE_KL} is proved 
in \cite{Takahashi_IMAVIS14} when 
$Y$ is a biometric feature. 
Theorem~\ref{thm:BSE_KL} means that the PSE converges to the KL divergence $D(f_G||f_I)$ as the number of users $n$ increases. 
In our experiments, 
we 
estimate 
$D(f_G||f_I)$ by the generalized $k$-NN estimator \cite{Wang_TIT09}, which is asymptotically unbiased; i.e., it converges to the true value as the sample size increases.

The following proposition shows the relation between the PIE and PSE:

\begin{proposition}[PIE and PSE]
\label{prop:PIE_PSE} 
For any matching algorithm $\calF$ and any auxiliary information $r_1, \cdots, r_n$, 
\begin{align}
I(U;\bmS) \leq I(U;Y)
\label{eq:PIE_PSE}
\end{align}
with equality if and only if $U$, $\bmS$, and $Y$ 
form 
the Markov chain $U \rightarrow \bmS \rightarrow Y$ (i.e., 
$\bmS$ is a sufficient statistic for $U$). 
\end{proposition}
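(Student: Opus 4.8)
The plan is to recognize this as a direct application of the data processing inequality, where the only structural input needed is that the score vector $\bmS$ is a deterministic function of the obfuscated data $Y$. With the auxiliary information $r_1, \dots, r_n$ held fixed, each score is $s_i = \calF(Y, r_i)$, so the whole vector $\bmS = (\calF(Y, r_1), \dots, \calF(Y, r_n))$ is determined by $Y$ alone. Composing this with the obfuscation framework, in which $U$ generates $X$, which is obfuscated into $Y$, gives the Markov chain $U \rightarrow Y \rightarrow \bmS$. This chain is exactly the setting in which data processing guarantees $I(U; \bmS) \leq I(U; Y)$, and it also dictates the equality condition.

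First I would expand the joint mutual information $I(U; Y, \bmS)$ in two ways using the chain rule for mutual information \cite{elements}:
\begin{align}
I(U; Y, \bmS) = I(U; Y) + I(U; \bmS \mid Y) = I(U; \bmS) + I(U; Y \mid \bmS).
\label{eq:chain_rule_plan}
\end{align}
Since $\bmS$ is a deterministic function of $Y$, conditioning on $Y$ removes all uncertainty in $\bmS$, so $I(U; \bmS \mid Y) \leq H(\bmS \mid Y) = 0$ and hence $I(U; \bmS \mid Y) = 0$. Substituting this into \eqref{eq:chain_rule_plan} yields
\begin{align}
I(U; Y) = I(U; \bmS) + I(U; Y \mid \bmS).
\label{eq:reduced_plan}
\end{align}
Because $I(U; Y \mid \bmS) \geq 0$, the inequality $I(U; \bmS) \leq I(U; Y)$ follows at once.

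For the equality characterization, I would read it straight off \eqref{eq:reduced_plan}: equality holds if and only if $I(U; Y \mid \bmS) = 0$, i.e., $U$ and $Y$ are conditionally independent given $\bmS$. This is precisely the statement that $U \rightarrow \bmS \rightarrow Y$ is a Markov chain, equivalently that $\bmS$ is a sufficient statistic for $U$ relative to $Y$, which matches the claim. I do not anticipate a serious obstacle here, since the argument is the standard data processing inequality together with chain-rule bookkeeping; the only point requiring care is justifying that $\bmS$ is a function of $Y$ (so that $I(U; \bmS \mid Y) = 0$) and phrasing the equality condition as the sufficiency/Markov condition rather than as an opaque vanishing of a conditional mutual information.
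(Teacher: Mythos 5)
Your proposal is correct and rests on the same core argument as the paper's proof: establish the Markov chain $U \rightarrow Y \rightarrow \bmS$ and then apply the data processing inequality, whose equality condition is exactly $U \rightarrow \bmS \rightarrow Y$. The only difference is cosmetic — you justify the chain by noting that with $r_1,\cdots,r_n$ fixed, $\bmS$ is a deterministic function of $Y$ (so $I(U;\bmS\mid Y)=0$) and you re-derive the data processing inequality via the chain rule, whereas the paper treats the auxiliary information $\bmR$ as a random variable independent of $U$ and $Y$ and invokes a graphical-model blocking argument before citing the inequality as a black box; both routes are valid and yield the identical conclusion.
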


\begin{proof}

\begin{figure}
\centering
\includegraphics[width=0.25\linewidth]{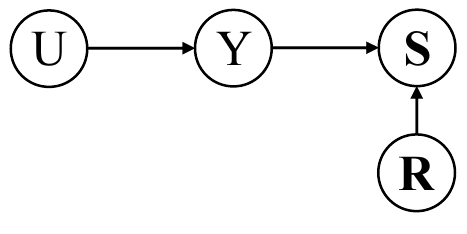}
\vspace{-2mm}
\caption{Graphical model of user $U$, obfuscated data $Y$, auxiliary information $\bmR$, and score vector $\bmS$. 
}
\label{fig:graphical}
\end{figure}

Let $\bmR$ be a random variable representing the auxiliary information $(r_1, \cdots, r_n)$. 
By Figures~\ref{fig:framework_obf} and \ref{fig:framework_ide}, 
$U$, $Y$, $\bmR$, and $\bmS$ are represented as a graphical model in Figure~\ref{fig:graphical}. 
Note that 
$U$ (unknown user ID) is generated from 
$p_U$ 
independently of $\bmR$, $X$ is generated from 
$p_{X|U=u_i}$, 
and then $Y$ is generated using $\bmQ$ (as described in Section~\ref{sub:framework}). 
Therefore, $\bmR$ is independent of both $U$ and $Y$. 
$Y$ is called \textit{head-to-tail} with respect to the path from $U$ to $\bmS$ \cite{prml}. 
After we observe $Y$, a path from $U$ to $\bmS$ is blocked. 
Since the head-to-tail node blocks the path, 
$U$ and $\bmS$ are conditionally independent given $Y$ \cite{prml}; i.e., $U$, $Y$, and $\bmS$ form the Markov chain $U \rightarrow Y \rightarrow \bmS$.

Then 
by the data processing inequality \cite{elements}, 
which states that post-processing cannot increase information, 
(\ref{eq:PIE_PSE}) holds with equality if and only if $U \rightarrow \bmS \rightarrow Y$. 
\end{proof}

Proposition~\ref{prop:PIE_PSE} states that the PIE can be lower-bounded by the PSE. 
In addition, \textit{the PSE is equal to the PIE} in some cases. 
For example, let $q_i$ be a distribution of obfuscated data $y \in \calY$ for user $u_i \in \calU$; i.e., $q_i(y)$ is the likelihood that $y$ belongs to $u_i$. 
Then the equality in (\ref{eq:PIE_PSE}) holds 
if 
the adversary uses the likelihood $q_i(y)$ as a similarity for user $u_i$; i.e., $s_i = q_i(y)$. 
(In this case, $q_i(y)$ depends on $u_i$ only through $s_i$. Then by the Fisher-Neyman factorization theorem \cite{Lehmann06}, $U \rightarrow \bmS \rightarrow Y$.) 
In other words, if the adversary knows a generative model of $Y$ given $U=u_i$, she can use it to achieve PSE $=$ PIE; otherwise, the PSE may be smaller than the PIE.

In Section~\ref{sec:theoretical}, we prove upper-bounds on the PIE for the RR and GLH. 
Then in our experiments, we estimate $q_i(y)$ by assuming the Markov chain model \cite{Gambs_JCSS14,Mulder_WPES08,Murakami_TIFS17,Shokri_PETS11}, and used it as $s_i$. 
We show that the PSE for this adversary is close to our upper-bounds on the PIE.

\smallskip
\noindent{\textbf{BSE.}}~~The PSE includes a 
measure 
of identifiability for biometric data as a special case. 
Specifically, the BSE (Biometric System Entropy) is defined in \cite{Takahashi_IMAVIS14} as the mutual information $I(U;\bmS)$ in the case where $Y$ is a biometric feature.
Thus, the BSE is a special case of the PSE in the case where $Y$ in (\ref{eq:PSE}) is a biometric feature. 

Based on this, we compare the identifiability of personal data (e.g., location data, rating history) with that of biometric data (e.g., fingerprint, face) using the PSE in our experiments.

\subsection{Basic Properties of the PIE}
\label{sub:properties}
Here we show several basic properties of the PIE (or PIE privacy). 

\smallskip
\noindent{\textbf{Privacy Axioms.}}~~Kifer and Lin \cite{Kifer_JPC12} propose that any privacy definition should satisfy two privacy axioms: \textit{post-processing invariance} and \textit{convexity}. 
We first show that 
$(\calU,\alpha)$-PIE privacy 
in Definition~\ref{def:PIE-privacy} provides these privacy axioms: 

\begin{theorem}[Post-processing invariance]
\label{thm:post-processing} 
Let 
$n \in \nats$ 
and $\alpha \in \nngreals$. 
Let $\bmQ$ an obfuscation mechanism. 
Let $\lambda$ be a randomized algorithm whose input space contains the output space of $\bmQ$ 
and whose randomness is independent of both the personal data and the randomness in $\bmQ$. 
If 
the 
obfuscation mechanism $\bmQ$ 
provides 
$(\calU,\alpha)$-PIE privacy, 
then $\lambda \circ \bmQ$ provides 
$(\calU,\alpha)$-PIE privacy. 
\end{theorem}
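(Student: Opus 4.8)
The plan is to reduce the statement to a single application of the data processing inequality, which is exactly the tool already used in the proof of Proposition~\ref{prop:PIE_PSE}.

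First I would fix an arbitrary joint distribution $p_{U,X}$, let $Y$ be the obfuscated data produced by $\bmQ$, and let $Z = \lambda(Y)$ denote the output of the composed mechanism $\lambda \circ \bmQ$. By Definition~\ref{def:PIE-privacy}, establishing $(\calU,\alpha)$-PIE privacy of $\lambda \circ \bmQ$ amounts to showing $\sup_{p_{U,X}} I(U;Z) \leq \alpha$, so it suffices to bound $I(U;Z)$ for each fixed $p_{U,X}$.

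Next I would establish the Markov chain $U \rightarrow Y \rightarrow Z$. By the generative process in Section~\ref{sub:framework}, $Y$ depends on $(U,X)$ only through $\bmQ$ applied to $X$; and by hypothesis the randomness of $\lambda$ is independent of both the personal data and the randomness in $\bmQ$, so $Z$ depends on $(U,X,Y)$ only through $Y$. Hence $Z$ is conditionally independent of $U$ given $Y$, which is precisely the chain $U \rightarrow Y \rightarrow Z$. Applying the data processing inequality \cite{elements} to this chain yields $I(U;Z) \leq I(U;Y)$ for the fixed $p_{U,X}$.

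Finally, since this inequality holds for every $p_{U,X}$, I would take the supremum on both sides and invoke the hypothesis that $\bmQ$ provides $(\calU,\alpha)$-PIE privacy to conclude
\begin{align*}
\sup_{p_{U,X}} I(U;Z) \leq \sup_{p_{U,X}} I(U;Y) \leq \alpha,
\end{align*}
which is exactly $(\calU,\alpha)$-PIE privacy for $\lambda \circ \bmQ$. I do not expect a substantial obstacle: the whole argument is the standard post-processing property of mutual information. The only step requiring care is the verification of the Markov chain $U \rightarrow Y \rightarrow Z$ — that $\lambda$'s independent randomness genuinely makes $Z$ depend on the user solely through $Y$ — after which the data processing inequality and the passage to the supremum are immediate.
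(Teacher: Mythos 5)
Your proposal is correct and matches the paper's own proof: the paper likewise forms the Markov chain $U \rightarrow X \rightarrow Y \rightarrow Y'$ and applies the data processing inequality to get $I(U;Y') \leq I(U;Y) \leq \alpha$. Your version is slightly more careful in spelling out why $\lambda$'s independent randomness yields the chain and in making the supremum over $p_{U,X}$ explicit, but the argument is the same.
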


\begin{theorem}[Convexity]
\label{thm:convexity} 
Let 
$n \in \nats$ 
and $\alpha \in \nngreals$. 
Let $\bmQ_1$ and $\bmQ_2$ be obfuscation mechanisms that provide 
$(\calU,\alpha)$-PIE privacy. 
For any $w\in[0,1]$, let $\bmQ_w$ an obfuscation mechanism that executes $\bmQ_1$ with probability $w$ and $\bmQ_2$ with probability $1-w$. 
Then $\bmQ_w$ provides 
$(\calU,\alpha)$-PIE privacy. 
\end{theorem}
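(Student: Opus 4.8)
The plan is to fix an arbitrary joint distribution $p_{U,X}$, bound $I(U;Y)$ for the output $Y$ of $\bmQ_w$, and then take the supremum over $p_{U,X}$. First I would observe that, since the internal coin flip of $\bmQ_w$ is not revealed to the adversary, the mechanism is simply the mixture channel $\bmQ_w(y|x) = w\,\bmQ_1(y|x) + (1-w)\,\bmQ_2(y|x)$. With $p_{U,X}$ fixed, each conditional $p_{X|U=u_i}$ is fixed, so the induced channel from $U$ to $Y$ is
\begin{align*}
p_w(y \mid u_i) = \sum_{x} p_{X|U=u_i}(x)\,\bmQ_w(y|x) = w\,p_1(y\mid u_i) + (1-w)\,p_2(y\mid u_i),
\end{align*}
where $p_1$ and $p_2$ are the $U$-to-$Y$ channels obtained from this same $p_{U,X}$ under $\bmQ_1$ and $\bmQ_2$. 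The point is that mixing at the level of $\bmQ$ composes linearly with the fixed $p_{X|U=u_i}$ and hence reappears as the same mixture at the level of the $U$-to-$Y$ channel.

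Next I would invoke the standard fact that, for a fixed input distribution $p_U$, the mutual information $I(U;Y)$ is a convex function of the channel $p(y\mid u)$ \cite{elements}. Applying this to the convex combination above yields
\begin{align*}
I_w(U;Y) \leq w\,I_1(U;Y) + (1-w)\,I_2(U;Y),
\end{align*}
where $I_w$, $I_1$, $I_2$ denote the mutual information for this $p_{U,X}$ under $\bmQ_w$, $\bmQ_1$, and $\bmQ_2$, respectively. Because $\bmQ_1$ and $\bmQ_2$ both provide $(\calU,\alpha)$-PIE privacy, each of $I_1(U;Y)$ and $I_2(U;Y)$ is at most $\sup_{p_{U,X}} I(U;Y) \leq \alpha$, so $I_w(U;Y) \leq w\alpha + (1-w)\alpha = \alpha$. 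Since $p_{U,X}$ was arbitrary, taking the supremum gives $\sup_{p_{U,X}} I_w(U;Y) \leq \alpha$, which is exactly $(\calU,\alpha)$-PIE privacy for $\bmQ_w$.

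The only delicate point is the second step: convexity of mutual information must be applied with the source distribution $p_U$ held fixed while the \emph{channel} varies (it is the channel, not the source, that is being mixed here), which is precisely the regime in which the convexity direction holds. An equivalent self-contained route, should one prefer to avoid citing the convexity result, is to introduce the coin-flip variable $B$ (with $B$ independent of $U$, taking value $1$ with probability $w$): then $I(U;Y) \leq I(U;Y,B) = I(U;B) + I(U;Y\mid B) = I(U;Y\mid B)$, and conditioning on $B$ splits $I(U;Y\mid B)$ into exactly $w\,I_1(U;Y) + (1-w)\,I_2(U;Y) \leq \alpha$, giving the same bound.
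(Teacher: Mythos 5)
Your proposal is correct and follows essentially the same route as the paper: it writes the conditional output distribution of $\bmQ_w$ given $U$ as the mixture $w\,p_{Y_1|U} + (1-w)\,p_{Y_2|U}$, invokes convexity of mutual information in the channel for fixed source distribution (Theorem 2.7.4 in the cited reference), and concludes $I(U;Y_w) \leq w\alpha + (1-w)\alpha = \alpha$ for every $p_{U,X}$. Your added caution about the direction of convexity, and the alternative coin-flip argument via $I(U;Y) \leq I(U;Y,B) = I(U;Y\mid B)$, are both sound but do not change the substance.
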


The proofs are given in Appendix~\ref{sec:proof_sec4}. 
Theorem~\ref{thm:post-processing} 
guarantees that post-processing, which can be performed by a data collector or an adversary, cannot break 
$(\calU,\alpha)$-PIE privacy. 
Theorem~\ref{thm:convexity} 
allows users to randomly choose which obfuscation mechanism to use. 

\smallskip
\noindent{\textbf{Identification Error.}}~~We can use PIE privacy to lower-bound the lowest possible re-identification error probability by specifying the prior distribution $p_U$. 
Specifically, assume an adversary who 
has a prior distribution $p_U$ and 
attempts to identify a user $U$ based on a score vector $\bmS$. 
Let $\beta_U\in[0,1]$ and $\beta_{U|\bmS}\in[0,1]$ be the following probabilities:
\begin{align*}
\beta_U &= 1 - \max_{u_i\in\calU} \Pr(U=u_i) \nonumber\\
\beta_{U|\bmS} &= 1 - \bbE_\bmS \left[\max_{u_i\in\calU} \Pr(U=u_i|\bmS) \right],
\end{align*}
where for $a\in\reals$, $\bbE_\bmS[a]$ represents the expectation of $a$ over $\bmS$. 
$\beta_U$ (resp.~$\beta_{U|\bmS}$) is the \textit{Bayes error probability} before (resp.~after) observing $\bmS$. 
In other words, $\beta_U$ and $\beta_{U|\bmS}$ are the lowest possible identification error probabilities for any classifier. 
Then $\beta_{U|\bmS}$ is lower-bounded by 
the PSE and PIE: 

\begin{proposition}[Bayes error, PSE, and PIE]
\label{prop:identification_error} 
\begin{align}
\beta_{U|\bmS} &\geq 1 + \frac{I(U;\bmS)+1}{\log(1-\beta_U)} \geq 1 + \frac{I(U;Y)+1}{\log(1-\beta_U)}.
\label{eq:Fano_general}
\end{align}
If $U$ is uniformly distributed, then 
\begin{align}
\beta_{U|\bmS} &\geq 1 - \frac{I(U;\bmS)+1}{\log n} \geq 1 - \frac{I(U;Y)+1}{\log n}.
\label{eq:Fano_uniform}
\end{align}
\end{proposition}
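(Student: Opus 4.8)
The plan is to establish the left-hand (PSE) inequality from scratch as a min-entropy Fano-type bound, and then obtain the right-hand (PIE) inequality for free from Proposition~\ref{prop:PIE_PSE}. Throughout I would write $m = 1-\beta_U = \max_{u_i}\Pr(U=u_i)$ and $P_c = 1-\beta_{U|\bmS} = \bbE_\bmS[\max_{u_i}\Pr(U=u_i|\bmS)]$, and note that $\log(1-\beta_U)=\log m \le 0$ (assuming $U$ is not almost surely constant, so that $m<1$ and the statement is non-vacuous). After clearing the non-positive denominator, the first inequality is equivalent to the clean statement $P_c\cdot(-\log m)\le I(U;\bmS)+1$, and this is the inequality I would actually prove.

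To prove $P_c(-\log m)\le I(U;\bmS)+1$, I would first decompose the mutual information by conditioning on the score vector, writing $I(U;\bmS)=\sum_{\bms}\Pr(\bmS=\bms)\,D(P_{U\mid\bms}\,\|\,P_U)$, where $P_{U\mid\bms}$ is the posterior and $P_U$ the prior on users. For each fixed $\bms$ let $u^*_\bms$ be the posterior mode and $m_\bms=\Pr(U=u^*_\bms\mid\bms)$, so that $\bbE_\bmS[m_\bmS]=P_c$. The key step is a per-realization lower bound on the KL divergence obtained by coarsening the user alphabet to the binary event ``$U=u^*_\bms$ versus not'': by the data-processing inequality for relative entropy \cite{elements}, $D(P_{U\mid\bms}\|P_U)\ge d\big(m_\bms\,\|\,\Pr(U=u^*_\bms)\big)$, where $d(\cdot\|\cdot)$ denotes the binary KL divergence. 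Bounding the prior mode probability by $\Pr(U=u^*_\bms)\le m$ in the first summand and lower-bounding the second summand by $(1-m_\bms)\log(1-m_\bms)$ then yields $D(P_{U\mid\bms}\|P_U)\ge m_\bms(-\log m) - H_b(m_\bms)$, where $H_b$ is the binary entropy.

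Averaging this bound over $\bms$ and using $\bbE_\bmS[m_\bmS]=P_c$ together with $H_b(\cdot)\le 1$ gives $I(U;\bmS)\ge P_c(-\log m)-1$, which is exactly the rearranged first inequality. The uniform display is then just the specialization $m=1/n$, i.e.\ $\log(1-\beta_U)=-\log n$. Finally, the right-hand inequalities follow because $I(U;\bmS)\le I(U;Y)$ by Proposition~\ref{prop:PIE_PSE}: since the common denominator $\log(1-\beta_U)$ is non-positive, replacing $I(U;\bmS)$ by the larger $I(U;Y)$ only decreases the corresponding fraction, and the direction of ``$\ge$'' is preserved.

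The main obstacle is the first (general, non-uniform) inequality. Textbook Fano naturally produces the Shannon entropy $H(U)$ and the crude cardinality term $\log(n-1)$, whereas the statement is phrased in terms of $-\log(1-\beta_U)$ (the min-entropy $H_\infty(U)$) and a plain additive $+1$. Hence the work is not in invoking Fano but in getting the min-entropy to appear: the binary coarsening of the KL divergence, the bound $\Pr(U=u^*_\bms)\le m$, and the fact that the leftover binary entropy never exceeds one bit are precisely what convert the generic argument into the stated min-entropy form. The uniform case, where $H(U)=H_\infty(U)=\log n$, is the only regime in which a direct appeal to standard Fano would already suffice.
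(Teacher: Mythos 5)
Your proof is correct, but it takes a genuinely different route from the paper. The paper disposes of the first inequality in one line by citing it as the generalized Fano inequality of Han and Verd\'{u} \cite{Han_TIT94}, and then, exactly as you do, obtains the second inequality from $I(U;\bmS)\le I(U;Y)$ (Proposition~\ref{prop:PIE_PSE}) together with the sign of $\log(1-\beta_U)$, and the uniform display from $\beta_U=1-\tfrac{1}{n}$. You instead re-derive the Han--Verd\'{u} bound from first principles: the decomposition $I(U;\bmS)=\bbE_{\bms}\bigl[D(P_{U\mid\bms}\Vert P_U)\bigr]$, the binary coarsening of the KL divergence onto the event $\{U=u^*_\bms\}$, the bound $\Pr(U=u^*_\bms)\le m$ on the prior mode, and the one-bit bound on the residual binary entropy all check out, and averaging over $\bms$ gives exactly $P_c(-\log m)\le I(U;\bmS)+1$, which is the correct rearrangement of the stated inequality. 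What your version buys is transparency: it makes explicit why the min-entropy $-\log(1-\beta_U)$ rather than $H(U)$ appears, and where the additive $+1$ comes from (the discarded binary entropy term), whereas the paper's proof is shorter but opaque on both points. Two cosmetic remarks: since $\bms\in\reals^n$, the sum over $\bms$ should formally be an expectation over $\bmS$ (the KL decomposition of mutual information holds in that generality, so nothing breaks), and your caveat that $m<1$ is needed to avoid a zero denominator is one the paper itself silently omits.
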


\begin{proof}
The first inequality in (\ref{eq:Fano_general}) is the generalized Fano's inequality by Han and Verd\'{u} \cite{Han_TIT94}. 
By (\ref{eq:PIE_PSE}), the second inequality in (\ref{eq:Fano_general}) holds (note that $\log(1-\beta_U)$ is negative). 
If 
$U$ is uniform, 
then $\beta_U=1-\frac{1}{n}$. 
Therefore, 
(\ref{eq:Fano_uniform}) holds.
\end{proof}

\begin{corollary}[Bayes error and PIE privacy]
\label{cor:identification_error_cor} 
Let 
$\calU \subseteq \Omega$ 
and $\alpha \in \nngreals$. 
If an obfuscation mechanism $\bmQ$ provides 
$(\calU,\alpha)$-PIE privacy, 
then 
\begin{align}
\beta_{U|\bmS} \geq 1 - \frac{\alpha+1}{\log(1-\beta_U)}.
\label{eq:Fano_general_cor}
\end{align}
If $U$ is uniformly distributed, then 
\begin{align}
\beta_{U|\bmS} \geq 1 - \frac{\alpha+1}{\log n}.
\label{eq:Fano_uniform_cor}
\end{align}
\end{corollary}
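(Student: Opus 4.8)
The plan is to derive the corollary directly from Proposition~\ref{prop:identification_error} by replacing the mutual information $I(U;Y)$ with the bound $\alpha$ guaranteed by PIE privacy. First I would note that fixing any prior $p_U$ (which in turn fixes $\beta_U$) together with the conditionals $p_{X|U=u_i}$ induces a particular joint distribution $p_{U,X}$. Definition~\ref{def:PIE-privacy} states that $(\calU,\alpha)$-PIE privacy bounds $\sup_{p_{U,X}} I(U;Y) \le \alpha$; in particular $I(U;Y) \le \alpha$ for this specific $p_{U,X}$. Hence the hypothesis supplies exactly the scalar inequality $I(U;Y)\le\alpha$ that I want to feed into the Fano-type bound, and no further assumption on the adversary's background knowledge is needed.

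Next I would invoke the second inequality of (\ref{eq:Fano_general}), $\beta_{U|\bmS} \ge 1 + \frac{I(U;Y)+1}{\log(1-\beta_U)}$, and substitute $I(U;Y)\le\alpha$. The only delicate point, and the sole real obstacle, is the sign of the denominator: since $\beta_U\in[0,1)$ we have $0 < 1-\beta_U \le 1$ and therefore $\log(1-\beta_U)\le 0$. Dividing the numerator bound $I(U;Y)+1 \le \alpha+1$ by this nonpositive quantity reverses the inequality, so $\frac{I(U;Y)+1}{\log(1-\beta_U)} \ge \frac{\alpha+1}{\log(1-\beta_U)}$; adding $1$ to both sides yields the claimed lower bound in (\ref{eq:Fano_general_cor}). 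I would flag the sign reversal explicitly, since it is the one place where a careless monotone substitution would point the inequality the wrong way.

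Finally, for the uniform case I would specialize rather than re-derive. When $U$ is uniform over a set of size $n$ we have $\beta_U = 1-\tfrac1n$, so $1-\beta_U=\tfrac1n$ and $\log(1-\beta_U)=-\log n$; substituting this into the general bound (or, equivalently, plugging $I(U;Y)\le\alpha$ directly into (\ref{eq:Fano_uniform}), where the denominator $\log n>0$ so no sign reversal occurs) yields $\beta_{U|\bmS}\ge 1 - \frac{\alpha+1}{\log n}$, which is (\ref{eq:Fano_uniform_cor}). I expect the whole argument to be short: all the analytic work lives in Proposition~\ref{prop:identification_error} (the generalized Fano inequality of Han and Verd\'{u} together with the data-processing step), and the corollary contributes only the monotone replacement of $I(U;Y)$ by its PIE-privacy upper bound, with the negativity of $\log(1-\beta_U)$ as the lone bookkeeping hazard.
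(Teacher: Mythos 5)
Your proof is correct and is exactly the paper's (one-line) argument: fix the adversary's $p_{U,X}$, use $(\calU,\alpha)$-PIE privacy to get $I(U;Y)\le\alpha$, and substitute into Proposition~\ref{prop:identification_error}, minding that $\log(1-\beta_U)\le 0$ reverses the inequality. One remark: your substitution actually yields $\beta_{U|\bmS}\ge 1+\frac{\alpha+1}{\log(1-\beta_U)}$, consistent with the sign convention of (\ref{eq:Fano_general}) and reducing to (\ref{eq:Fano_uniform_cor}) when $U$ is uniform; the minus sign printed in (\ref{eq:Fano_general_cor}) appears to be a typo, since $1-\frac{\alpha+1}{\log(1-\beta_U)}$ would exceed $1$ for $\beta_U\in(0,1)$.
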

Corollary~\ref{cor:identification_error_cor} is immediately derived from Definition~\ref{def:PIE-privacy} and Proposition~\ref{prop:identification_error}. 
Corollary~\ref{cor:identification_error_cor} means that 
$(\calU,\alpha)$-PIE privacy 
provides an \textit{absolute} guarantee about the posterior error probability as a function of $\alpha$ and the prior error probability. 
Given a required identification error probability, we can use 
Corollary~\ref{cor:identification_error_cor} to derive $\alpha$ 
that satisfies the requirement. 

For example, assume that 
there are 
$n=10^6$ 
users in a certain application 
and each user sends a single personal datum. 
In this case, we can assume that $p_U$ is uniform, as described in Section~\ref{sub:framework}. 
If we require $\beta_{U|\bmS} > 0.8$ (resp.~$0.5$), then we should set $\alpha<2.07$ (resp.~$6.21$). 
In our experiments, we also evaluate how tight the lower-bounds in Proposition~\ref{prop:identification_error} (i.e., (\ref{eq:Fano_general}) and (\ref{eq:Fano_uniform})) are.

\smallskip
\noindent{\textbf{Remark.}}~~As described in Section~\ref{sub:PIE}, PIE privacy is a privacy metric that does not depend on the adversary's background knowledge. 
On the other hand, (\ref{eq:Fano_general_cor}) in Corollary~\ref{cor:identification_error_cor} depends on the prior Bayes error probability $\beta_U$, hence depends on the prior distribution $p_U$ of the adversary. 

Nevertheless, we emphasize that Corollary~\ref{cor:identification_error_cor} is useful because we can make a reasonable assumption about the prior distribution $p_U$ in many practical applications; e.g., if each user sends a single personal datum, then the prior distribution $p_U$ is uniform, as described in Section~\ref{sub:framework}.

\smallskip
\noindent{\textbf{PIE and Obfuscation Mechanism.}}~~Finally, we show the relation between the PIE and the obfuscation mechanism $\bmQ$. 
The PIE is a 
measure 
of leakage between $U$ and $Y$, whereas $\bmQ$ transforms $X$ into $Y$. 
The following proposition provides a simple guideline on when to use $\bmQ$ to prevent re-identification:

\begin{proposition}[PIE and obfuscation mechanism]
\label{prop:PIE_obf} 
\begin{align}
I(U;Y) \leq \min\{I(U;X), I(X;Y)\}.
\label{eq:PIE_obf}
\end{align}
\end{proposition}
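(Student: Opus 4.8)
The plan is to reduce everything to a single structural fact: the generative process described for the obfuscation framework in Figure~\ref{fig:framework_obf} forces $U$, $X$, and $Y$ to form the Markov chain $U \rightarrow X \rightarrow Y$. Indeed, as laid out in Section~\ref{sub:framework}, the user $U$ is drawn from $p_U$, the personal data $X$ is then generated from $p_{X|U=u_i}$ (so $X$ depends on the world only through $U$), and finally $Y$ is produced by applying the obfuscation mechanism $\bmQ$ to $X$, i.e. $Y$ is drawn according to $\bmQ(y|x)$ and hence depends on everything only through $X$. Consequently $Y$ is conditionally independent of $U$ given $X$, which is exactly the statement that $U \rightarrow X \rightarrow Y$ is a Markov chain. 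This mirrors the graphical-model argument already used in the proof of Proposition~\ref{prop:PIE_PSE}, so I would establish it in one or two sentences rather than re-deriving conditional independence from scratch.

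Once the Markov chain is in hand, the first bound $I(U;Y) \leq I(U;X)$ is immediate from the data processing inequality \cite{elements}: post-processing $X$ into $Y$ cannot increase the information that the observation carries about $U$. This is the same tool invoked at the end of the proof of Proposition~\ref{prop:PIE_PSE}, so I would simply cite it.

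For the second bound $I(U;Y) \leq I(X;Y)$ I would use the chain rule for mutual information in two ways. Expanding $I(U,X;Y)$ gives
\begin{align*}
I(U,X;Y) = I(X;Y) + I(U;Y|X) = I(U;Y) + I(X;Y|U).
\end{align*}
The Markov property $U \rightarrow X \rightarrow Y$ makes $I(U;Y|X) = 0$, so $I(U,X;Y) = I(X;Y)$, while $I(X;Y|U) \geq 0$ yields $I(U,X;Y) \geq I(U;Y)$. Combining the two gives $I(X;Y) \geq I(U;Y)$, and together with the first bound we obtain $I(U;Y) \leq \min\{I(U;X), I(X;Y)\}$.

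I do not expect a genuine obstacle here: the entire content is the Markov chain together with two standard information-theoretic inequalities, both already employed elsewhere in the paper. The only point that deserves care is the justification of the conditional independence $Y \perp U \mid X$, since it relies on reading the generative model correctly from Figure~\ref{fig:framework_obf}; I would state it explicitly so the reader is not forced to reconstruct it. Everything after that is routine application of the data processing inequality and the chain rule.
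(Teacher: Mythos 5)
Your proposal is correct and follows essentially the same route as the paper: the paper's proof simply observes that $U \rightarrow X \rightarrow Y$ is a Markov chain and invokes the data processing inequality for both bounds. Your chain-rule derivation of $I(U;Y) \leq I(X;Y)$ is just the textbook proof of the data processing inequality written out explicitly, so it adds detail but not a different idea.
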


\begin{proof}
$U$, $X$, and $Y$ form the Markov chain $U \rightarrow X \rightarrow Y$. 
Then 
by the data processing inequality \cite{elements}, (\ref{eq:PIE_obf}) holds. 
\end{proof}

Proposition~\ref{prop:PIE_obf} states that 
we do not need to use $\bmQ$ when $I(U;X)$ is small. 
For example, 
assume that there are $n=10^8$ users in 
a certain application. 
Each user sends a single (possibly obfuscated) personal datum; i.e., $U$ is uniform for the adversary. 
Here the personal datum is 
his/her income (dollars) discretized as in \cite{USCensus}: $0$, $[1,14999]$, $[15000,29999]$, $[30000,60000]$, or $[60000,\infty)$ (i.e., five categories). 
The data collector \allowbreak pseudonymizes (randomly permutates) the 
personal or obfuscated data of all users, as described in 
Section~\ref{sec:intro}. 
Consider the risk of re-identification. 

In this example, even if $\bmQ$ is not used (i.e., $Y=X$), 
$I(U;Y)=I(U;X) \leq H(X) \leq \log 5$. 
Then by (\ref{eq:Fano_uniform}), $\beta_{U|\bmS} \geq 0.88$. 
This means that \textit{pseudonymization (random permutation) alone} achieves 
$(\calU,\log 5)$-PIE privacy 
and 
fairly prevents 
the risk of re-identification in this case. 
More generally, 
pseudonymization alone prevents re-identification when 
$|\calX|$ is small. 
Although we assume that $U$ is uniform in the above example, we can also make a similar argument even when $U$ is non-uniform. 
For example, if $n=10^8$, $|\calX|=5$, and the maximum of $\Pr(U=u_i)$ over $\calU$ is $0.01$ (i.e., $10^6$ times larger than the average), then pseudonymization alone achieves $\beta_{U|\bmS} \geq 0.5$ (by (\ref{eq:Fano_general})).

This example illustrates the difference between PIE privacy (user privacy) and LDP (data privacy). 
Recent studies \cite{Balle_arXiv19,Erlingsson_SODA19} have shown that 
$\epsilon$ in LDP can be significantly reduced by introducing a server (shuffler) that randomly permutes all obfuscated data. 
However, when each user does not obfuscate her personal data (i.e., when $\epsilon = \infty$ at the client side), then this technique does not provide 
DP 
($\epsilon$ is still $\infty$ after the random permutation in \cite{Balle_arXiv19,Erlingsson_SODA19}). 
It follows from \cite{Cuff_CCS16} that when $\epsilon$ in DP is $\infty$, $\epsilon$ in its average versions (i.e., KL-DP \cite{Barber_arXiv14,Cuff_CCS16}, mutual information DP \cite{Cuff_CCS16}) is also $\infty$. 
On the other hand, our PIE privacy guarantees a high re-identification error probability even in this case, given that $|\calX|$ is small.

Proposition~\ref{prop:PIE_obf} shows that 
when $|\calX|$ is large (and so is $I(U;X)$), we should use $\bmQ$ to prevent re-identification. 
In Section~\ref{sec:theoretical}, we show how much the PIE can be reduced by using LDP mechanisms $\bmQ$ (e.g., RR, GLH).

\section{Theoretical Analysis}
\label{sec:theoretical}
We now investigate the effectiveness of 
$\epsilon$-LDP 
mechanisms $\bmQ$ in terms of 
$(\calU,\alpha)$-PIE privacy 
and utility. 
Section~\ref{sub:LDP_PIE} shows a general bound on $\alpha$, which holds for any $\epsilon$-LDP mechanism. 
Sections~\ref{sub:RR_PIE} and \ref{sub:GLH_PIE} show that much tighter bounds can be obtained for the $\epsilon$-RR and $(g,\epsilon)$-GLH, respectively. 
Section~\ref{sub:util_anal} 
analyzes the utility of the $\epsilon$-RR and $(g,\epsilon)$-GLH 
for given PIE guarantees. 
Section~\ref{sub:implications} discusses implications for privacy and utility based on our theoretical results. 

The proofs of all statements in this section are given in Appendix~\ref{sec:proof_sec5}.

\subsection{LDP and PIE}
\label{sub:LDP_PIE}
We first show the relation between LDP and PIE privacy:

\begin{proposition}[LDP and PIE]
\label{prop:LDP_PIE} 
If an obfuscation mechanism $\bmQ$ provides $\epsilon$-LDP, then it provides 
$(\calU,\alpha)$-PIE privacy 
for any $\calU \subseteq \Omega$ such that $|\calU|=n$, 
where
\begin{align}
\alpha = \min\{\epsilon \log e, \epsilon^2 \log e, \log n, \log |\calX|\}.
\label{eq:alpha_LDP}
\end{align}
\end{proposition}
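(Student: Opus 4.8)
The plan is to prove the four-way minimum in (\ref{eq:alpha_LDP}) by establishing \emph{four separate} upper bounds on $\sup_{p_{U,X}} I(U;Y)$, one for each term inside the minimum; since $\sup_{p_{U,X}} I(U;Y)$ is then at most each of the four quantities, it is at most their minimum, which is exactly what $(\calU,\alpha)$-PIE privacy asks for by Definition~\ref{def:PIE-privacy}. The organizing device throughout will be Proposition~\ref{prop:PIE_obf}, which gives $I(U;Y)\le\min\{I(U;X),I(X;Y)\}$ from the Markov chain $U\to X\to Y$; this lets me reduce each bound to a quantity that does not depend on $p_U$ and can therefore be controlled uniformly over all joint distributions $p_{U,X}$ (and over any $\calU$ with $|\calU|=n$).

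The two purely information-theoretic terms are immediate. For $\log n$, since $U$ takes at most $|\calU|=n$ values, $I(U;Y)\le H(U)\le\log n$ for every $p_{U,X}$. For $\log|\calX|$, Proposition~\ref{prop:PIE_obf} gives $I(U;Y)\le I(X;Y)\le H(X)\le\log|\calX|$, again uniformly; neither of these uses the LDP hypothesis.

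The term $\epsilon\log e$ is where LDP first enters. I would write $I(U;Y)\le I(X;Y)=\bbE_X[D(\bmQ(\cdot|X)\,\|\,p_Y)]$, where $p_Y(y)=\sum_{x}\Pr(X=x)\bmQ(y|x)$ is the output marginal. The key observation is that $\epsilon$-LDP confines the likelihood ratio: since $\bmQ(y|x)\le e^\epsilon\bmQ(y|x')$ for all $x'$, averaging over $x'$ gives $p_Y(y)\ge e^{-\epsilon}\bmQ(y|x)$, so $\bmQ(y|x)/p_Y(y)\le e^\epsilon$ for every $y$. Hence every log-term in $D(\bmQ(\cdot|x)\,\|\,p_Y)$ is at most $\log e^\epsilon=\epsilon\log e$, and averaging against $\bmQ(\cdot|x)$ (whose weights sum to one) yields $D\le\epsilon\log e$ uniformly in $x$ and in $p_X$, so $I(X;Y)\le\epsilon\log e$.

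The term $\epsilon^2\log e$ is the main obstacle, since the pointwise bound above only gives a linear dependence and a genuinely quadratic refinement is needed. The same LDP inequality also gives the lower side $p_Y(y)\le e^\epsilon\bmQ(y|x)$, so $\bmQ(y|x)/p_Y(y)\in[e^{-\epsilon},e^\epsilon]$ and $|\log(\bmQ(y|x)/p_Y(y))|\le\epsilon\log e$. I would bound $D(\bmQ(\cdot|x)\,\|\,p_Y)$ via the symmetrized divergence: with $P=\bmQ(\cdot|x)$ and $Q=p_Y$, $D(P\|Q)\le D(P\|Q)+D(Q\|P)=\sum_y(P(y)-Q(y))\log\tfrac{P(y)}{Q(y)}$, where each summand is nonnegative and equals $|P(y)-Q(y)|\cdot|\log\tfrac{P(y)}{Q(y)}|\le|P(y)-Q(y)|\,\epsilon\log e$, so the whole sum is at most $2\epsilon\log e$ times the total variation distance. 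The crux is then the sharp bound $\mathrm{TV}(P,Q)\le\tanh(\epsilon/2)$, which follows by using $P(y)\le e^\epsilon Q(y)$ on the set where $P\ge Q$ and $P(y)\ge e^{-\epsilon}Q(y)$ on its complement and optimizing over the mass of that set. Combining with the elementary inequality $\tanh(\epsilon/2)\le\epsilon/2$ gives $D\le 2\epsilon\log e\cdot\tanh(\epsilon/2)\le\epsilon^2\log e$, hence $I(U;Y)\le I(X;Y)\le\epsilon^2\log e$ uniformly in $p_{U,X}$. I expect this last total-variation estimate (together with $\tanh x\le x$) to be the only delicate step; the other three bounds are essentially direct consequences of Proposition~\ref{prop:PIE_obf} and the LDP likelihood-ratio inequality.
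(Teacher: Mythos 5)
Your proof is correct, and its skeleton coincides with the paper's: reduce to the Markov chain $U \rightarrow X \rightarrow Y$ via Proposition~\ref{prop:PIE_obf}, use $I(U;Y)\le H(U)\le\log n$ and $I(U;Y)\le H(X)\le\log|\calX|$ for the two entropy terms, and then establish $I(X;Y)\le\min\{\epsilon\log e,\epsilon^2\log e\}$ for any $\epsilon$-LDP channel. The genuine difference is in that last step: the paper invokes Lemma~1 of Cuff and Yu as a black box, whereas you re-derive it, and your derivation checks out. Averaging the LDP inequality over $x'$ against $p_X$ does give the two-sided confinement $e^{-\epsilon}\le\bmQ(y|x)/p_Y(y)\le e^\epsilon$, from which the linear bound is immediate; for the quadratic bound, the chain $D(P\|Q)\le D(P\|Q)+D(Q\|P)=\sum_y(P(y)-Q(y))\log\frac{P(y)}{Q(y)}\le 2\epsilon\log e\cdot\mathrm{TV}(P,Q)$ is valid because each summand of the symmetrized divergence is nonnegative, and the total-variation estimate is indeed the sharp consequence of the ratio constraint: with $A=\{y:P(y)\ge Q(y)\}$, $p=P(A)$, $q=Q(A)$, the constraints $p\le e^\epsilon q$ and $1-q\le e^\epsilon(1-p)$ give $p-q\le\min\{p(1-e^{-\epsilon}),(e^\epsilon-1)(1-p)\}$, whose maximum over $p$ equals $(e^\epsilon+e^{-\epsilon}-2)/(e^\epsilon-e^{-\epsilon})=\tanh(\epsilon/2)$, and $\tanh(\epsilon/2)\le\epsilon/2$ closes the argument with $D(P\|Q)\le\epsilon^2\log e$ uniformly in $x$ and $p_X$. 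What your route buys is a self-contained proof and, implicitly, the slightly sharper intermediate bound $2\epsilon\log e\cdot\tanh(\epsilon/2)$; what the paper's route buys is brevity. The only cosmetic divergence elsewhere is that you obtain the $\log|\calX|$ term via $I(X;Y)\le H(X)$ while the paper uses $I(U;X)\le H(X)$ --- both are equally valid.
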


Proposition~\ref{prop:LDP_PIE} is 
derived from Proposition~\ref{prop:PIE_obf}. 
Specifically, 
we use Lemma~1 in \cite{Cuff_CCS16}, which states that $\bmQ$ satisfies $I(X;Y) \leq \min\{\epsilon \log e,\epsilon^2 \log e\}$ (bits), 
and the fact that 
$I(U;X) \leq \min\{\log n, \log |\calX|\}$. See Appendix~\ref{sec:proof_sec5} for details.

Proposition~\ref{prop:LDP_PIE} means that 
$(\calU,\alpha)$-PIE privacy 
is a relaxation of $\epsilon$-LDP; i.e., any LDP mechanism provides PIE privacy. 
This is a general result 
that holds for any LDP mechanism. 
However, the bound in (\ref{eq:alpha_LDP}) is loose, as we will show below. 

\subsection{PIE of the RR}
\label{sub:RR_PIE}
If we consider a specific LDP mechanism, a much tighter bound on the PIE can be obtained. 
We first show such a bound for the $\epsilon$-RR in Section~\ref{sub:obf}. 

\begin{theorem}[PIE of the RR]
\label{thm:PIE_RR} 
Let $\theta_{RR} = \frac{e^\epsilon-1}{|\calX|+e^\epsilon-1}$. 
For any distribution 
$p_{U,X}$, 
the $\epsilon$-RR satisfies
\begin{align}
I(U;Y) \leq \theta_{RR} I(U;X).
\label{eq:RR_IUY_IUX}
\end{align}
Therefore, the $\epsilon$-RR provides 
$(\calU,\alpha)$-PIE privacy for any $\calU \subseteq \Omega$ such that $|\calU|=n$, 
where
\begin{align}
\alpha = \theta_{RR} \min\{\log n, \log |\calX|\}.
\label{eq:PIE_RR}
\end{align}
\end{theorem}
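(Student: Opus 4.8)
The plan is to prove the contraction inequality (\ref{eq:RR_IUY_IUX}) first and then read off the $(\calU,\alpha)$-PIE privacy bound by maximizing $I(U;X)$. The crucial observation is that the $\epsilon$-RR acts as an affine contraction toward the uniform distribution. Writing $\bmQRR(y|x) = \frac{1}{|\calX|+e^\epsilon-1} + \theta_{RR}\mathbf{1}[y=x]$ and using the identity $\frac{1-\theta_{RR}}{|\calX|} = \frac{1}{|\calX|+e^\epsilon-1}$, I would compute the conditional output law $p_{Y|U=u_i}(y)=\sum_x p_{X|U=u_i}(x)\bmQRR(y|x)$ and obtain $p_{Y|U=u_i} = (1-\theta_{RR})\mu + \theta_{RR}\,p_{X|U=u_i}$, where $\mu$ denotes the uniform distribution on $\calX$. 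The identical computation applied to the marginal gives $p_Y = (1-\theta_{RR})\mu + \theta_{RR}\,p_X$. Thus both the conditional and the marginal output distributions are mixtures of the \emph{same} uniform distribution $\mu$ with the corresponding input distribution, each using the \emph{same} mixing weight $\theta_{RR}$.

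With this representation in hand, I would expand $I(U;Y)=\sum_i p_U(u_i)\,D(p_{Y|U=u_i}\,\|\,p_Y)$ and bound every KL term using joint convexity of the KL divergence:
\[
D\big(p_{Y|U=u_i}\,\big\|\,p_Y\big) \le (1-\theta_{RR})\,D(\mu\|\mu) + \theta_{RR}\,D\big(p_{X|U=u_i}\,\big\|\,p_X\big) = \theta_{RR}\,D\big(p_{X|U=u_i}\,\big\|\,p_X\big),
\]
since $D(\mu\|\mu)=0$. Averaging this over $i$ against $p_U$ and recognizing $\sum_i p_U(u_i)\,D(p_{X|U=u_i}\,\|\,p_X)=I(U;X)$ yields (\ref{eq:RR_IUY_IUX}).

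For the second claim, I would take the supremum of (\ref{eq:RR_IUY_IUX}) over all $p_{U,X}$ and use $I(U;X)\le\min\{H(U),H(X)\}\le\min\{\log n,\log|\calX|\}$, which holds because $|\calU|=n$ and entropy is maximized by the uniform distribution (the same bound on $I(U;X)$ underlies Proposition~\ref{prop:LDP_PIE}). This gives $\sup_{p_{U,X}}I(U;Y)\le\theta_{RR}\min\{\log n,\log|\calX|\}$, i.e.\ $(\calU,\alpha)$-PIE privacy with $\alpha$ as in (\ref{eq:PIE_RR}) for every $\calU\subseteq\Omega$ with $|\calU|=n$. The main obstacle is spotting the mixture representation with a common uniform component and a common weight; once that is in place the factor $\theta_{RR}$ falls out of joint convexity immediately. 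A direct attack on $\sum_y\big(b+\theta_{RR}P(y)\big)\log\frac{b+\theta_{RR}P(y)}{b+\theta_{RR}Q(y)}$, with $b=\frac{1}{|\calX|+e^\epsilon-1}$ and $P,Q$ ranging over input distributions, would not directly yield the multiplicative factor $\theta_{RR}$, so the convexity step is what converts the affine contraction of distributions into a contraction of mutual information (equivalently, it certifies that the contraction coefficient of the RR channel is at most $\theta_{RR}$, matching its Dobrushin coefficient).
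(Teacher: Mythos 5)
Your proposal is correct and follows essentially the same route as the paper's proof: the paper likewise writes the RR output law as the mixture $\theta_{RR}\,p_{X|U=u_i} + (1-\theta_{RR})\,p_{\text{uni}}$ (phrased as ``output $x$ with probability $\theta_{RR}$, else a uniform value''), applies convexity of the KL divergence to each term of $I(U;Y)=\sum_i p_U(u_i)D(p_{Y|U=u_i}\|p_Y)$, and concludes with $I(U;X)\le\min\{\log n,\log|\calX|\}$. No gaps.
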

By (\ref{eq:RR_IUY_IUX}), the $\epsilon$-RR can decrease the mutual information between a user and personal data by $\theta_{RR} = \frac{e^\epsilon-1}{|\calX|+e^\epsilon-1}$. 
Thus we refer to $\theta_{RR}$ as the \textit{MI (Mutual Information) loss parameter}. 

\begin{figure}
\centering
\includegraphics[width=0.75\linewidth]{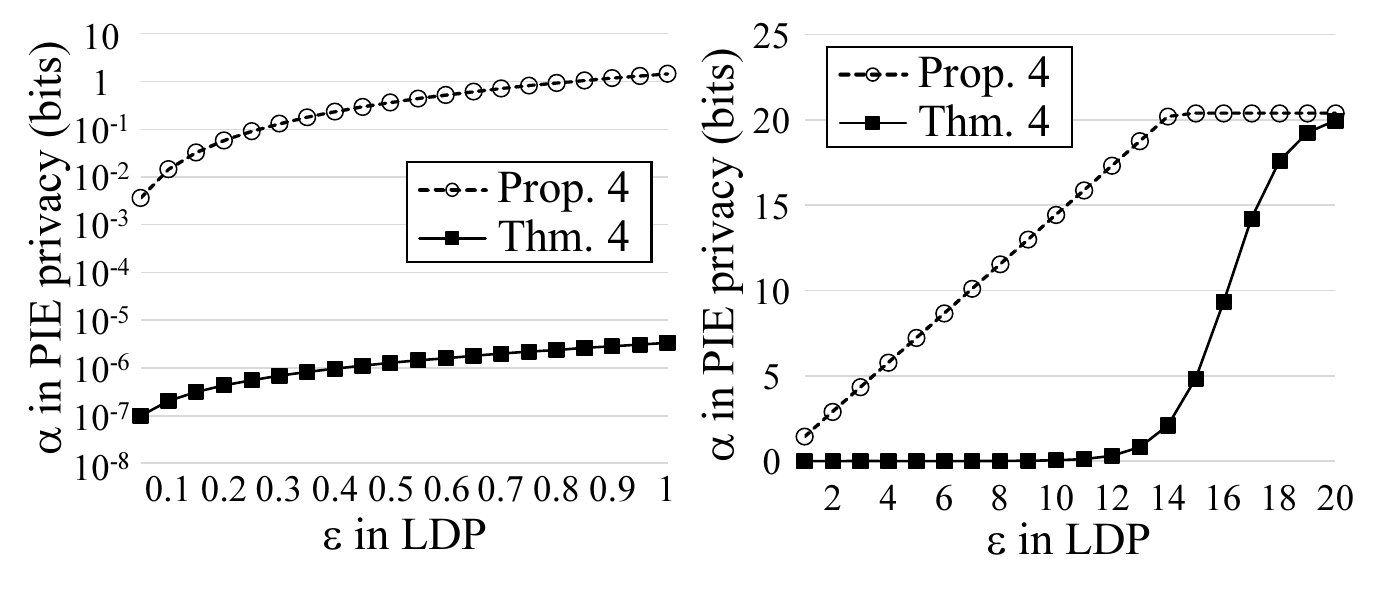}
\vspace{-2mm}
\caption{Relation between $\epsilon$ in LDP and $\alpha$ in PIE privacy ($n=1370637$, $|\calX|=10500393$).}
\label{fig:PIE_RR}
\end{figure}

Figure~\ref{fig:PIE_RR} shows the relation between $\epsilon$ in LDP and $\alpha$ in PIE privacy. 
Here we set $n=1370637$ and $|\calX|=10500393$ because these values were used in our experiments using 
the location data. 
Figure~\ref{fig:PIE_RR} shows that $\alpha$ in Theorem~\ref{thm:PIE_RR} is much smaller than $\alpha$ in Proposition~\ref{prop:LDP_PIE} for both small $\epsilon$ ($0.05$ to $1$) and large $\epsilon$ ($1$ to $20$). 
For example, when $\epsilon=0.1$, $1$, and $10$, $\alpha$ in Proposition~\ref{prop:LDP_PIE} is 
$\alpha=0.014$, $1.4$, and $14$, 
respectively, whereas $\alpha$ in Theorem~\ref{thm:PIE_RR} is 
$\alpha=2.0\times10^{-7}$, $3.3\times10^{-6}$, and $0.043$, 
respectively. 
As 
the input alphabet size $|\calX|$ 
increases, the difference between two $\alpha$ values becomes larger because the MI loss parameter $\theta_{RR}$ decreases with increase in $|\calX|$. 
In other words, $\alpha$ in Theorem~\ref{thm:PIE_RR} is much tighter, especially when $|\calX|$ is large. 
In our experiments, we also show that $\alpha$ in Theorem~\ref{thm:PIE_RR} is close to the PSE for a specific 
identification algorithm. 

We also show that the $\epsilon$-RR has compositionality  \cite{DP} 
in that 
$\alpha$ in (\ref{eq:PIE_RR}) increases linearly for multiple personal data: 

\begin{theorem}[Composition of the RR]
\label{thm:Comp_RR} 
Let 
$\theta_{RR} = \frac{e^\epsilon-1}{|\calX|+e^\epsilon-1}$. 
Assume that a user $U$ has $t\in\nats$ personal data and obfuscates each of the data by using the $\epsilon$-RR. 
For $i\in[t]$, let $X^{(i)}$ and $Y^{(i)}$ be random variables representing the $i$-th personal data and obfuscated data, respectively. 
Then the combined release $Y=(Y^{(1)},\cdots,Y^{(t)})$ 
provides 
$(\calU,t\alpha)$-PIE privacy, 
where $\alpha = \theta_{RR} \min\{\log n, \allowbreak \log |\calX|\}$.
\end{theorem}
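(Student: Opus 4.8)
The plan is to reduce the composed release back to the single-datum bound of Theorem~\ref{thm:PIE_RR} by peeling off one coordinate at a time with the chain rule, exploiting the fact that the $t$ applications of the $\epsilon$-RR are carried out independently, each acting only on its own coordinate $X^{(i)}$ (the $X^{(i)}$ themselves may be arbitrarily correlated through $U$, and the argument must not assume independence among them). Writing $Y^{(<i)} = (Y^{(1)}, \ldots, Y^{(i-1)})$, the chain rule for mutual information gives $I(U; Y) = \sum_{i=1}^t I(U; Y^{(i)} \mid Y^{(<i)})$, so it suffices to bound each conditional term by $\theta_{RR}\min\{\log n, \log|\calX|\}$ and sum. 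Since no constraint is placed on the joint law, establishing this for an arbitrary $p_{U,X}$ and then taking the supremum will yield the claimed $(\calU, t\alpha)$-PIE privacy through Definition~\ref{def:PIE-privacy}.

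First I would fix a value $Y^{(<i)} = y^{(<i)}$ and observe that, conditioned on this event, the triple $(U, X^{(i)}, Y^{(i)})$ still satisfies the hypotheses of Theorem~\ref{thm:PIE_RR}: $Y^{(i)}$ is produced from $X^{(i)}$ by the $\epsilon$-RR and, because the $i$-th obfuscation is independent of the earlier outputs, this channel is unchanged by the conditioning, so that only the joint law of $(U, X^{(i)})$ is replaced by its conditional version $p_{U, X^{(i)} \mid Y^{(<i)} = y^{(<i)}}$. Since Theorem~\ref{thm:PIE_RR} holds for \emph{every} joint distribution on $(U, X)$, applying it to this conditional law gives $I(U; Y^{(i)} \mid Y^{(<i)} = y^{(<i)}) \leq \theta_{RR}\, I(U; X^{(i)} \mid Y^{(<i)} = y^{(<i)})$, and averaging over $Y^{(<i)}$ produces the conditional form $I(U; Y^{(i)} \mid Y^{(<i)}) \leq \theta_{RR}\, I(U; X^{(i)} \mid Y^{(<i)})$. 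Next I would bound each surviving term by the crude estimates $I(U; X^{(i)} \mid Y^{(<i)}) \leq H(U) \leq \log n$ and $I(U; X^{(i)} \mid Y^{(<i)}) \leq H(X^{(i)}) \leq \log|\calX|$, so each is at most $\min\{\log n, \log|\calX|\}$. Summing over $i \in [t]$ gives $I(U; Y) \leq t\,\theta_{RR}\min\{\log n, \log|\calX|\} = t\alpha$, and taking the supremum over $p_{U,X}$ finishes the argument.

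The step I expect to require the most care is the conditional invocation of Theorem~\ref{thm:PIE_RR}: one must verify that conditioning on the earlier outputs $Y^{(<i)}$ leaves the $i$-th channel intact, i.e. that $\Pr(Y^{(i)} = y \mid U, X^{(i)}, Y^{(<i)}) = \bmQRR(y \mid X^{(i)})$, which is exactly the independence-of-obfuscation assumption in the statement and is what makes the per-coordinate MI loss parameter $\theta_{RR}$ reusable inside the conditional expectation. A tempting shortcut, namely treating the combined release as a single RR on the product alphabet $\calX^t$ and applying Theorem~\ref{thm:PIE_RR} once, does not work, because the product mechanism is not the $\epsilon$-RR on $\calX^t$ and its contraction factor differs from $\theta_{RR}$; it is precisely the coordinatewise chain-rule decomposition that converts the single-datum factor $\theta_{RR}$ into the linear-in-$t$ bound $t\alpha$.
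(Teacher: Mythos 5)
Your proposal is correct and follows essentially the same route as the paper's proof: the chain rule $I(U;Y)=\sum_i I(U;Y^{(i)}\mid \bmY^{i-1})$, a per-coordinate contraction by $\theta_{RR}$ valid conditionally on the earlier outputs (the paper establishes this inline by rewriting the RR as a $\theta_{RR}$-mixture with the uniform distribution and invoking convexity of the KL divergence, whereas you obtain it by applying Theorem~\ref{thm:PIE_RR} to the conditional law $p_{U,X^{(i)}\mid \bmY^{i-1}=\bmy^{i-1}}$ — a purely presentational difference, justified by exactly the observation you flag, that the $i$-th channel is unaffected by the conditioning), and the same final bounds $I(U;X^{(i)}\mid\bmY^{i-1})\le\min\{\log n,\log|\calX|\}$. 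Your closing remark about why the product-alphabet shortcut fails also matches the paper's emphasis that the argument must tolerate arbitrary correlations among $X^{(1)},\cdots,X^{(t)}$.
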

Theorem~\ref{thm:Comp_RR} upper-bounds the PIE when a user has multiple personal data and obfuscates each of them using the $\epsilon$-RR. 
Note that Theorem~\ref{thm:Comp_RR} holds 
irrespective of whether there are correlations between $X^{(1)}, \cdots, X^{(t)}$ (see 
Appendix~\ref{sec:proof_sec5} for details).

\subsection{PIE of the GLH}
\label{sub:GLH_PIE}
Next we show a tighter bound on the PIE for the $(g,\epsilon)$-GLH in Section~\ref{sub:obf}:

\begin{theorem}[PIE of the GLH]
\label{thm:PIE_GLH} 
Let $\theta_{GLH} = \frac{e^\epsilon-1}{g+e^\epsilon-1}$. 
For any distribution 
$p_{U,X}$, 
the $(g,\epsilon)$-GLH satisfies
\begin{align}
I(U;Y) \leq \theta_{GLH} I(U;X).
\label{eq:GLH_IUY_IUX}
\end{align}
Therefore, the $(g,\epsilon)$-GLH provides 
$(\calU,\alpha)$-PIE privacy for any $\calU \subseteq \Omega$ such that $|\calU|=n$, 
where
\begin{align}
\alpha = \theta_{GLH} \min\{\log n, \log |\calX|\}.
\label{eq:PIE_GLH}
\end{align}
\end{theorem}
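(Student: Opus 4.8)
The plan is to prove the contraction inequality (\ref{eq:GLH_IUY_IUX}) first, because the second claim then follows exactly as in Theorem~\ref{thm:PIE_RR}: for any $\calU$ with $|\calU|=n$ we have $I(U;X)\le H(U)\le\log n$ and $I(U;X)\le H(X)\le\log|\calX|$, so (\ref{eq:GLH_IUY_IUX}) gives $I(U;Y)\le\theta_{GLH}\min\{\log n,\log|\calX|\}$ for every $p_{U,X}$, which is precisely $(\calU,\alpha)$-PIE privacy with the stated $\alpha$. Hence the entire content sits in the channel-contraction bound, and I would mirror the structure already used for the $\epsilon$-RR.

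The key observation I would exploit is that the randomized-response step inside the GLH admits a mixture (replacement-channel) representation. Writing $Y=(h,Y_2)$ with $h$ drawn uniformly from $\calH$ independently of $(U,X)$, equation (\ref{eq:GLH}) says that, conditioned on $h$, the second coordinate equals $h(X)$ with probability $\frac{e^\epsilon}{g+e^\epsilon-1}$ and is otherwise uniform over the remaining symbols. This is exactly a channel that, with probability $\theta_{GLH}=\frac{e^\epsilon-1}{g+e^\epsilon-1}$, \emph{keeps} $h(X)$, and with probability $1-\theta_{GLH}$ \emph{discards} it and emits a fresh $W$ uniform on $[g]$ independent of $X$. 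So I would introduce a latent Bernoulli switch $B$ with $\Pr(B=1)=\theta_{GLH}$, independent of $U$, $X$, and $h$, with $Y=(h,h(X))$ on $\{B=1\}$ and $Y=(h,W)$ on $\{B=0\}$.

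With this representation the computation is short. Since $B$ is independent of $U$, I get $I(U;Y)\le I(U;Y,B)=I(U;Y\mid B)=\theta_{GLH}\,I(U;Y\mid B=1)+(1-\theta_{GLH})\,I(U;Y\mid B=0)$. On $\{B=0\}$ the output $(h,W)$ is independent of $U$, so that term vanishes. On $\{B=1\}$ the output $(h,h(X))$ is a deterministic function of $(h,X)$, so $U\to(h,X)\to(h,h(X))$ and the data-processing inequality yields $I(U;Y\mid B=1)\le I(U;h,X)=I(U;X)$, the last equality using that $h$ is independent of $(U,X)$. Combining the two branches gives $I(U;Y)\le\theta_{GLH}\,I(U;X)$, i.e.\ (\ref{eq:GLH_IUY_IUX}).

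The delicate points, and where I would be careful, are bookkeeping rather than a genuine obstacle. First, the output literally contains the hash $h$, so I must confirm that $h$ leaks nothing about $U$ and that conditioning on the switch $B$ preserves this independence; this is exactly what makes both $I(U;h,W)=0$ and $I(U;h,X)=I(U;X)$ hold. Second, I would note that the universal-hash property is \emph{not} needed here: the contraction holds for any family $\calH$ and is governed solely by the range size $g$ through $\theta_{GLH}$ (universality only matters for the utility analysis). As an alternative route I could condition on each fixed $h_0$, recognize $Z=h_0(X)\in[g]$ as the input to an $\epsilon$-RR over $g$ symbols, apply Theorem~\ref{thm:PIE_RR} with $|\calX|$ replaced by $g$ to obtain $I(U;Y_2\mid h_0)\le\theta_{GLH}\,I(U;Z\mid h_0)\le\theta_{GLH}\,I(U;X)$, and average over $h_0$; this reuses the RR result as a black box and reaches the same inequality.
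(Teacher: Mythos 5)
Your proof is correct and rests on the same core idea as the paper's: the replacement-channel representation of the randomized-response step (keep $h(X)$ with probability $\theta_{GLH}$, otherwise emit a uniform value independent of $X$), the independence of the hash $h$ from $(U,X)$, and a final application of the data processing inequality to pass from $(h,h(X))$ to $X$. Your latent Bernoulli switch $B$ together with $I(U;Y)\leq I(U;Y,B)=I(U;Y\mid B)$ is an equivalent repackaging of the paper's appeal to the joint convexity of the KL divergence applied to the mixture $\theta_{GLH}\,p_{Z|H=h,U=u_i}+(1-\theta_{GLH})\,q_{\text{uni}}$; your closing remark that one could instead condition on each $h_0$ and invoke Theorem~\ref{thm:PIE_RR} with alphabet size $g$ is also sound and matches the paper's observation that the GLH proof is analogous to the RR proof.
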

Theorem~\ref{thm:PIE_GLH} can be proved in an analogous way to Theorem~\ref{thm:PIE_RR} because the $(g,\epsilon)$-GLH uses the RR after applying random projection to $x\in\calX$, as shown in (\ref{eq:GLH}). 

By (\ref{eq:GLH_IUY_IUX}), the $(g,\epsilon)$-GLH can decrease the mutual information between a user and personal data by $\theta_{GLH}$. 
As with $\theta_{RR}$, we call $\theta_{GLH}$ the \textit{MI loss parameter}. 
$\alpha$ in Theorem~\ref{thm:PIE_GLH} is much tighter than $\alpha$ in Proposition~\ref{prop:LDP_PIE}, especially when $g$ is large. 

As with the $\epsilon$-RR, the $(g,\epsilon)$-GLH also composes (irrespective of whether there are correlations between $X^{(1)}, \cdots, X^{(t)}$): 
\begin{theorem}[Composition of the GLH]
\label{thm:Comp_GLH} 
Let $\theta_{GLH} = \frac{e^\epsilon-1}{g+e^\epsilon-1}$. 
Assume that a user $U$ has $t\in\nats$ personal data and obfuscates each of the data by using the $(g,\epsilon)$-GLH. 
For $i\in[t]$, let $X^{(i)}$ and $Y^{(i)}$ be random variables representing the $i$-th personal data and obfuscated data, respectively. 
Then the combined release $Y=(Y^{(1)},\cdots,Y^{(t)})$ provides 
$(\calU,t\alpha)$-PIE privacy, 
where $\alpha = \theta_{GLH} \min\{\log n, \allowbreak \log |\calX|\}$.
\end{theorem}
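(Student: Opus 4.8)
The plan is to reduce the $t$-fold release to $t$ applications of the single-datum contraction inequality (\ref{eq:GLH_IUY_IUX}) of Theorem~\ref{thm:PIE_GLH}, glued together by the chain rule for mutual information. Writing $Y = (Y^{(1)}, \ldots, Y^{(t)})$ and abbreviating $Y^{(<i)} = (Y^{(1)}, \ldots, Y^{(i-1)})$ (with $Y^{(<1)}$ empty), the chain rule \cite{elements} gives
\begin{align*}
I(U;Y) = \sum_{i=1}^t I(U; Y^{(i)} \mid Y^{(<i)}).
\end{align*}
Since $(\calU, t\alpha)$-PIE privacy demands a bound on $I(U;Y)$ that is uniform over every joint law $p_{U, X^{(1)}, \ldots, X^{(t)}}$ (in particular for correlated $X^{(i)}$), it suffices to bound each summand by $\alpha$ without imposing any independence among the $X^{(i)}$.

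First I would establish a conditional Markov structure. Because the $(g,\epsilon)$-GLH obfuscates $X^{(i)}$ with a freshly sampled hash function and independent response noise, $Y^{(i)}$ is conditionally independent of $\bigl(U, X^{(1)}, \ldots, X^{(i-1)}, X^{(i+1)}, \ldots, Y^{(<i)}\bigr)$ given $X^{(i)}$. Consequently, after conditioning on any realization $Y^{(<i)} = y^{(<i)}$, the triple $(U, X^{(i)}, Y^{(i)})$ still forms the Markov chain $U \to X^{(i)} \to Y^{(i)}$, and the channel $X^{(i)} \to Y^{(i)}$ remains exactly the $(g,\epsilon)$-GLH.

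Next I would apply Theorem~\ref{thm:PIE_GLH} inside this conditional world. Because (\ref{eq:GLH_IUY_IUX}) holds for \emph{every} input distribution, I can invoke it for the conditional joint law of $(U, X^{(i)})$ given each $y^{(<i)}$ and then take the expectation over $Y^{(<i)}$, obtaining
\begin{align*}
I(U; Y^{(i)} \mid Y^{(<i)}) \le \theta_{GLH}\, I(U; X^{(i)} \mid Y^{(<i)}) \le \theta_{GLH} \min\{\log n, \log |\calX|\} = \alpha,
\end{align*}
where the last inequality uses $I(U; X^{(i)} \mid Y^{(<i)}) \le \min\{H(U), H(X^{(i)})\} \le \min\{\log n, \log|\calX|\}$ since $U$ and $X^{(i)}$ take at most $n$ and $|\calX|$ values, respectively. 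Summing the $t$ terms in the chain-rule expansion then yields $I(U;Y) \le t\alpha$; as this holds for all $p_{U,X^{(1)},\ldots,X^{(t)}}$, the combined release provides $(\calU, t\alpha)$-PIE privacy. This mirrors the composition argument for the $\epsilon$-RR (Theorem~\ref{thm:Comp_RR}), with $\theta_{RR}$ replaced by $\theta_{GLH}$.

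The main obstacle is the conditional Markov step: one must verify that conditioning on the already-released $Y^{(<i)}$ creates no dependence between $U$ and $Y^{(i)}$ other than through $X^{(i)}$. This is precisely where the independence of the GLH's internal randomness (fresh hash and noise) across the $t$ data points is essential, and it is exactly what lets the additive bound survive \emph{arbitrary} correlations among $X^{(1)}, \ldots, X^{(t)}$. Once this conditional factorization is in place, the remaining steps are routine invocations of the single-datum inequality together with the elementary entropy bounds $H(U)\le\log n$ and $H(X^{(i)})\le\log|\calX|$.
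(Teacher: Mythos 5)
Your proposal is correct and follows essentially the same route as the paper: chain-rule decomposition of $I(U;Y)$ into conditional terms, a conditional version of the single-datum contraction $I(U;Y^{(i)}\mid Y^{(<i)}) \le \theta_{GLH}\, I(U;X^{(i)}\mid Y^{(<i)})$ justified by the freshness and independence of the GLH's internal randomness, and the entropy bounds $H(U)\le\log n$, $H(X^{(i)})\le\log|\calX|$. The only cosmetic difference is that the paper re-derives the conditional inequality by repeating the KL-convexity argument with $\bmY^{i-1}$ added as a condition (passing through $I(U;H^{(i)},Z^{(i)}\mid\bmY^{i-1})$ and the data processing inequality), whereas you invoke Theorem~\ref{thm:PIE_GLH} pointwise on each conditional law and average; these are the same argument.
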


\smallskip
\noindent{\textbf{Relation between the MI Loss Parameter and the Bayes Error Probability.}}~~In 
both the RR and GLH, 
we can set the MI loss parameter to achieve a required identification error probability. 

For example, assume that each user sends a single personal datum (i.e., $U$ is uniform) and that we require the Bayes error probability to be $\beta_{U|\bmS} > 0.5$. 
By Proposition~\ref{prop:identification_error} and Theorem~\ref{thm:PIE_RR} (resp.~Theorem~\ref{thm:PIE_GLH}), this requirement is satisfied by setting the MI parameter to $\theta_{RR} = 0.5$ (resp.~$\theta_{GLH} = 0.5$) when $n$ is large.

\subsection{Utility Analysis}
\label{sub:util_anal}
Finally, we analyze the utility of the $\epsilon$-RR and the $(g,\epsilon)$-GLH. 
Here as with \cite{Kairouz_ICML16,Wang_USENIX17,Murakami_USENIX19}, we assume that each user sends a single datum and consider distribution estimation 
as a task for a data collector.

\smallskip
\noindent{\textbf{Preliminaries.}}~~For $i\in[n]$, let $X_i$ and $Y_i$ be random variables representing user $u_i$'s personal data and obfuscated data, respectively. 
Let $\bmX_{1:n}= (X_1, \cdots, X_n)$ and $\bmY_{1:n} = (Y_1, \cdots, Y_n)$. 
Let 
$\calC$ be the probability simplex, and $\bmp\in\calC$ be an empirical distribution of $\bmX_{1:n}$, whose probability of $x\in\calX$ is given by $\bmp(x)$. 
In distribution estimation, 
the data collector estimates $\bmp$ from $\bmY_{1:n}$. 

For estimating $\bmp$ 
from $\bmY_{1:n}$, we use an empirical estimator in the same way as \cite{Agrawal_SIGMOD05,Huang_ICDE08,Kairouz_ICML16,Wang_USENIX17,Murakami_USENIX19} because it is easy to analyze. 
Let $\hbmp_{RR}$ be an empirical estimate of $\bmp$ when the $\epsilon$-RR is used. 
Simple calculations from (\ref{eq:RR}) show that: 
\begin{align}
\hbmp_{RR}(x) = \frac{1}{\mu_{RR}-\nu_{RR}} \left(\frac{\bmc_{RR}(x)}{n} - \nu_{RR}\right), 
\label{eq:RR_emp}
\end{align}
where $\mu_{RR} = \frac{e^\epsilon}{|\calX|+e^\epsilon-1}$, $\nu_{RR} = \frac{1}{|\calX|+e^\epsilon-1}$, and $\bmc_{RR}(x)$ is the number of $x\in\calX$ in $\bmY_{1:n}$ (note that $\calX=\calY$ in the RR). 

Similarly, let $\hbmp_{GLH}$ be an empirical estimate of $\bmp$ when the $(g,\epsilon)$-GLH is used. 
Let $\calY_x = \{(h,y)\in\calY | y=h(x)\} \subseteq \calY$. 
Then the empirical estimate $\hbmp_{GLH}$ can be written as follows \cite{Wang_USENIX17}:
\begin{align}
\hbmp_{GLH}(x) = \frac{1}{\mu_{GLH} - \nu_{GLH}} \left(\frac{\bmc_{GLH}(x)}{n} - \nu_{GLH} \right),
\label{eq:GLH_emp}
\end{align}
where $\mu_{GLH}=\frac{e^\epsilon}{g+e^\epsilon-1}$, $\nu_{GLH}=\frac{1}{g}$, and $\bmc_{GLH}(x)$ is the number of $(h,y)\in\calY_x$ in $\bmY_{1:n}$. 

As with \cite{Kairouz_ICML16,Wang_USENIX17,Murakami_USENIX19}, we use the expected $l_2$ loss between the true probability and the estimate as a utility measure. 
Specifically, we fix $\bmX_{1:n}$ (and hence $\bmp$). 
Given the estimate $\hbmp$ of $\bmp$, we evaluate: 
\begin{align}
\bbE[(\bmp(x) - \hbmp(x))^2]
\label{eq:l2_loss}
\end{align}
for each input 
symbol 
$x\in\calX$, where the expectation is taken over all possible realizations of $\bmY_{1:n}$ (hereinafter we omit the subscript $\bmY_{1:n}$ in $\bbE$). 
Note that when the empirical 
estimator 
is used, the expected $l_2$ loss is equal to the variance of $\hbmp(x)$ because the empirical estimate is unbiased \cite{Kairouz_ICML16,Wang_USENIX17}. 

\smallskip
\noindent{\textbf{Expected $l_2$ Loss.}}~~We 
first show 
the expected $l_2$ loss for the $\epsilon$-RR and the $(g,\epsilon)$-GLH: 

\begin{proposition}[$l_2$ loss of the RR]
\label{prop:l2_RR} 
For any $x\in\calX$, 
\begin{align}
\bbE[(\bmp(x) - \hbmp_{RR}(x))^2] = \frac{|\calX|+e^\epsilon-2}{n(e^\epsilon-1)^2} + \frac{\bmp(x)(|\calX|-2)}{n(e^\epsilon-1)}.
\label{eq:l2_RR}
\end{align}
\end{proposition}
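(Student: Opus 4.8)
The plan is to exploit that the empirical estimator $\hbmp_{RR}$ is unbiased, so that the expected $l_2$ loss in (\ref{eq:l2_loss}) coincides with the variance of $\hbmp_{RR}(x)$, as already noted after (\ref{eq:l2_loss}). First I would observe that, for fixed $\bmX_{1:n}$, the estimate $\hbmp_{RR}(x)$ in (\ref{eq:RR_emp}) is an affine function of the count $\bmc_{RR}(x)$ with slope $\frac{1}{n(\mu_{RR}-\nu_{RR})}$. Hence
\[
\bbE[(\bmp(x)-\hbmp_{RR}(x))^2] = \mathrm{Var}(\hbmp_{RR}(x)) = \frac{\mathrm{Var}(\bmc_{RR}(x))}{n^2(\mu_{RR}-\nu_{RR})^2},
\]
and the problem reduces to computing the variance of the count.

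Next I would compute $\mathrm{Var}(\bmc_{RR}(x))$. Since $\bmX_{1:n}$ is fixed and each user obfuscates independently, $\bmc_{RR}(x) = \sum_{i=1}^n \mathbf{1}[Y_i = x]$ is a sum of independent Bernoulli indicators. By (\ref{eq:RR}), a user with $X_i = x$ outputs $Y_i = x$ with probability $\mu_{RR}$, whereas a user with $X_i \neq x$ does so with probability $\nu_{RR}$. Writing $n\bmp(x)$ for the number of users of the first type and summing the per-user Bernoulli variances gives
\[
\mathrm{Var}(\bmc_{RR}(x)) = n\bmp(x)\,\mu_{RR}(1-\mu_{RR}) + n(1-\bmp(x))\,\nu_{RR}(1-\nu_{RR}).
\]

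Finally I would substitute the explicit values $\mu_{RR} = \frac{e^\epsilon}{|\calX|+e^\epsilon-1}$ and $\nu_{RR} = \frac{1}{|\calX|+e^\epsilon-1}$, together with $\mu_{RR}-\nu_{RR} = \frac{e^\epsilon-1}{|\calX|+e^\epsilon-1}$, into the two displays above. Using $\mu_{RR}(1-\mu_{RR}) = \frac{e^\epsilon(|\calX|-1)}{(|\calX|+e^\epsilon-1)^2}$ and $\nu_{RR}(1-\nu_{RR}) = \frac{|\calX|+e^\epsilon-2}{(|\calX|+e^\epsilon-1)^2}$, the factor $(|\calX|+e^\epsilon-1)^2$ cancels against the denominator, leaving $\frac{\bmp(x)\,e^\epsilon(|\calX|-1) + (1-\bmp(x))(|\calX|+e^\epsilon-2)}{n(e^\epsilon-1)^2}$. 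Collecting the $\bmp(x)$-independent part yields the first term $\frac{|\calX|+e^\epsilon-2}{n(e^\epsilon-1)^2}$, and the coefficient of $\bmp(x)$ simplifies through the identity $e^\epsilon(|\calX|-1) - (|\calX|+e^\epsilon-2) = (e^\epsilon-1)(|\calX|-2)$, producing the second term $\frac{\bmp(x)(|\calX|-2)}{n(e^\epsilon-1)}$ and matching (\ref{eq:l2_RR}).

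The argument is essentially a single variance computation, so the only real obstacle is the bookkeeping in this last step: one must organize the algebra so that the quadratic factor in the denominator cancels and the cross term collapses via the stated identity. There is no conceptual difficulty beyond the observation that the $Y_i$ are mutually independent once $\bmX_{1:n}$ is fixed, which is what legitimizes summing the per-user Bernoulli variances.
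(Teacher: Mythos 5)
Your proposal is correct and follows essentially the same route as the paper's proof: reduce the expected $l_2$ loss to the variance of the unbiased empirical estimator, express it as $\mathrm{Var}[\bmc_{RR}(x)]/(n^2(\mu_{RR}-\nu_{RR})^2)$, compute the count variance as a sum of independent Bernoulli variances split by whether $X_i=x$, and simplify via the identity $e^\epsilon(|\calX|-1)-(|\calX|+e^\epsilon-2)=(e^\epsilon-1)(|\calX|-2)$. The algebra checks out and matches the paper's intermediate expressions step for step.
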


\begin{proposition}[$l_2$ loss of the GLH]
\label{prop:l2_GLH} 
For any $x\in\calX$, 
\begin{align}
&\bbE[(\bmp(x) - \hbmp_{GLH}(x))^2] \nonumber\\
&= \frac{(g+e^\epsilon-1)^2}{n(e^\epsilon-1)^2(g-1)} + \frac{\bmp(x)(g^2-2g-e^\epsilon+1)}{n(e^\epsilon-1)(g-1)}.
\label{eq:l2_GLH}
\end{align}
\end{proposition}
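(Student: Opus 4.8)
The plan is to mirror the proof of Proposition~\ref{prop:l2_RR}, using the fact that the empirical estimator $\hbmp_{GLH}$ is unbiased so that the expected $l_2$ loss in (\ref{eq:l2_loss}) coincides with $\mathrm{Var}(\hbmp_{GLH}(x))$. First I would write the count as a sum of indicators, $\bmc_{GLH}(x) = \sum_{i=1}^n Z_i$, where $Z_i = \mathbf{1}[(h_i,y_i)\in\calY_x]$ indicates that user $u_i$'s reported pair satisfies $y_i = h_i(x)$. Because each user draws an independent hash $h_i$ from $\calH$ and applies the RR independently, the $Z_i$ are independent Bernoulli variables; and since $\bmX_{1:n}$ is fixed, the success parameter of $Z_i$ depends only on whether $x_i = x$.

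Next I would determine $\Pr[Z_i=1]$ in the two cases. When $x_i=x$, the event $y_i=h_i(x)=h_i(x_i)$ occurs with probability $\mu_{GLH}=\frac{e^\epsilon}{g+e^\epsilon-1}$ by (\ref{eq:GLH}). When $x_i\neq x$, I would condition on whether the hash collides; treating the universal hash family as giving collision probability exactly $\frac1g$, this yields
\begin{align*}
\Pr[Z_i=1\mid x_i\neq x]
= \frac{1}{g}\cdot\frac{e^\epsilon}{g+e^\epsilon-1} + \frac{g-1}{g}\cdot\frac{1}{g+e^\epsilon-1}
= \frac{1}{g} = \nu_{GLH},
\end{align*}
which is precisely why $\nu_{GLH}=1/g$ appears in (\ref{eq:GLH_emp}). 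This averaging over the random hash is the one genuinely GLH-specific step and the main place where care is needed, since it is what distinguishes the analysis from the collision-free RR computation.

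With these probabilities, $\bbE[\bmc_{GLH}(x)] = n\bmp(x)\mu_{GLH} + n(1-\bmp(x))\nu_{GLH}$, which substituted into (\ref{eq:GLH_emp}) returns $\bbE[\hbmp_{GLH}(x)]=\bmp(x)$ and re-confirms unbiasedness. By independence of the $Z_i$, $\mathrm{Var}(\bmc_{GLH}(x)) = n\bmp(x)\mu_{GLH}(1-\mu_{GLH}) + n(1-\bmp(x))\nu_{GLH}(1-\nu_{GLH})$, so
\begin{align*}
\bbE[(\bmp(x)-\hbmp_{GLH}(x))^2]
= \frac{\bmp(x)\mu_{GLH}(1-\mu_{GLH}) + (1-\bmp(x))\nu_{GLH}(1-\nu_{GLH})}{n(\mu_{GLH}-\nu_{GLH})^2}.
\end{align*}

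The remaining work is algebraic simplification. I would use the identities $\mu_{GLH}-\nu_{GLH}=\frac{(g-1)(e^\epsilon-1)}{g(g+e^\epsilon-1)}$, $\nu_{GLH}(1-\nu_{GLH})=\frac{g-1}{g^2}$, and $\mu_{GLH}(1-\mu_{GLH})=\frac{e^\epsilon(g-1)}{(g+e^\epsilon-1)^2}$, then separate the constant contribution from the $\bmp(x)$ contribution. The constant term reduces to $\frac{(g+e^\epsilon-1)^2}{n(g-1)(e^\epsilon-1)^2}$, matching the first summand of (\ref{eq:l2_GLH}). For the $\bmp(x)$ coefficient, the key step is the factorization $e^\epsilon g^2-(g+e^\epsilon-1)^2=(e^\epsilon-1)(g^2-2g-e^\epsilon+1)$, after which the coefficient collapses to $\frac{g^2-2g-e^\epsilon+1}{n(e^\epsilon-1)(g-1)}$, completing (\ref{eq:l2_GLH}). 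I expect the hash-collision averaging in the $x_i\neq x$ case to be the conceptual crux and this last factorization to be the only non-obvious algebraic step; everything else parallels Proposition~\ref{prop:l2_RR}, the additional ingredient being the averaging over the random hash.
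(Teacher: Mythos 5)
Your proposal is correct and follows essentially the same route as the paper's proof: unbiasedness reduces the $l_2$ loss to the variance of $\bmc_{GLH}(x)$ written as a sum of independent indicators with success probabilities $\mu_{GLH}$ and $\nu_{GLH}$, followed by the same algebraic simplification (including the factorization $g^2e^\epsilon-(g+e^\epsilon-1)^2=(e^\epsilon-1)(g^2-2g-e^\epsilon+1)$). Your explicit conditioning on the hash collision to derive $\Pr[Z_i=1\mid x_i\neq x]=1/g$ is in fact slightly more careful than the paper's one-line assertion of the same fact, but it is not a different argument.
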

The first terms in (\ref{eq:l2_RR}) and (\ref{eq:l2_GLH}) are shown in \cite{Wang_USENIX17} for the task of estimating counts of $x\in\calX$ in $\calX_{1:n}$ (we need to multiply the values in \cite{Wang_USENIX17} by $1/n^2$ to normalize counts to probabilities). 
The second terms in (\ref{eq:l2_RR}) and (\ref{eq:l2_GLH}) are obtained by simple calculations. 

Next we show an optimal parameter $g$ in the $(g,\epsilon)$-GLH while fixing the bound on $\alpha$. 
Specifically, 
by Theorem~\ref{thm:PIE_GLH}, 
the MI loss parameter $\theta_{GLH}$ ($= \frac{e^\epsilon-1}{g+e^\epsilon-1}$) 
determines the bound on $\alpha$ for fixed $n$ and $|\calX|$. 
Therefore, we fix $\theta_{GLH}$ and find the optimal $g$ that minimizes the expected $l_2$ loss. 
Note that for a fixed $\theta_{GLH}$ ($= \frac{e^\epsilon-1}{g+e^\epsilon-1}$), $\epsilon$ increases with increase in $g$.

For a fixed privacy budget $\epsilon$ in LDP, the optimal $g$ that minimizes the expected $l_2$ loss is given by: $g=e^\epsilon+1$ \cite{Wang_USENIX17}. 
In contrast, we show that for a fixed $\theta_{GLH}$, 
a larger $g$ provides better utility: 

\begin{theorem}[Optimal $g$ in the GLH]
\label{thm:optimal_g} 
For a fixed $\theta_{GLH}$ ($= \frac{e^\epsilon-1}{g+e^\epsilon-1}$), the expected $l_2$ loss of the $(g,\epsilon)$-GLH in (\ref{eq:l2_GLH}) is monotonically decreasing in $g$ and 
\begin{align}
\bbE[(\bmp(x) - \hbmp_{GLH}(x))^2] \rightarrow \frac{\bmp(x)(1-\theta_{GLH})}{n\theta_{GLH}} \hspace{3mm} (g \rightarrow \infty).
\label{eq:optimal_g}
\end{align}
\end{theorem}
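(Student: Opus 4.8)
The plan is to exploit the fixed-$\theta_{GLH}$ constraint to eliminate $\epsilon$ from the expression in Proposition~\ref{prop:l2_GLH}, turning the expected $l_2$ loss into an explicit function of $g$ alone (with $n$, $\bmp(x)$, and $\theta_{GLH}$ held fixed), and then to read off both monotonicity and the limit directly. Writing $\theta = \theta_{GLH} = \frac{e^\epsilon-1}{g+e^\epsilon-1}$ and solving for $e^\epsilon-1$ yields the two identities $e^\epsilon-1 = \frac{\theta g}{1-\theta}$ and $g+e^\epsilon-1 = \frac{g}{1-\theta}$, which I would substitute into (\ref{eq:l2_GLH}).

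First I would handle the first term of (\ref{eq:l2_GLH}): inserting both identities, the factors $g^2/(1-\theta)^2$ cancel between numerator and denominator, and the term collapses to $\frac{1}{n\theta^2(g-1)}$. For the second term I would use $-e^\epsilon+1 = -\frac{\theta g}{1-\theta}$ to rewrite $g^2-2g-e^\epsilon+1 = g\cdot\frac{g(1-\theta)-(2-\theta)}{1-\theta}$; after cancelling the common factor $g/(1-\theta)$ against the $\frac{\theta g}{1-\theta}$ in the denominator, the second term becomes $\frac{\bmp(x)}{n\theta}\cdot\frac{g(1-\theta)-(2-\theta)}{g-1}$. The key algebraic step is the polynomial division $\frac{g(1-\theta)-(2-\theta)}{g-1} = (1-\theta)-\frac{1}{g-1}$, which holds because $(1-\theta)g-(2-\theta) = (1-\theta)(g-1)-1$. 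Assembling the two terms gives the clean closed form
\begin{align*}
\bbE[(\bmp(x)-\hbmp_{GLH}(x))^2] = \frac{\bmp(x)(1-\theta)}{n\theta} + \frac{1}{n\theta(g-1)}\left(\frac{1}{\theta}-\bmp(x)\right).
\end{align*}

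With the loss in this form the conclusion is immediate. Since $g>0$ and $\epsilon>0$ force $\theta\in(0,1)$, and since $\bmp(x)$ is an empirical probability with $\bmp(x)\le 1$, the bracketed factor satisfies $\frac{1}{\theta}-\bmp(x) > 0$. Hence the only $g$-dependent piece is a strictly positive multiple of $\frac{1}{g-1}$, which is strictly decreasing in $g$; this establishes that the expected $l_2$ loss is monotonically decreasing in $g$. Letting $g\rightarrow\infty$ annihilates that piece and leaves the constant $\frac{\bmp(x)(1-\theta_{GLH})}{n\theta_{GLH}}$, which is exactly (\ref{eq:optimal_g}).

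I expect the main obstacle to be purely the bookkeeping in the substitution step — in particular, verifying that the first term cancels exactly to $\frac{1}{n\theta^2(g-1)}$ and that the numerator of the second term factors so that the polynomial division leaves a constant remainder of $-1$. Once the closed form is reached, no further work is needed: the positivity of $\frac{1}{\theta}-\bmp(x)$ is the single structural fact driving both the monotonicity and the value of the limit.
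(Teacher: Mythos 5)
Your proposal is correct and follows essentially the same route as the paper's proof: both eliminate $\epsilon$ via the fixed-$\theta_{GLH}$ constraint and arrive at the identical closed form $\bbE[(\bmp(x)-\hbmp_{GLH}(x))^2] = \frac{\bmp(x)(1-\theta_{GLH})}{n\theta_{GLH}} + \frac{1-\bmp(x)\theta_{GLH}}{n\theta_{GLH}^2(g-1)}$, from which monotonicity and the limit are read off exactly as you describe. The only difference is cosmetic bookkeeping (you substitute $e^\epsilon-1=\frac{\theta g}{1-\theta}$ and do a polynomial division, while the paper splits the numerator as $g(g-1)-(g+e^\epsilon-1)$), and your positivity observation $\frac{1}{\theta}-\bmp(x)>0$ is the same fact the paper uses as $1-\bmp(x)\theta_{GLH}\ge 0$.
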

Recall that $g$ in the $(g,\epsilon)$-GLH is the output size of the hash function $h: \calX \rightarrow [g]$. 
A larger $g$ preserves more information about the personal data $x\in\calX$. 
Therefore, Theorem~\ref{thm:optimal_g} is intuitive in that 
\textit{compressing the personal data $x$ with a smaller $g$ results in the loss of information about $x$, and hence causes the loss of utility}. 

\smallskip
\noindent{\textbf{RR vs. GLH.}}~~We compare the expected $l_2$ loss of the RR with that of the GLH. 
Here we set the MI loss parameters $\theta_{RR}$ and $\theta_{GLH}$ to the same value ($\theta_{RR} = \theta_{GLH}$) so that the bounds on $\alpha$ in (\ref{eq:PIE_RR}) and (\ref{eq:PIE_GLH}) are the same. 
We assume that $g$ in the GLH is very large and compare the right side of (\ref{eq:l2_RR}) with the right side of (\ref{eq:optimal_g}). 

When $\theta_{RR} = \theta_{GLH}$, the right side of (\ref{eq:optimal_g}) can be written as: $\frac{\bmp(x)(1-\theta_{GLH})}{n\theta_{GLH}} = \frac{\bmp(x)(1-\theta_{RR})}{n\theta_{RR}} = \frac{\bmp(x)|\calX|}{n(e^\epsilon-1)}$, where $\epsilon$ satisfies $\theta_{RR} = \frac{e^\epsilon-1}{|\calX|+e^\epsilon-1}$. 
Then for a large $|\calX|$, we obtain:
\begin{align*}
    \text{the right side of (\ref{eq:optimal_g})} \approx \text{the second term in (\ref{eq:l2_RR})}.
\end{align*}
Therefore, the remaining question is how large the first term in (\ref{eq:l2_RR}) is compared to the second term in (\ref{eq:l2_RR}).

As an example, we consider the case where 
$\theta_{RR}=\theta_{GLH} = 0.5$ 
(which guarantees 
the Bayes error probability $\beta_{U|\bmS}$ 
to be 
larger than 
$0.5$). 
In this case, simple calculations show that the first and second terms in (\ref{eq:l2_RR}) are almost equal to $\frac{2}{n|\calX|}$ and $\frac{\bmp(x)}{n}$, respectively. 
For a popular 
symbol 
$x\in\calX$ with a large probability $\bmp(x) \gg \frac{1}{|\calX|}$, we obtain $\frac{2}{n|\calX|} \ll \frac{\bmp(x)}{n}$. 
Therefore, the RR and GLH have almost the same utility for estimating the probabilities for popular 
symbols 
$x\in\calX$; i.e., heavy hitters \cite{Bassily_STOC15,Chan_PETS12,Erlingsson_CCS14,Hsu_ICALP12,Qin_CCS16}. 

For an unpopular 
symbol 
$x\in\calX$ with a small probability $\bmp(x) \ll \frac{1}{|\calX|}$, we obtain  $\frac{2}{n|\calX|} \gg \frac{\bmp(x)}{n}$. 
This means that the RR has much larger $l_2$ loss for the unpopular 
symbol. 
This is 
caused by the fact 
negative values are assigned to many unpopular 
symbols 
in the RR \cite{Agrawal_SIGMOD05}. 
However, zero (or very small positive) values can be assigned to the estimates 
$\hbmp(x)$ below a significance threshold (determined via Bon-ferroni correction) \cite{Erlingsson_CCS14}. 
Then both the RR and GLH would have small $l_2$ losses for unpopular 
symbols. 

In our experiments, we show that the RR and GLH have almost the same utility for both popular and unpopular 
symbols 
when we 
use the significant threshold. 
We also note that the communication cost of the RR and GLH can be expressed as $O(\log |\calX|)$ and $O(\log g)$, respectively. 
When $|\calX| < g$, the RR is better than the GLH in terms of the communication cost. 

\subsection{Implications for Privacy and Utility}
\label{sub:implications}
Theorem~\ref{thm:optimal_g} indicates that a larger $g$ results in a smaller $l_2$ loss when 
$(\calU,\alpha)$-PIE privacy 
is used as a privacy 
metric. 
On the other hand, the optimal value of $g$ that minimizes the $l_2$ loss is $g=e^\epsilon+1$ \cite{Wang_USENIX17} when $\epsilon$-LDP is used as a privacy 
metric. 
This is counter-intuitive because reducing $g$ from a larger value to $e^\epsilon+1$ decreases the $l_2$ loss. 
In other words, compressing the personal data $x$ with a smaller $g$ results in higher utility until $g=e^\epsilon+1$. 

One explanation for this counter-intuitive result is as follows. 
LDP is \textit{data privacy} that makes the original data $X$ indistinguishable from any other possible data in $\calX$. 
Here it becomes more difficult to guarantee the indistinguishability as the size $|\calX|$ of the data domain increases. 
Thus, for a fixed $\epsilon$, a larger $|\calX|$ results in \textit{lower utility}. 
In fact, by (\ref{eq:RR}), the RR with the same $\epsilon$ requires more noise for a larger $|\calX|$. 
A similar phenomenon occurs in matrix factorization under LDP \cite{Shin_TKDE18} -- a large amount of noise is added in \cite{Shin_TKDE18} due to high dimensionality of data. 
This is caused by the fact that LDP guarantees the indistinguishability of data. 
One way to increase the utility while fixing $\epsilon$ is \textit{dimension reduction}; i.e., reducing the value of $g$ (dimension reduction is also adopted in  \cite{Shin_TKDE18}). 
It should be noted, however, that too small $g$ results in a significant loss of information about the personal data $x$. 
The optimal $g$, which balances these two effects, is $g=e^\epsilon+1$.

Our PIE privacy has a different implication for privacy and utility. 
PIE privacy is \textit{user privacy} that aims to prevent the identification of $U$ (rather than the inference of $X$) by bounding $I(U;Y)$.  
Consequently, the MI loss parameter $\theta_{GLH}$ $(= \frac{e^\epsilon-1}{g+e^\epsilon-1})$ in the GLH 
does not depend on $|\calX|$. 
Therefore, for a fixed $\theta_{GLH}$, dimension reduction only results in the loss of information about $x$. 
This explains the reason 
for an intuitive result 
that a larger $g$ provides better utility in our notion. 

\section{Experimental Evaluation}
\label{sec:exp}

We 
proposed the PIE as a 
measure 
of re-identification risks in the local model, 
and 
analyzed the PIE and utility for the RR and GLH. 
Based on this, 
we would like to pose the following 
two basic 
questions:
1) How identifiable are personal data such as location traces and rating history? 
2) Is the PIE able to guarantee low re-identification risks for local obfuscation mechanisms such as the RR and GLH while keeping high utility? 
We conducted experiments to answer to these questions.

\subsection{Experimental Set-up}
\label{sub:set-up}
In our experiments, 
we used 
five 
large-scale\footnote{The biometric datasets are much smaller than the Foursquare and MovieLens datasets. This is because it is 
very hard to collect large-scale biometric data. 
We emphasize that 
each biometric dataset used 
in our experiments is one of the largest biometric datasets; e.g., much larger than 
\cite{FVC2006,Poh_PR10,XM2VTS,FERET,SDUMLA-HMT,Kumar_TIP12,Vanoni_BIOMS14}.} datasets:  

\smallskip
\noindent{\textbf{Location Trace.}}~~As a location trace dataset (denoted by \textbf{LT}), we used the Foursquare dataset (Global-scale Check-in Dataset with User Social Networks) \cite{Yang_WWW19}. 
This dataset 
includes 
$90048627$ check-ins by $2733324$ users 
on POIs all over the world. 
Each check-in is associated with its timestamp. 
We extracted $1370637$ users who had at least $10$ check-ins. 
The total number of POIs checked in by these users was $10500393$; i.e., $|\calU|=1370637$, $|\calX|=10500393$. 

For each user $u_i\in\calU$, we divided the location trace (time-series check-in data) into two disjoint traces of the same size. 
We used the former trace as training data in the partial-knowledge model, and the latter trace as personal data $X$. 
We used the training data of user $u_i$ as auxiliary information $r_i$ available to the adversary. 

\smallskip
\noindent{\textbf{Rating History.}}~~As a rating history dataset (denoted by \textbf{RH}), we used the MovieLens Latest dataset \cite{MovieLens}, which includes $27753444$ ratings by $283228$ users. 
Ratings are made on a 5-star scale with half-star increments ($0.5$ to $5$). 
Each rating is associated with its timestamp. 
We used $4$ to $5$ star ratings that represent ``likes'' and extracted $185107$ users who provided at least $10$ such ratings. 
The total number of movies rated by these users 
was $58098$; i.e., $|\calU|=185107$, $|\calX|=58098$. 

Since each user rates each movie at most once, it is difficult for the adversary to identify a user based on training data completely separated from personal data $X$. 
Therefore, for each user $u_i\in\calU$, we used the whole rating history (time-series rating data) as personal data $X$, 
and 
used the first $1$ to $5$ events (ratings) in $X$ as training data in the partial-knowledge model. 

\smallskip
\noindent{\textbf{Face.}}~~The NIST BSSR1 Set3 dataset \cite{BSSR} includes face scores (similarities) from $3000$ users; i.e., $|\calU|=3000$. 
Face scores were calculated by two matchers (``C'' and ``G''). 
We used scores from the matcher G because some users had inappropriate scores ($= -1$) in the matcher C. 
We used $3000 \times 3000$ scores in total. 
The face matcher G has lower errors than the best matcher in the FRPC 2017 \cite{Grother_NISTIR17}, as described in Section~\ref{sub:results_non-private} in detail.

\smallskip
\noindent{\textbf{Fingerprint.}}~~The CASIA-FingerprintV5 dataset \cite{CASIA}  (denoted by \textbf{FP}) includes $20000$ fingerprint images of $4000$ fingers ($5$ images per finger). 
We assumed that each finger was presented by a different user; i.e., $|\calU|=4000$. 
For each finger, we used the first and second images as a template and biometric data presented at authentication, respectively. 
To calculate a score between two fingerprint images, we used the VeriFinger SDK 7.0 \cite{VeriFinger}, a state-of-the-art commercial fingerprint matcher. 
We extracted $4000\times4000$ scores in total.

\smallskip
\noindent{\textbf{Finger-Vein.}}~~The finger-vein dataset in \cite{Yanagawa_BIC07} (denoted by \textbf{FV})
includes $33330$ finger-vein images of $3030$ fingers ($11$ images per finger); i.e., $|\calU|=3030$. 
For each finger, we used two images as a template and biometric data at authentication, respectively. 
To calculate a score, we used the CIRF (Correlation Invariant Random Filtering) \cite{biometric_security_ch1,Takahashi_IETBio12}. 
We used $3030\times3030$ scores between the transformed finger-vein features. 

\smallskip
To calculate scores in \textbf{LT} and \textbf{RH}, we 
used the matching algorithm based on the Markov chain model, 
because this model is effective for location privacy attacks 
\cite{Gambs_JCSS14,Mulder_WPES08,Murakami_TIFS17,Shokri_PETS11} and personalized item recommendation \cite{Rendle_WWW10,Cai_IJCAI17}. 
Specifically, we trained a transition matrix $\bmLamb_i\in[0,1]^{|\calX|\times|\calX|}$ for each user $u_i$ from training data via the MLE (Maximum Likelihood Estimation). 
We also trained a visit probability vector $\bmpi_i\in\calC$, which represents a probability distribution of personal data in $\calX$, for each user $u_i$ via the MLE; i.e., $\bmpi_i$ is an empirical distribution of the training data. 
The auxiliary information $r_i$ of the user $u_i$ can be expressed as: $r_i = (\bmLamb_i, \bmpi_i)$. 
Given obfuscated data $Y$ and $r_i$, 
we calculated a likelihood that $Y$ belongs to $u_i$ 
as follows. 
We calculated the likelihood for the first event in $Y$ via $\bmpi_i$ and for the subsequent events in $Y$ via $\bmLamb_i$. 
Then we multiplied them to obtain the likelihood for $Y$. 
Here we assigned a small positive value ($=10^{-8}$) to zero elements in $\bmLamb_i$ and $\bmpi_i$ so that the likelihood never becomes $0$ \cite{Mulder_WPES08,Murakami_TIFS17}. 
We used the likelihood for $Y$ as a score $s_i$. 

Based on genuine scores (scores for the same user) and impostor scores (scores for different users) for $|\calU|$ users, we evaluated the PSE 
by estimating 
$D(f_G||f_I)$. 
Note that $D(f_G||f_I)$ measures the PSE for a very large value of $n$ (when $n$ goes to infinity), as in Theorem~\ref{thm:BSE_KL}. 
To estimate $D(f_G||f_I)$, we used the generalized $k$-NN estimator \cite{Wang_TIT09} because it is asymptotically unbiased; i.e., it converges to the true value as the sample size increases. 
We also confirmed that for each dataset, $D(f_G||f_I)$ converges as the number of users increases. 

\subsection{Results for No Obfuscation Cases}
\label{sub:results_non-private}

We first evaluated the identifiability of personal data 
when no obfuscation mechanisms were used. 
Figure~\ref{fig:res1_PSEmax} shows the PSE in the maximum-knowledge model \cite{Domingo-Ferrer_PST15,Ruiz_PSD18} where 
the adversary uses personal data $X$ (the latter half of the trace in \textbf{LT} and the whole rating history in \textbf{RH}) as training data. 
In the left figure, each user sends 
$X$ without obfuscation; i.e., $Y=X$. 
In the right figure, each user sends the first $1$ to $5$ events in $X$ as $Y$. 
We note again that we measure the PSE for a very large value of $n$ (when $n$ goes to infinity); e.g., in Figure~\ref{fig:res1_PSEmax}, the PSE of \textbf{FV} is larger than $\log 3030 = 11.6$.

\begin{figure}
\centering
\includegraphics[width=0.75\linewidth]{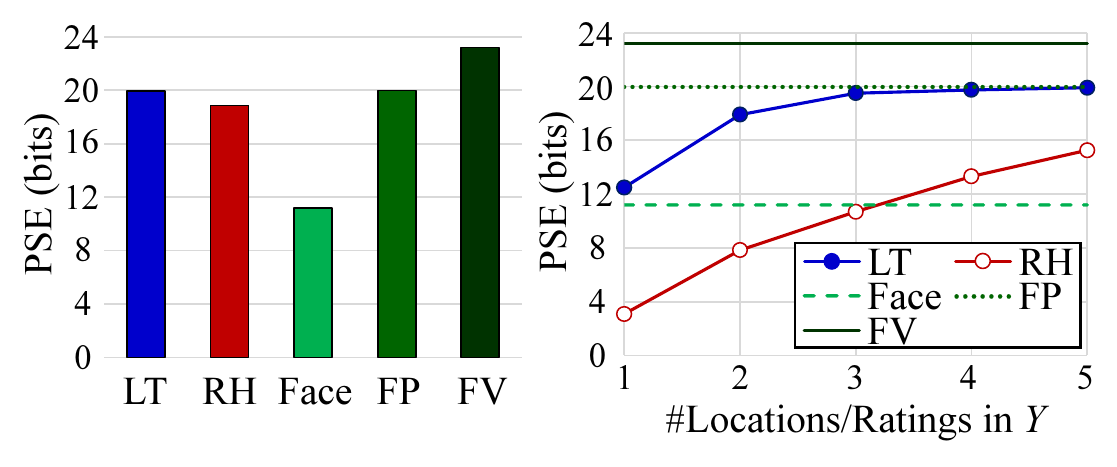}
\vspace{-2mm}
\caption{PSE in the maximum-knowledge attacker model. 
In \textbf{LT} and \textbf{RH}, the adversary uses $X$ as training data. 
Then in the left figure, each user sends $X$ without obfuscation ($Y=X$). In the right figure, each user sends the first $1$ to $5$ events in $X$ as $Y$.}
\label{fig:res1_PSEmax}
\end{figure}

The left figure shows that 
\textbf{LT} 
and 
\textbf{RH} 
have almost the same identifiability as the commercial fingerprint matcher (\textbf{FP}) in the maximum-knowledge model. 
Moreover, the right figure shows that only three locations are enough to have almost the same identifiability as a fingerprint. 
This is consistent with the fact that only three locations are enough to uniquely characterize about $80\%$ of individuals among one and a half million people \cite{Montjoye_SR13}. 

Although the maximum-knowledge model reflects the \textit{worst-case scenario} where the maximum auxiliary information is available to the adversary, 
it may not be realistic. 
For example, it is natural to consider that the whole location trace $X$ is not available as auxiliary information in practice 
(unless the adversary tracks the user all the time). 
In this case, even if $X$ is highly unique (as shown in \cite{Montjoye_SR13} for location data), it might be difficult for the adversary to identify a user. 

Therefore, we evaluated the identifiability of \textbf{LT} and \textbf{RH} in the partial-knowledge model. 
In \textbf{LT}, we extracted $1896$ users who had at least $1000$ check-ins. 
Then for each user, we used the former half of the trace as training data, and the latter half of the trace as personal data $X$. 
We changed the number of training events (locations) from $100$ to $500$. 
In \textbf{RH}, we extracted $1888$ users who had at least $500$ ratings with $4$ to $5$ stars. 
Then we used the first $1$ to $5$ events (ratings) in $X$ as training data. 
In both \textbf{LT} and \textbf{RH}, we used no obfuscation mechanisms ($Y=X$).

\begin{figure}[t]
\centering
\includegraphics[width=0.75\linewidth]{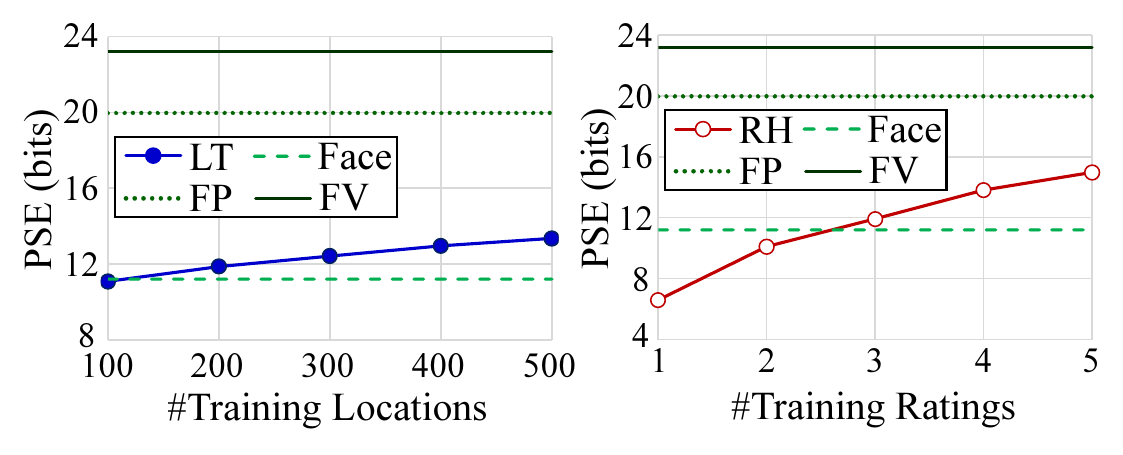}
\vspace{-2mm}
\caption{PSE in the partial-knowledge attacker model. 
In \textbf{LT}, the adversary uses training data separated from $X$. 
In \textbf{RH}, the adversary uses the first $1$ to $5$ ratings in $X$ as training data. 
Then each user sends $X$ without no obfuscation; i.e., $Y=X$.}
\label{fig:res2_PSEpart}
\end{figure}
\begin{figure}[t]
\centering
\includegraphics[width=0.55\linewidth]{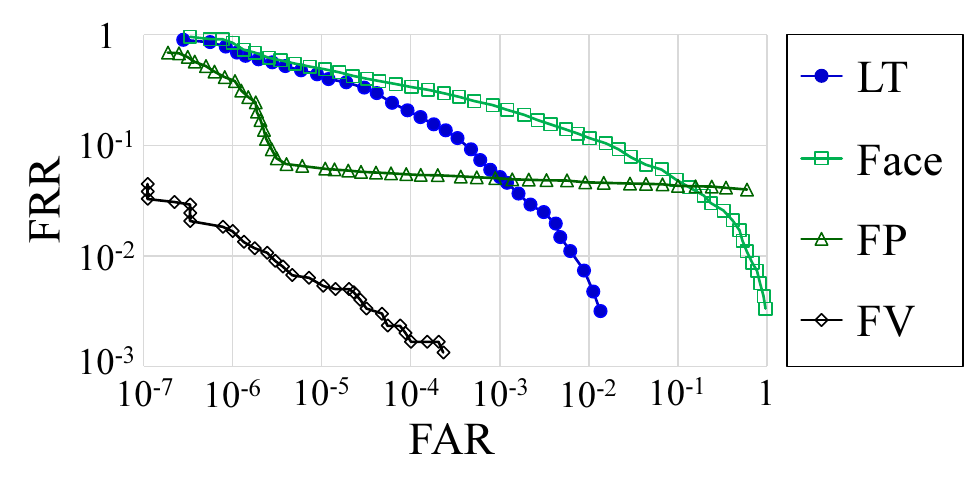}
\vspace{-2mm}
\caption{DET curve in the partial-knowledge attacker model. In \textbf{LT}, the number of training locations is $500$.}
\label{fig:res3_DET}
\end{figure}

Figure~\ref{fig:res2_PSEpart} shows the results. In both \textbf{LT} and \textbf{RH}, the PSE increases with increase in the number of training events. 
The PSE of \textbf{LT} (resp.~\textbf{RH}) is larger than that of the face matcher in  \cite{BSSR} when the number of training events is larger than or equal to $200$ (resp.~$3$). 
We emphasize that although a part of $X$ was used as training data in \textbf{RH}, the training data in \textbf{LT} was completely separated from $X$. 

We also evaluated FAR (False Acceptance Rate) and FRR (False Rejection Rate), 
commonly used accuracy measures in biometrics \cite{guide,intro}. FAR (resp.~FRR) is the proportion of verification attempts in which the system incorrectly accepts an impostor (resp.~rejects a genuine user). 
In the local privacy model, FAR (resp.~FRR) corresponds to the error rates in which, given auxiliary information $r_i$ of some user, the adversary incorrectly decides that obfuscated data $Y$ and $r_i$ belong to the same user (resp.~different users). 
We evaluated the DET (Detection Error Tradeoff) curve \cite{Mansfield_NPL02}, which is obtained by plotting FAR against FRR at various thresholds, using genuine and impostor scores for $|\calU|$ users. 

Figure~\ref{fig:res3_DET} shows the DET curve of \textbf{LT} in the partial-knowledge model ($500$ training locations) and biometric data. 
This figure shows that \textbf{LT} provides smaller FAR and FRR than the face, 
which is consistent with the PSE in Figure~\ref{fig:res2_PSEpart}. 

We also note that 
the best matcher in the FRPC (Face Recognition Prize Challenge) 2017 \cite{Grother_NISTIR17} had FRR of about $0.15$ (resp.~$0.1$) at FAR of $0.01$ (resp.~$0.1$), which is \textit{worse} than that of the face matcher in Figure~\ref{fig:res3_DET}. 
The high FAR and FRR in the FRPC 2017 were caused by the fact that face images were collected without tight quality constraints. 
In particular, unconstrained yaw and pitch pose variation caused errors \cite{Grother_NISTIR17}. 
Similarly, Figure~\ref{fig:res3_DET} shows that FRR of 
\textbf{FP} 
is high (even using the commercial fingerprint matcher). 
This is because 
users were asked to rotate their fingers with various levels of pressure in 
the CASIA-FingerprintV5 dataset \cite{CASIA}. 

In summary, our answers to the first question at the beginning of Section~\ref{sec:exp} 
are 
as follows: 
\begin{itemize}
\item Three locations are enough to have almost the same identifiability as the commercial fingerprint matcher in the maximum-knowledge model (ratings need more events).
\item A long location trace ($\geq 500$ locations) has higher identifiability than the face matcher in \cite{NIST04} and the best face matcher in the FRPC 2017 \cite{Grother_NISTIR17} in the partial-knowledge model where training data is separated from $X$. 
\end{itemize}

We emphasize that the second answer illustrates an interesting feature of our PSE that provides a new intuitive understanding of re-identification risks; i.e., \textit{a long location trace is more identifiable than the best face matcher in the prize challenge}. 
For example, the EU's AI Act \cite{EU_AI_Act} states that (both `real-time' and `post') remote biometric identification systems should be classified as high-risk. Based on the second answer, we argue that systems collecting location traces should also be considered as high-risk in terms of the re-identification risk.

\subsection{Results for the RR and GLH}
\label{sub:results_RR-GLH}
Next we evaluated the privacy and utility of the RR and GLH. 
Here we focused on \textbf{LT} because it has higher identifiability than \textbf{RH} as shown in Figure~\ref{fig:res1_PSEmax}. 

We used location traces of all users ($|\calU|=1370637$). 
We assumed that each user obfuscates the first location $X$ in the latter half of the trace using the RR or GLH, and sends the obfuscated location $Y$ to a data collector. 
In other words, we assumed that each user sends a single datum as in Section~\ref{sub:util_anal}. 
Note that a single location may be enough to identify a user, as shown in Figure~\ref{fig:res1_PSEmax}. 
For example, a user's home or work location is highly identifiable information \cite{Golle_Pervasive09}. 
We also note that the privacy of the RR and GLH in the case of multiple obfuscated data can be discussed based on the composition theorems (Theorems~\ref{thm:Comp_RR} and \ref{thm:Comp_GLH}). 

We first 
examined 
how tight our upper-bounds on the PIE for the RR and GLH (Theorems~\ref{thm:PIE_RR} and \ref{thm:PIE_GLH}) are. 
To this end, we evaluated the PSE, which lower-bounds the PIE 
(Proposition~\ref{prop:PIE_PSE}). 
As training data of the adversary, we used the former half of the trace; i.e., partial-knowledge model. 

Figure~\ref{fig:res4_RR_GLH_PIE} shows the upper-bounds ($\alpha$ in Proposition~\ref{prop:LDP_PIE}, Theorem~\ref{thm:PIE_RR}, and Theorem~\ref{thm:PIE_GLH}) and lower-bounds (PSE) on the PIE. 
Here we set $g$ in the GLH to $g=10^8$. 
Figure~\ref{fig:res4_RR_GLH_PIE} shows that the values of $\alpha$ in Theorems~\ref{thm:PIE_RR} and \ref{thm:PIE_GLH} are much 
smaller 
than $\alpha$ in Proposition~\ref{prop:LDP_PIE}, and 
are close to the PSE. 
This indicates that our upper-bounds for the RR and GLH in Theorems~\ref{thm:PIE_RR} and \ref{thm:PIE_GLH} are fairly tight and cannot be improved much. 

We also 
examined 
how tight the lower-bounds on the identification error probability $\beta_{U|\bmS}$ in Proposition~\ref{prop:identification_error} are. 
To this end, we performed the re-identification attack for each obfuscated location $Y$ using the matching algorithm based on the Markov chain model and the best score rule (described in Section~\ref{sub:framework}). 
Then we evaluated the identification error rate, which is the proportion of correct identification results. 

\begin{figure}[t]
\centering
\includegraphics[width=0.75\linewidth]{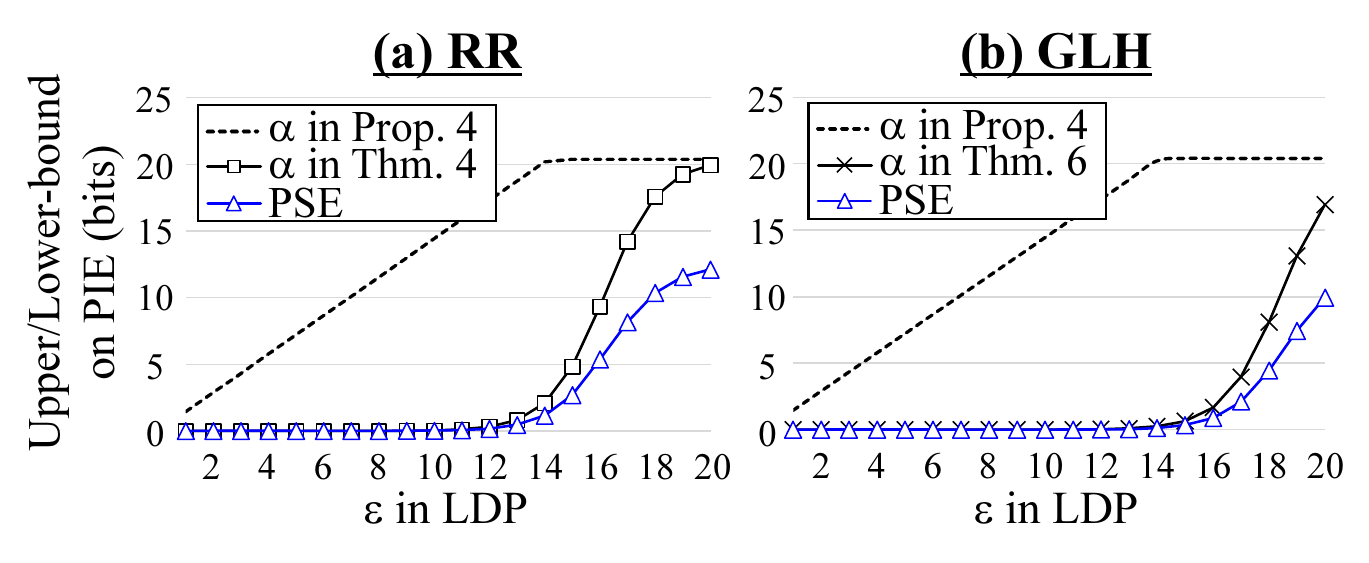}
\vspace{-2mm}
\caption{Upper-bounds ($\alpha$ in Proposition~\ref{prop:LDP_PIE} and Theorems~\ref{thm:PIE_RR} and \ref{thm:PIE_GLH}) and lower-bounds (PSE) on the PIE in the $\epsilon$-RR and $(g,\epsilon)$-GLH ($g=10^8$).}
\label{fig:res4_RR_GLH_PIE}
\vspace{2mm}
\includegraphics[width=0.75\linewidth]{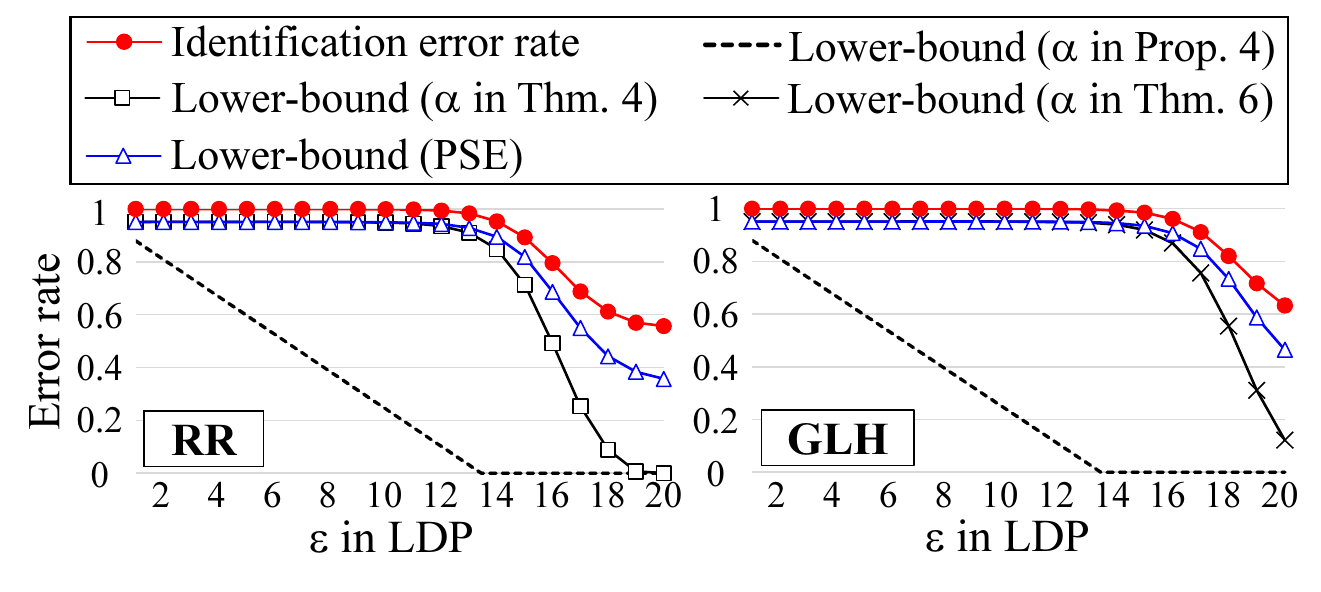}
\vspace{-2mm}
\caption{Identification error rate in the $\epsilon$-RR and $(g,\epsilon)$-GLH ($g=10^8$). 
The lower-bounds (blue and black lines) are obtained by assigning 
$\alpha$ (resp.~PSE) in Figure~\ref{fig:res4_RR_GLH_PIE} to (\ref{eq:Fano_uniform_cor}) (resp.~(\ref{eq:Fano_uniform})). }
\label{fig:res4_RR_GLH_IDerror}
\end{figure}

Figure~\ref{fig:res4_RR_GLH_IDerror} shows the results. 
Here the lower-bounds 
on $\beta_{U|\bmS}$ are 
obtained by 
assigning 
$\alpha$ (resp.~PSE) in Figure~\ref{fig:res4_RR_GLH_PIE} to (\ref{eq:Fano_uniform_cor}) (resp.~(\ref{eq:Fano_uniform})).
Figure~\ref{fig:res4_RR_GLH_IDerror} shows that the gap between the identification error rate and the lower-bound is caused by two factors. 
Specifically, the gap between ``Identification error rate'' (red line) and ``Lower-bound (PSE)'' (blue line) is caused by the generalized Fano's inequality \cite{Han_TIT94}; i.e., the first inequality in (\ref{eq:Fano_uniform}). 
The gaps between ``Lower-bound (PSE)'' (blue line) and the other lower-bounds (black lines) are caused by the gap between the PSE and 
$\alpha$ in PIE privacy 
($\alpha$ in Proposition~\ref{prop:LDP_PIE} and Theorems~\ref{thm:PIE_RR} and \ref{thm:PIE_GLH}).
Figure~\ref{fig:res4_RR_GLH_IDerror} also shows that 
the lower-bounds by Theorem~\ref{thm:PIE_RR} and \ref{thm:PIE_GLH} are much tighter than the lower-bound by Proposition~\ref{prop:LDP_PIE}. 

Finally, we evaluated the utility. 
As a task for a data collector, we considered distribution estimation. 
For a distribution estimation method, we used 
the empirical estimator without a significant threshold (denoted by \textbf{emp}) and the empirical estimator with a significant threshold \cite{Erlingsson_CCS14} (denoted by \textbf{emp+thr}). 
In \textbf{emp+thr}, we set the significance level to $0.05$ in the same way as \cite{Erlingsson_CCS14,Murakami_USENIX19,Wang_USENIX17}, and assigned zero to probabilities below a significance threshold. 
Then we evaluated the sum of 
$l_2$ losses $(\bmp(x)-\hbmp(x))^2$ over the top $\phi\in\{100,|\calX|\}$ POIs whose probabilities $\bmp(x)$ are the largest. 
The top $\phi=100$ POIs correspond to heavy hitters. 

\begin{figure}[t]
\centering
\includegraphics[width=0.8\linewidth]{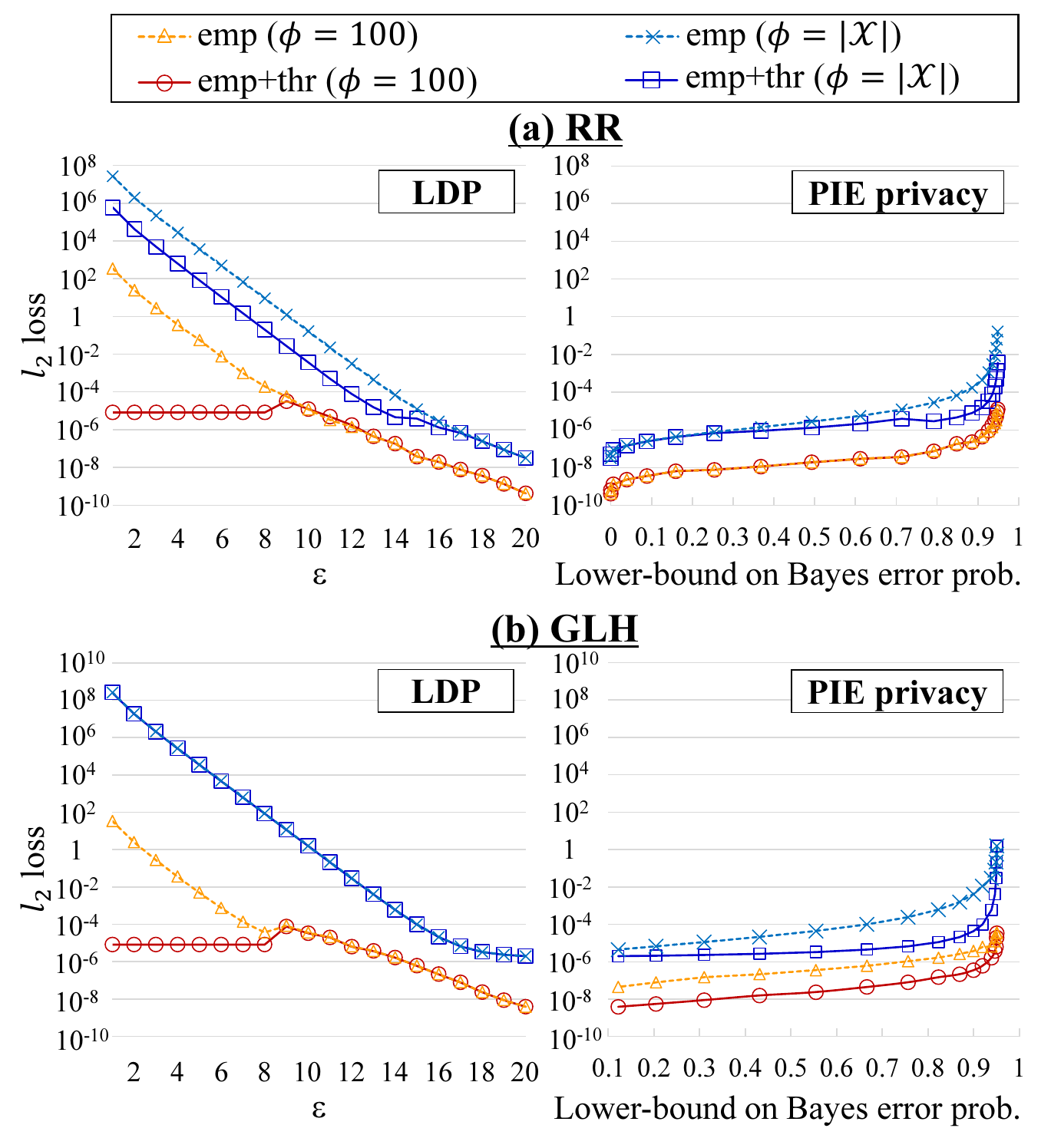}
\vspace{-2mm}
\caption{Sum of the 
$l_2$ losses over the top $\phi\in\{100,|\calX|\}$ POIs with various values of $\epsilon$ ($g=10^8$).}
\label{fig:res5_DistErr_eps}
\end{figure}
\begin{figure}[t]
\centering
\includegraphics[width=0.7\linewidth]{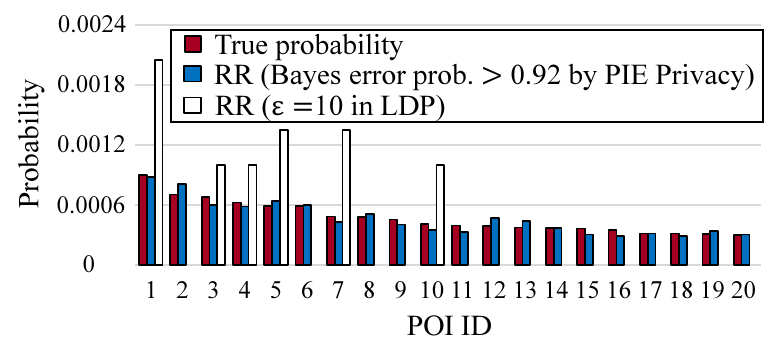}
\vspace{-4mm}
\caption{Probabilities of the top-$20$ POIs and estimates 
by \textbf{emp+thr}. 
The relative error of the RR with 
$\beta_{U|\bmS} > 0.92$ guaranteed by PIE privacy 
(resp.~$\epsilon=10$ in LDP) was 
$0.10$ 
(resp.~$1.05$)}.

\label{fig:res6_DistVisual}
\end{figure}

Figure~\ref{fig:res5_DistErr_eps} shows the results ($g=10^8$). 
Here the left figures show $\epsilon$ in LDP, whereas 
the right figures show the lower-bounds on the Bayes error probability $\beta_{U|\bmS}$ obtained by PIE privacy (Corollary~\ref{cor:identification_error_cor} 
and Theorems~\ref{thm:PIE_RR} and \ref{thm:PIE_GLH}). 
Figure~\ref{fig:res5_DistErr_eps} shows that the utility is improved by using the significant threshold. 
This is because 
\textbf{emp} assigns negative values to many 
input symbols 
\cite{Agrawal_SIGMOD05}. 
However, 
\textbf{emp+thr} still provides poor utility when $\epsilon \leq 10$. 
We also show in Figure~\ref{fig:res6_DistVisual} the probabilities $\bmp(x)$ of the top-$20$ POIs (red bars)
and the estimates $\hbmp(x)$ by 
\textbf{emp+thr} when we use the RR with $\epsilon=10$ (white bars). 
The RR with $\epsilon=10$ performs poorly. 
The relative error 
$\frac{| \bmp(x)-\hbmp(x) |}{\bmp(x)}$ 
of the RR with $\epsilon=10$ for 
each of 
the top $20$ POIs 
was on average $1.05$ ($>1$). 
Note that $\epsilon=10$ is considered to provide almost no privacy guarantees for personal data \cite{DP_Li}, because $e^\epsilon$ in (\ref{eq:LDP}) is $e^{10} = 22026$. 
This means that LDP fails to guarantee meaningful privacy and utility for this task. 

On the other hand, our PIE privacy can be used to prevent re-identification attacks while keeping high utility. 
For example, Figure~\ref{fig:res5_DistErr_eps}(a) shows that LDP requires $\epsilon > 12$ (no meaningful privacy guarantees) to achieve the $l_2$ loss of $10^{-6}$ using \textbf{emp+thr} ($\phi = 100$). 
In contrast, PIE privacy guarantees the Bayes error probability to be $\beta_{U|\bmS} > 0.92$ with the same $l_2$ loss. 
Figure~\ref{fig:res6_DistVisual} shows the estimates $\hbmp(x)$ by the RR and \textbf{emp+thr} 
in this case 
(blue bars). 
The RR accurately estimates the distribution for the $20$-POIs while satisfying $\beta_{U|\bmS} > 0.92$. 
The relative error of the RR in this case 
was 
$0.10$, 
which was much smaller than $1$. 
Therefore, PIE privacy 
can be used to guarantee the identification error probability larger than 
$0.92$ 
while keeping high utility in this task.

We finally set the MI loss parameters in the RR and GLH to $\theta_{RR}=\theta_{GLH}=0.5$ (which guarantees 
the Bayes error probability $\beta_{U|\bmS}$ larger than $0.5$), 
and changed the parameter $g$ in the GLH from $10^3$ to $2 \times 10^9$ while fixing the MI loss parameter $\theta_{GLH}$. 
Then we evaluated the sum of 
the 
$l_2$ losses over the top $\phi\in\{100,|\calX|\}$ POIs. 

\begin{figure}[t]
\centering
\includegraphics[width=0.75\linewidth]{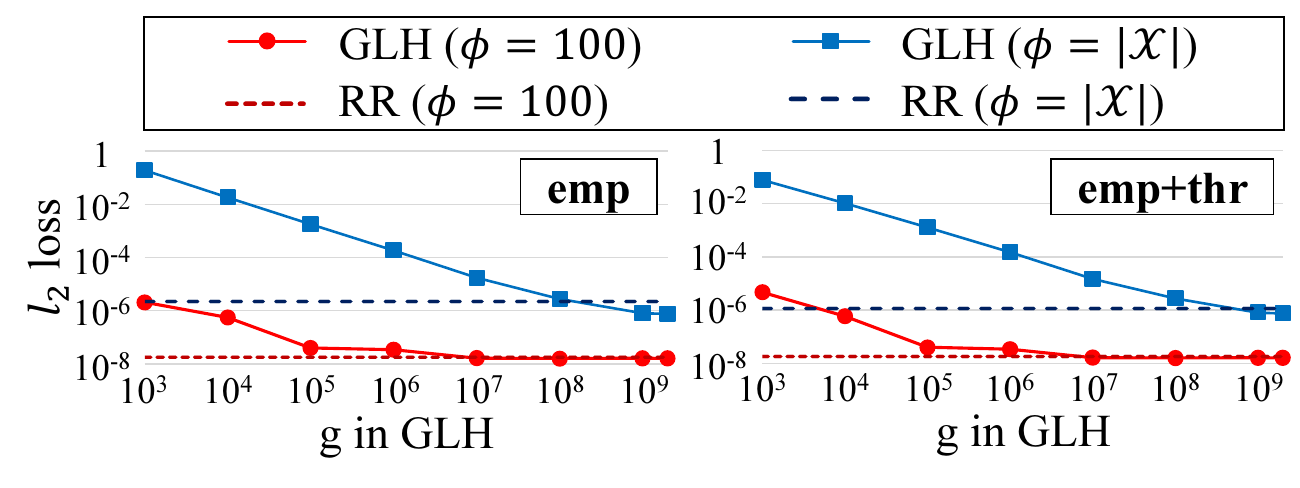}
\vspace{-2mm}
\caption{Sum of the 
$l_2$ losses over the top $\phi\in\{100,|\calX|\}$ POIs with various values of $g$ when the Bayes error probability $\beta_{U|\bmS}$ larger than $0.5$ ($\theta_{RR}=\theta_{GLH}=0.5$).}
\label{fig:res5_DistErr_g}
\end{figure}

Figure~\ref{fig:res5_DistErr_g} shows the result. 
In the GLH, 
the $l_2$ loss decreases 
with increase in $g$, which is consistent with Theorem~\ref{thm:optimal_g}. 
The GLH with a large $g$ has almost the same utility as the RR for $\phi=100$. 
When we use \textbf{emp}, the GLH with a large $g$ has slightly better utility than the RR for $\phi=|\calX|$. 
However, when we use \textbf{emp+thr}, they have almost the same utility for $\phi=|\calX|$ because zero values are assigned to unpopular 
symbols 
in both the RR and GLH, as discussed in Section~\ref{sub:util_anal}. 

In summary, our answers to the second 
question 
at the beginning of Section~\ref{sec:exp} 
are as follows: 
\begin{itemize}
\item Our PIE guarantees the identification error probability larger than 
$0.92$ 
while keeping high utility in distribution estimation. 
On the other hand, 
LDP 
destroys the utility even when $\epsilon = 10$.
\item 
Compressing the personal data $X$ with a smaller 
value of 
$g$ in the GLH results in the loss of utility 
for given PIE guarantees, which is consistent with our theoretical results. 
\end{itemize}

\section{Conclusion}
\label{sec:conclusion}
We 
proposed the PIE (Personal Information Entropy) as a 
measure 
of re-identification risks in the local model. 
We conducted experiments using 
five 
datasets, and showed that a location trace has higher identifiability than the face matcher in \cite{BSSR} and the best matcher in the FRPC 2017 \cite{Grother_NISTIR17} in the partial-knowledge model. 
We also showed that 
the PIE can be used to guarantee low re-identification risks for the RR and GLH 
while keeping high utility in distribution estimation. 

As described in Section~\ref{sec:intro}, our PIE privacy is an average notion 
(though it differs from the re-identification rate in that PIE privacy does not specify an identification algorithm nor the adversary's background knowledge, as described in Section~\ref{sub:PIE}). 
One way to extend our PIE privacy from the average notion to the worse-case notion is to use \textit{$\alpha$-mutual information} \cite{Verdu_ITA15}, which is an extension of the mutual information using R\'{e}nyi divergence. 
R\'{e}nyi divergence with $\alpha=1$ is equivalent to the KL divergence, whereas R\'{e}nyi divergence with $\alpha=\infty$ is equivalent to the max divergence, which is a worst-case analog of the KL divergence. 
Therefore, we can extend 
our PIE privacy 
to the the worst-case privacy 
metric 
by using $\alpha$-mutual information with a large $\alpha$. 
Then the following questions remain open: 
How do the theoretical properties of the PIE (shown in Section~\ref{sub:properties}) change? 
How do the bounds on privacy and utility for 
obfuscation 
mechanisms (e.g., RR, GLH) change? 
As future work, we would like to 
investigate these open questions. 

\section*{Acknowledgment}
The authors would like to thank Casey Meehan (UCSD) for technical comments on this paper.

\bibliographystyle{plain}
\bibliography{main}

\appendix

\arxiv{\section{Proofs of Statements in Section 3}}
\conference{\section{Proofs of Statements in Section~\ref{sec:unified}}}
\label{sec:proof_sec4}

\begin{proof}[Proof of Theorem~\ref{thm:post-processing} (Post-processing invariance)]
Let $\calY'$ be the range of the randomized algorithm $\lambda$, and $Y'$ be a random variable representing obfuscated data output by $\lambda \circ \bmQ$. 

The obfuscation mechanism $\bmQ$ takes as input personal data $X$ of a user $U$ and outputs $Y$. 
Then the randomized algorithm $\lambda$ takes as input $Y$ and outputs $Y'$. 
Thus, $U$, $X$, $Y$ and $Y'$ form the Markov chain; i.e., $U \rightarrow X \rightarrow Y \rightarrow Y'$. 
Then by the data processing inequality \cite{elements} and (\ref{eq:PIE-privacy}), we obtain:
\begin{align*}
I(U;Y') \leq I(U;Y) \leq \alpha ~~ (bits).
\end{align*}
Therefore, $\lambda \circ \bmQ$ provides 
$(n,\alpha)$-PIE privacy.
\end{proof}

\begin{proof}[Proof of Theorem~\ref{thm:convexity} (Convexity)]
Let $Y_w$, $Y_1$, and $Y_2$ be random variables representing obfuscated data output by $\bmQ_w$, $\bmQ_1$, and $\bmQ_2$, respectively. 
Given a user $U$, let $p_{X|U}$, $p_{Y_w|U}$, $p_{Y_1|U}$, $p_{Y_2|U}$ be a probability distribution of $X$, $Y_w$, $Y_1$, and $Y_2$, respectively. 
Then the probability of outputting $y$ from $\bmQ_w$ is written as follows:
\begin{align}
p_{Y_w|U}(y) 
&= \sum_{x\in\calX} \bmQ_w(y|x) p_{X|U}(x) \nonumber\\
&= \sum_{x\in\calX} \left(w \bmQ_1(y|x) + (1-w) \bmQ_2(y|x)\right) p_{X|U}(x) \nonumber\\
&= w p_{Y_1|U}(y) + (1-w) p_{Y_2|U}(y). 
\label{eq:p_YP_U}
\end{align}
The mutual information $I(U;Y_w)$ is convex in $p_{Y_w|U}$ (see Theorem~2.7.4 in \cite{elements}). 
In addition, $I(U; Y_1) \leq \alpha$ and $I(U; Y_2) \leq \alpha$ because $\bmQ_1$ and $\bmQ_2$ provide 
$(n,\alpha)$-PIE privacy. 
Then by (\ref{eq:p_YP_U}), we obtain:
\begin{align*}
I(U;Y_w) 
&\leq w I(U;Y_1) + (1-w) I(U;Y_2) \\
&\leq w \alpha + (1-w) \alpha \\
& = \alpha.
\end{align*}
Therefore, $\bmQ_w$ provide 
$(n,\alpha)$-PIE privacy. 
\end{proof}

\arxiv{\section{Proofs of Statements in Section 4}}
\conference{\section{Proofs of Statements in Section~\ref{sec:theoretical}}}
\label{sec:proof_sec5}
\begin{proof}[Proof of Proposition~\ref{prop:LDP_PIE} (LDP and PIE)]
We use the following lemma:
\begin{lemma}
\label{prop:LDP_MI} 
If an obfuscation mechanism $\bmQ$ provides $\epsilon$-LDP, then 
\begin{align}
I(X;Y) \leq \min\{\epsilon \log e,\epsilon^2 \log e\}~~ (bits).
\label{eq:LDP_MI}
\end{align}
\end{lemma}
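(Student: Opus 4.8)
The plan is to start from the divergence decomposition of mutual information, $I(X;Y) = \sum_{x\in\calX} p_X(x)\, D\!\left(\bmQ(\cdot|x)\,\|\,p_Y\right)$, where $p_Y(y) = \sum_{x'\in\calX} p_X(x')\,\bmQ(y|x')$ is the marginal of $Y$. The only structural fact I would extract from $\epsilon$-LDP is a pointwise bound on the likelihood ratio: since $p_Y$ is a convex mixture of the conditionals $\bmQ(\cdot|x')$ and each satisfies $\bmQ(y|x)\le e^\epsilon\bmQ(y|x')$ by (\ref{eq:LDP}), I obtain $e^{-\epsilon}\le\frac{\bmQ(y|x)}{p_Y(y)}\le e^\epsilon$ for every $x,y$. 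Both halves of the lemma are then analytic consequences of this two-sided bound, so the LDP hypothesis is used exactly once.

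The first bound, $I(X;Y)\le\epsilon\log e$, is the easy half. From $\frac{\bmQ(y|x)}{p_Y(y)}\le e^\epsilon$ the integrand $\log\frac{\bmQ(y|x)}{p_Y(y)}$ never exceeds $\epsilon\log e$, so each $D(\bmQ(\cdot|x)\|p_Y)$ is at most $\epsilon\log e$, and the same holds after averaging over $x$. I expect this step to be essentially immediate.

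The second bound, $I(X;Y)\le\epsilon^2\log e$, is where the real work lies, and it is the step I expect to be the main obstacle, because the crude pointwise estimate above only yields $\epsilon\log e$ and is blind to the cancellation that makes $I$ quadratic for small $\epsilon$. The plan is to exploit the centering identity $\sum_x p_X(x)\frac{\bmQ(y|x)}{p_Y(y)}=1$ valid for each fixed $y$. Working in nats, I would rewrite $I(X;Y)=\sum_y p_Y(y)\,\mathbb{E}[A_y\ln A_y]$, where for each $y$ the random variable $A_y$ takes value $\frac{\bmQ(y|x)}{p_Y(y)}$ with probability $p_X(x)$; the centering identity gives $\mathbb{E}[A_y]=1$, and the LDP ratio bound gives $A_y\in[e^{-\epsilon},e^\epsilon]$. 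It then suffices to show that any random variable $A$ with $\mathbb{E}[A]=1$ and range in $[e^{-\epsilon},e^\epsilon]$ obeys $\mathbb{E}[A\ln A]\le\epsilon\tanh(\epsilon/2)$.

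This last inequality I would obtain by a convexity and extreme-point argument: the map $a\mapsto a\ln a$ is convex, so $\mathbb{E}[A\ln A]$ is a linear functional of the law of $A$ that is maximized at an extreme point of the constraint set $\{\text{laws on }[e^{-\epsilon},e^\epsilon]\text{ of mean }1\}$, namely a two-point distribution on the endpoints $\{e^{-\epsilon},e^\epsilon\}$. Pinning the masses by the mean-$1$ constraint puts weight $\frac{1}{e^\epsilon+1}$ on $e^\epsilon$, and a short computation collapses $\mathbb{E}[A\ln A]$ to $\epsilon\,\frac{e^\epsilon-1}{e^\epsilon+1}=\epsilon\tanh(\epsilon/2)$. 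Finally I would invoke the elementary estimates $\tanh(\epsilon/2)\le1$ and $\tanh(\epsilon/2)\le\epsilon/2\le\epsilon$ to get $\mathbb{E}[A\ln A]\le\min\{\epsilon,\epsilon^2\}$ in nats, hence $I(X;Y)\le\min\{\epsilon,\epsilon^2\}\log e$ in bits after converting nats to bits by the factor $\log e$. Combining the two halves gives $I(X;Y)\le\min\{\epsilon\log e,\epsilon^2\log e\}$, as claimed.
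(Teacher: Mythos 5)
Your proof is correct, but it takes a genuinely different route from the paper: the paper does not prove this lemma at all, it simply cites it as a special case of Lemma~1 in Cuff and Yu's work on mutual-information DP, whereas you give a self-contained argument. Your two key steps both check out. The two-sided ratio bound $e^{-\epsilon}\le \bmQ(y|x)/p_Y(y)\le e^{\epsilon}$ follows from (\ref{eq:LDP}) because $p_Y$ is a convex combination of the conditionals (and the degenerate case $p_Y(y)=0$ forces $\bmQ(y|x)=0$ for all $x$ under LDP, so no division issue arises). The centering step is the real content: writing $I(X;Y)=\sum_y p_Y(y)\,\bbE[A_y\ln A_y]$ with $\bbE[A_y]=1$ and $A_y\in[e^{-\epsilon},e^{\epsilon}]$, and then bounding $\bbE[A\ln A]$ by the chord of the convex map $a\mapsto a\ln a$ over the interval (equivalently, by the mean-$1$ two-point law on the endpoints), yields exactly $\epsilon\,\frac{e^{\epsilon}-1}{e^{\epsilon}+1}=\epsilon\tanh(\epsilon/2)$, and $\tanh(\epsilon/2)\le\min\{1,\epsilon/2\}$ gives both halves of the claimed minimum at once (your separate ``easy half'' is thus redundant but harmless; in fact you prove the slightly stronger bound $\min\{\epsilon,\epsilon^2/2\}\log e$). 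What your approach buys is transparency and a sharper constant; what the paper's citation buys is brevity and a statement already matched to the form used downstream in Proposition~\ref{prop:LDP_PIE}. If you were to include your argument in the paper, the only step worth tightening is the extreme-point claim: rather than appealing to extreme points of the moment-constrained set of laws (which are supported on at most two points but not automatically on the endpoints), it is cleaner to invoke the chord inequality $f(a)\le\frac{M-a}{M-m}f(m)+\frac{a-m}{M-m}f(M)$ for convex $f$ directly and take expectations.
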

This is a special case of Lemma~1 in \cite{Cuff_CCS16} in the local model. 
By (\ref{eq:PIE_obf}) in Proposition~\ref{prop:PIE_obf}, (\ref{eq:LDP_MI}), and $I(U;X) \leq \min\{\log n, \log |\calX|\}$, we obtain:
\begin{align*}
    I(U;Y) \leq \min\{\epsilon \log e, \epsilon^2 \log e, \log n, \log |\calX|\}.
\end{align*}
Note that this holds for any distribution 
$p_{U,X}$. 
Thus 
Proposition~\ref{prop:LDP_PIE} holds. 
\end{proof}

\begin{proof}[Proof of Theorem~\ref{thm:PIE_RR} (PIE of the RR)]
Let $p_U$ be a probability distribution of $U$. 
Let $p_Y$ be a probability distribution of $Y$, and $p_{Y|U=u_i}$ be a conditional probability distribution of $Y$ given $U=u_i$. 
Then $I(U;Y)$ can be written as follows:
\begin{align}
I(U;Y) = \sum_{i=1}^n p_U(u_i) D(p_{Y|U=u_i}||p_Y).
\label{eq:PIE_RR_IUY}
\end{align}
Let $p_{\text{uni}}$ be a uniform distribution over $\calX$; i.e., $p_{\text{uni}}(x) = \frac{1}{|\calX|}$ for any $x\in\calX$.  
Let $p_X$ be a probability distribution of $X$, and $p_{X|U=u_i}$ be a conditional probability distribution of $X$ given $U=u_i$. 
By (\ref{eq:RR}), we can regard the $\epsilon$-RR as a mechanism that given $x\in\calX$, outputs $y=x$ with probability $\theta_{RR}$ ($= \frac{e^\epsilon-1}{|\calX|+e^\epsilon-1}$) and outputs a value randomly chosen from $\calX$ (including $x$) with probability $1-\theta_{RR}$. 
Therefore, given $U=u_i$, $Y$ is generated from $p_{X|U=u_i}$ with probability $\theta_{RR}$, and is generated from $p_{\text{uni}}$ with probability $1-\theta_{RR}$. 
Then we obtain:
\begin{align}
&D(p_{Y|U=u_i}||p_Y) \nonumber\\
&= D(\theta_{RR} p_{X|U=u_i} \hspace{-1mm}+\hspace{-1mm} (1 \hspace{-1mm} - \hspace{-1mm} \theta_{RR}) p_{\text{uni}} || \theta_{RR} p_X \hspace{-1mm}+\hspace{-1mm} (1 \hspace{-1mm} - \hspace{-1mm} \theta_{RR}) p_{\text{uni}}) \nonumber\\
&\leq \theta_{RR} D(p_{X|U=u_i} || p_X) + (1 -\theta_{RR}) D(p_{\text{uni}} || p_{\text{uni}}) \nonumber\\
& \hspace{5mm} 
\text{(by the convexity of the KL divergence \cite{elements})} \nonumber\\
&= \theta_{RR} D(p_{X|U=u_i} || p_X). 
\label{eq:PIE_RR_D_pYuiY}
\end{align}
By (\ref{eq:PIE_RR_IUY}) and (\ref{eq:PIE_RR_D_pYuiY}), we obtain:
\begin{align}
I(U;Y) 
&\leq \sum_{i=1}^n p_U(u_i) \theta_{RR} D(p_{X|U=u_i} || p_X) \nonumber\\
&= \theta_{RR} \sum_{i=1}^n p_U(u_i) D(p_{X|U=u_i} || p_X) \nonumber\\
&= \theta_{RR} I(U;X).
\label{eq:PIE_RR_IUY_2}
\end{align}
Therefore, (\ref{eq:RR_IUY_IUX}) holds. 
Note that this holds for any distribution 
$p_{U,X}$. 
(\ref{eq:PIE_RR}) is immediately derived from (\ref{eq:RR_IUY_IUX}) because $I(U;X) \leq H(U) \leq \log n$ and $I(U;X) \leq H(X) \leq \log |\calX|$.
\end{proof}

\begin{proof}[Proof of Theorem~\ref{thm:Comp_RR} (Composition of the RR)]
For $i\in[t]$, let $\bmY^i = (Y^{(1)},\cdots,Y^{(i)})$. 
Then, by the chain rule for the mutual information \cite{elements}, we obtain:
\begin{align}
I(U;Y) = I(U;Y^{(1)}) + \sum_{i=2}^t I(U;Y^{(i)}|\bmY^{i-1}).
\label{eq:Comp_RR_IUY}
\end{align}
Let $p_U$ and $p_{\bmY^i}$ be a probability distribution of $U$ and $\bmY^i$, respectively. 
Let $p_{Y^{(i)}|U=u_i, \bmY^{i-1}=\bmy^{i-1}}$ (resp.~$p_{Y^{(i)}|\bmY^{i-1}=\bmy^{i-1}}$) be a conditional probability distribution of $Y^{(i)}$ given $U=u_i$ and $\bmY^{i-1}=\bmy^{i-1}$ (resp.~given $\bmY^{i-1}=\bmy^{i-1}$). 
Then $I(U;Y^{(i)}|\bmY^{i-1})$ in (\ref{eq:Comp_RR_IUY}) can be written as follows:
\begin{align}
&I(U;Y^{(i)}|\bmY^{i-1}) \nonumber\\
&= \sum_{\bmy^{i-1}\in\calY^{i-1}} p_{\bmY^{i-1}}(\bmy^{i-1}) \sum_{i=1}^n p_U(u_i|\bmy^{i-1}) 
D(p_{Y^{(i)}|U=u_i,\bmY^{i-1}=\bmy^{i-1}}||p_{Y^{(i)}|\bmY^{i-1}=\bmy^{i-1}}).
\label{eq:Comp_RR_IUYiYi-1}
\end{align}
Let $p_{\text{uni}}$ be a uniform distribution over $\calX$. 
Let $p_{X^{(i)}|U=u_i, \bmY^{i-1}=\bmy^{i-1}}$ (resp.~$p_{X^{(i)}|\bmY^{i-1}=\bmy^{i-1}}$) be a conditional probability distribution of $X^{(i)}$ given $U=u_i$ and $\bmY^{i-1}=\bmy^{i-1}$ (resp.~given $\bmY^{i-1}=\bmy^{i-1}$). 
By (\ref{eq:RR}), the $\epsilon$-RR takes as input $x\in\calX$, and outputs $y=x$ with probability $\theta_{RR}$ ($= \frac{e^\epsilon-1}{|\calX|+e^\epsilon-1}$) and outputs a value randomly chosen from $\calX$ (including $x$) with probability $1-\theta_{RR}$. 
Note that this is independent of the other input and output data. 
In other words, 
given $X^{(i)}$, the $\epsilon$-RR outputs $Y^{(1)}=X^{(i)}$ with probability $\theta_{RR}$ and a random value from $\calX$ (including $X^{(i)}$) with probability $1-\theta_{RR}$, 
\textit{irrespective of whether there are correlations between $X^{(1)}, \cdots, X^{(t)}$}. 
Thus we obtain:
\begin{align}
&D(p_{Y^{(i)}|U=u_i,\bmY^{i-1}=\bmy^{i-1}}||p_{Y^{(i)}|\bmY^{i-1}=\bmy^{i-1}}) \nonumber\\
&= D(\theta_{RR}~ p_{X^{(i)}|U=u_i, \bmY^{i-1}=\bmy^{i-1}} + (1-\theta_{RR}) p_{\text{uni}} 
|| \theta_{RR}~ p_{X^{(i)}|\bmY^{i-1}=\bmy^{i-1}} + (1-\theta_{RR}) p_{\text{uni}}) \nonumber\\
&\leq \theta_{RR} D(p_{X^{(i)}|U=u_i, \bmY^{i-1}=\bmy^{i-1}} || p_{X^{(i)}|\bmY^{i-1}=\bmy^{i-1}}) \nonumber\\
& \hspace{5mm} \text{(by the convexity of the KL divergence \cite{elements} and 
$D(p_{\text{uni}}||p_{\text{uni}})=0$)}. 
\label{eq::Comp_RR_DpYi}
\end{align}
By (\ref{eq:Comp_RR_IUYiYi-1}) and (\ref{eq::Comp_RR_DpYi}), we obtain:
\begin{align}
&I(U;Y^{(i)}|\bmY^{i-1}) \nonumber\\
&\leq \sum_{\bmy^{i-1}\in\calY^{i-1}} p_{\bmY^{i-1}}(\bmy^{i-1}) \sum_{i=1}^n p_U(u_i|\bmy^{i-1})  
\theta_{RR} D(p_{X^{(i)}|U=u_i, \bmY^{i-1}=\bmy^{i-1}} || p_{X^{(i)}|\bmY^{i-1}=\bmy^{i-1}}) \nonumber\\
& = \theta_{RR} I(U;X^{(i)}|\bmY^{i-1})
\label{eq:Comp_RR_IUYiYi-1_2}
\end{align}
By (\ref{eq:Comp_RR_IUY}) and (\ref{eq:Comp_RR_IUYiYi-1_2}), we obtain:
\begin{align}
I(U;Y) \leq \theta_{RR} I(U;X^{(1)}) + \sum_{i=2}^t \theta_{RR} I(U;X^{(i)}|\bmY^{i-1}).
\label{eq:Comp_RR_IUY_2}
\end{align}
In (\ref{eq:Comp_RR_IUY_2}), $I(U;X^{(i)}|\bmY^{i-1}) \leq H(U|\bmY^{i-1}) \leq \log n$ and $I(U;X^{(i)}| \allowbreak \bmY^{i-1}) \leq H(X^{(i)}|\bmY^{i-1}) \leq \log |\calX|$. 
Therefore, the combined release $Y$ provides 
$(n,t\alpha)$-PIE privacy, 
where $\alpha = \theta_{RR} \allowbreak \min\{\log n, \allowbreak \log |\calX|\}$.
\end{proof}

\begin{proof}[Proof of Theorem~\ref{thm:PIE_GLH} (PIE of the GLH)]
Let $p_U$ be a probability distribution of $U$. 
Let $p_Y$ be a probability distribution of $Y$, and $p_{Y|U=u_i}$ be a conditional probability distribution of $Y$ given $U=u_i$. 
Then $I(U;Y)$ can be written as follows:
\begin{align}
I(U;Y) = \sum_{i=1}^n p_U(u_i) D(p_{Y|U=u_i}||p_Y). 
\label{eq:PIE_GLH_IUY}
\end{align}
The $(g,\epsilon)$-GLH outputs $(h,y)\in\calY$ with probability in (\ref{eq:GLH}). 
We denote the probability that the $(g,\epsilon)$-GLH outputs $h$, $y$, and $(h,y)$ simply by $\Pr(h)$, $\Pr(y)$, and $\Pr(h,y)$, respectively. 
Then, $D(p_{Y|U=u_i}||p_Y)$ in (\ref{eq:PIE_GLH_IUY}) can be written as follows:
\begin{align}
&D(p_{Y|U=u_i}||p_Y) \nonumber\\
&= \sum_{(h,y)\in\calY} \Pr(h,y|U=u_i) \log \frac{\Pr(h,y|U=u_i)}{\Pr(h,y)} \nonumber\\
&= \hspace{-1mm} \sum_{h\in\calH} \Pr(h) \hspace{-1mm} \sum_{y\in[g]} \Pr(y|h,U=u_i) \log \frac{\Pr(h)\Pr(y|h,U=u_i)}{\Pr(h)\Pr(y|h)} \nonumber\\
& 
\hspace{4.8cm} 
\text{(by $\Pr(h|U=u_i) = \Pr(h)$)} \nonumber\\
&= \sum_{h\in\calH} \Pr(h) \sum_{y\in[g]} \Pr(y|h,U=u_i) \log \frac{\Pr(y|h,U=u_i)}{\Pr(y|h)}.
\label{eq:PIE_GLH_DpYuipY}
\end{align}
Let $q_{\text{uni}}$ be a uniform distribution over $[g]$.  
Given $x\in\calX$, let $z$ be a hash value $z = h(x) \in [g]$, and $Z$ be a random variable representing a hash value.  
Let $H$ be a random variable representing a hash function. 
Furthermore, let $p_{Z|H=h,U=u_i}$ (resp.~$p_{Z|H=h}$) be a conditional probability distribution of $Z$ given $H=h$ and $U=u_i$ (resp.~given $H=h$). 
By (\ref{eq:GLH}), the $(g,\epsilon)$-GLH outputs $y=z$ with probability $\theta_{GLH}$ ($= \frac{e^\epsilon-1}{g+e^\epsilon-1}$) and outputs a value randomly chosen from $[g]$ (including $z$) with probability $1-\theta_{GLH}$. 
Therefore, given $H=h$ and $U=u_i$, $Y$ is generated from $p_{Z|H=h,U=u_i}$ with probability $\theta_{GLH}$, and is generated from $q_{\text{uni}}$ with probability $1-\theta_{GLH}$. 
Then $D(p_{Y|U=u_i}||p_Y)$ in (\ref{eq:PIE_GLH_DpYuipY}) can be written as follows:
\begin{align}
&D(p_{Y|U=u_i}||p_Y) \nonumber\\
&= \sum_{h\in\calH} \Pr(h) D(\theta_{GLH}~ p_{Z|H=h,U=u_i} + (1-\theta_{GLH}) q_{\text{uni}} 
||\theta_{GLH}~ p_{Z|H=h} + (1-\theta_{GLH}) q_{\text{uni}}) \nonumber\\
&\leq \sum_{h\in\calH} \Pr(h) \theta_{GLH} D(p_{Z|H=h,U=u_i} || p_{Z|H=h}) \nonumber\\
& \hspace{5mm} \text{(by the convexity of the KL divergence \cite{elements} and 
$D(q_{\text{uni}}||q_{\text{uni}})=0$)} \nonumber\\
&= \theta_{GLH} \sum_{h\in\calH} \Pr(h) D(p_{Z|H=h,U=u_i} || p_{Z|H=h}).
\label{eq:PIE_GLH_DpYuipY_2}
\end{align}
Note that by expanding $I(U;H,Z)$ in the same way as (\ref{eq:PIE_GLH_IUY}) and (\ref{eq:PIE_GLH_DpYuipY}), $I(U;H,Z)$ can be expressed as follows:
\begin{align}
&I(U;H,Z) \nonumber\\
&= \sum_{i=1}^n p_U(u_i) \sum_{h\in\calH} \Pr(h) 
\sum_{z\in[g]} \Pr(z|h,U=u_i) \log \frac{\Pr(z|h,U=u_i)}{\Pr(z|h)} \nonumber\\
&= \sum_{i=1}^n p_U(u_i) \sum_{h\in\calH} \Pr(h) D(p_{Z|H=h,U=u_i} || p_{Z|H=h}).
\label{eq:PIE_GLH_IUX}
\end{align}
By (\ref{eq:PIE_GLH_IUY}), (\ref{eq:PIE_GLH_DpYuipY_2}), and (\ref{eq:PIE_GLH_IUX}), $I(U;Y)$ can be written as follows:
\begin{align}
&I(U;Y) \nonumber\\
&\leq \theta_{GLH} I(U;H,Z) \nonumber\\
&\leq \theta_{GLH} I(U;X) \hspace{2mm} \text{(by the data processing inequality \cite{elements}).}
\label{eq:PIE_GLH_IUY_2}
\end{align}
Therefore, (\ref{eq:GLH_IUY_IUX}) holds. 
This holds for any distribution 
$p_{U,X}$. 
(\ref{eq:PIE_GLH}) also holds because $I(U;X) \leq H(U) \leq \log n$ and $I(U;X) \leq H(X) \leq \log |\calX|$.
\end{proof}

\begin{proof}[Proof of Theorem~\ref{thm:Comp_GLH} (Composition of the GLH)]
For $i\in[t]$, let $\bmY^i = (Y^{(1)},\cdots,Y^{(i)})$. 
Let $H^{(i)}$ and $Z^{(i)}$ be random variables representing the $i$-th hash function and hash value, respectively. 

By the chain rule for the mutual information \cite{elements}, we obtain:
\begin{align}
I(U;Y) = I(U;Y^{(1)}) + \sum_{i=2}^t I(U;Y^{(i)}|\bmY^{i-1}).
\label{eq:Comp_GLH_IUY}
\end{align}
In the proof of Theorem~\ref{thm:Comp_RR}, we showed (\ref{eq:Comp_RR_IUYiYi-1}), (\ref{eq::Comp_RR_DpYi}), and (\ref{eq:Comp_RR_IUYiYi-1_2}). They differ from (\ref{eq:PIE_RR_IUY}), (\ref{eq:PIE_RR_D_pYuiY}), and  (\ref{eq:PIE_RR_IUY_2}), respectively, in that 
they 
add $\bmY^{i-1}$ as a condition. 
Similarly, by 
adding $\bmY^{i-1}$ as a condition in (\ref{eq:PIE_GLH_IUY}), (\ref{eq:PIE_GLH_DpYuipY}), (\ref{eq:PIE_GLH_DpYuipY_2}), and (\ref{eq:PIE_GLH_IUX}), we can show the following inequality:
\begin{align}
&I(U;Y^{(i)}|\bmY^{i-1}) \nonumber\\
&\leq \theta_{GLH} I(U;H^{(i)},Z^{(i)}|\bmY^{i-1}) \nonumber\\
&\leq \theta_{GLH} I(U;X^{(i)}|\bmY^{i-1}) \nonumber\\
&\hspace{4mm} \text{(by the data processing inequality \cite{elements}).}
\label{eq:Comp_GLH_IUYi}
\end{align}
By (\ref{eq:Comp_GLH_IUY}) and (\ref{eq:Comp_GLH_IUYi}), we obtain:
\begin{align}
I(U;Y) \leq \theta_{GLH} I(U;X^{(1)}) + \sum_{i=2}^t \theta_{GLH} I(U;X^{(i)}|\bmY^{i-1}).
\label{eq:Comp_GLH_IUY_2}
\end{align}
In (\ref{eq:Comp_GLH_IUY_2}), $I(U;X^{(i)}|\bmY^{i-1}) \leq H(U|\bmY^{i-1}) \leq \log n$ and $I(U;X^{(i)}| \allowbreak \bmY^{i-1}) \leq H(X^{(i)}|\bmY^{i-1}) \leq \log |\calX|$. 
Therefore, the combined release $Y$ provides 
$(n,t\alpha)$-PIE privacy, 
where $\alpha = \theta_{GLH} \allowbreak \min\{\log n, \allowbreak \log |\calX|\}$.
\end{proof}

\begin{proof}[Proof of Proposition~\ref{prop:l2_RR} ($l_2$ loss of the RR)]
Let 
\arxiv{$\hbmp_{RR}^*(x) =$ \allowbreak $\bbE[\hbmp_{RR}(x)]$}\conference{$\hbmp_{RR}^*(x) = \bbE[\hbmp_{RR}(x)]$}. 
Then the expected $l_2$ loss can be written as follows: 
\begin{align}
&\bbE[(\bmp(x) - \hbmp_{RR}(x))^2] \nonumber\\
&= \bbE \left[((\hbmp_{RR}(x) - \hbmp_{RR}^*(x)) + (\hbmp_{RR}^*(x) - \bmp(x)))^2 \right] \nonumber\\
&= \bbE \left[(\hbmp_{RR}(x) - \hbmp_{RR}^*(x))^2 \right] + (\hbmp_{RR}^*(x) - \bmp(x)))^2 
\nonumber\\
& \hspace{3mm} 
+ 2(\hbmp_{RR}^*(x) - \bmp(x)) \bbE \left[(\hbmp_{RR}(x) - \hbmp_{RR}^*(x)) \right] \nonumber\\
&= \bbE \left[(\hbmp_{RR}(x) - \hbmp_{RR}^*(x))^2 \right] + (\hbmp_{RR}^*(x) - \bmp(x)))^2.
\label{eq:l2_RR_bias_variance}
\end{align}
The first term in (\ref{eq:l2_RR_bias_variance}) is the variance, whereas the second term in (\ref{eq:l2_RR_bias_variance}) is the bias. 
In other words, the expected $l_2$ loss consists of the bias and variance, which is called the bias-variance trade-off \cite{mlpp}.
It is well known that the empirical estimate is unbiased; i.e., $\hbmp_{RR}^*(x) = \bmp(x)$ \cite{Kairouz_ICML16,Wang_USENIX17}. 
Therefore, we obtain:
\begin{align}
\bbE[(\bmp(x) - \hbmp_{RR}(x))^2] 
&= \bbE \left[(\hbmp_{RR}(x) - \hbmp_{RR}^*(x))^2 \right] \nonumber\\
&= \bbE \left[(\hbmp_{RR}(x) - \bmp(x))^2 \right].
\label{eq:l2_RR_l2_variance}
\end{align}
By (\ref{eq:RR_emp}), $\mu_{RR} = \frac{e^\epsilon}{|\calX|+e^\epsilon-1}$, and $\nu_{RR} = \frac{1}{|\calX|+e^\epsilon-1}$, the right side of (\ref{eq:l2_RR_l2_variance}) can be written as follows: 
\begin{align}
\bbE \left[(\hbmp_{RR}(x) - \bmp(x))^2 \right] 
&=\left( \frac{1}{\mu_{RR} - \nu_{RR}} \right)^2  \frac{\text{Var}[\bmc_{RR}(x)]}{n^2} \nonumber\\
&=\left( \frac{|\calX|+e^\epsilon-1}{e^\epsilon-1} \right)^2 \frac{\text{Var}[\bmc_{RR}(x)]}{n^2},
\label{eq:l2_RR_E_hbmp_bmp}
\end{align}
where for $a\in\reals$, $\text{Var}[a]$ represents the variance of $a$. 

Recall that $\bmc_{RR}(x)$ is the number of $x\in\calX$ in $\bmY_{1:n}=(Y_1,\cdots,Y_n)$. 
For $i\in[n]$, let $b_i(x)\in\{0,1\}$ be a value that takes $1$ if $Y_i=x$ and $0$ if $Y_i\ne x$. 
Then $\bmc_{RR}(x) = \sum_{i=1}^n b_i(x)$. 
By (\ref{eq:RR}), $b_i(x)$ is randomly generated from the Bernoulli distribution with parameter 
$\mu_{RR}$ (resp.~$\nu_{RR}$) 
if $X_i=x$ (resp.~$X_i\ne x$). 
The number of $x$ in $\bmX_{1:n}$ is $n\bmp(x)$, whereas the number of the other input 
symbols 
in $\bmX_{1:n}$ is $n(1-\bmp(x))$. 
Therefore, $\text{Var}[\bmc_{RR}(x)]$ in (\ref{eq:l2_RR_E_hbmp_bmp}) can be written as follows:
\begin{align}
&\text{Var}[\bmc_{RR}(x)] \nonumber\\
&= n\bmp(x)\mu_{RR}(1-\mu_{RR}) + n(1-\bmp(x))\nu_{RR}(1-\nu_{RR}) \nonumber\\
&= \frac{n\bmp(x)e^\epsilon(|\calX|-1)}{(|\calX|+e^\epsilon-1)^2} + \frac{n(1-\bmp(x))(|\calX|+e^\epsilon-2)}{(|\calX|+e^\epsilon-1)^2}.
\label{eq:l2_RR_Var_c_RR}
\end{align}
By (\ref{eq:l2_RR_l2_variance}), (\ref{eq:l2_RR_E_hbmp_bmp}), and (\ref{eq:l2_RR_Var_c_RR}), we obtain:
\begin{align*}
&\bbE[(\bmp(x) - \hbmp_{RR}(x))^2] \\
&= \frac{\bmp(x)e^\epsilon(|\calX|-1) + (1-\bmp(x))(|\calX|+e^\epsilon-2)}{n(e^\epsilon-1)^2} \\
&= \frac{|\calX|+e^\epsilon-2}{n(e^\epsilon-1)^2} + \frac{\bmp(x)e^\epsilon(|\calX|-2) - \bmp(x)(|\calX|-2)}{n(e^\epsilon-1)^2} \\
&= \frac{|\calX|+e^\epsilon-2}{n(e^\epsilon-1)^2} + \frac{\bmp(x)(|\calX|-2)}{n(e^\epsilon-1)}. 
\end{align*}
Therefore (\ref{eq:l2_RR}) holds.
\end{proof}

\begin{proof}[Proof of Proposition~\ref{prop:l2_GLH} ($l_2$ loss of the GLH)]
As with (\ref{eq:l2_RR_l2_variance}), the expected $l_2$ loss of the GLH is equal to the variance of the GLH: 
\begin{align}
\bbE[(\bmp(x) - \hbmp_{GLH}(x))^2] 
&= \bbE \left[(\hbmp_{GLH}(x) - \bmp(x))^2 \right].
\label{eq:l2_GLH_l2_variance}
\end{align}
By (\ref{eq:GLH_emp}), $\mu_{GLH}=\frac{e^\epsilon}{g+e^\epsilon-1}$, and $\nu_{GLH}=\frac{1}{g}$, the right side of (\ref{eq:l2_GLH_l2_variance}) can be written as follows: 
\begin{align}
&\bbE \left[(\hbmp_{GLH}(x) - \bmp(x))^2 \right] \nonumber\\
&=\left( \frac{1}{\mu_{GLH} - \nu_{GLH}} \right)^2  \frac{\text{Var}[\bmc_{GLH}(x)]}{n^2} \nonumber\\
&= \left( \frac{g(g+e^\epsilon-1)}{ge^\epsilon-(g+e^\epsilon-1)} \right)^2 \frac{\text{Var}[\bmc_{GLH}(x)]}{n^2} \nonumber\\
&= \left( \frac{g(g+e^\epsilon-1)}{(g-1)(e^\epsilon-1)} \right)^2 \frac{\text{Var}[\bmc_{GLH}(x)]}{n^2}.
\label{eq:l2_GLH_E_hbmp_bmp}
\end{align}
Recall that $\bmc_{GLH}(x)$ is the number of $(h,y) \in \calY_x$ in $\bmY_{1:n}=(Y_1,\cdots,\allowbreak Y_n)$ and 
$\calY_x = \{(h,y)\in\calY | y=h(x)\}$. 
For $i\in[n]$, let $b'_i(x)\in\{0,1\}$ be a value that takes $1$ if $Y_i$ is in $\calY_x$ and $0$ otherwise. 
Then $\bmc_{GLH}(x) = \sum_{i=1}^n b'_i(x)$. 
Assume that $X_i \neq x$. 
In this case, $b_i(x)$ is $1$ if the output of the hash function $h$ with input $X_i$ collides with $h(x)$, which happens with probability $\nu_{GLH} = \frac{1}{g}$. 
Then by (\ref{eq:GLH}), $b'_i(x)$ is randomly generated from the Bernoulli distribution with parameter $\mu_{GLH}$ (resp.~$\nu_{GLH}$) if $X_i=x$ (resp.~$X_i\ne x$). 
The number of $x$ in $\bmX_{1:n}$ is $n\bmp(x)$, whereas the number of the other input 
symbols 
in $\bmX_{1:n}$ is $n(1-\bmp(x))$. 
Therefore, $\text{Var}[\bmc_{GLH}(x)]$ in (\ref{eq:l2_GLH_E_hbmp_bmp}) can be written as follows:
\begin{align}
&\text{Var}[\bmc_{GLH}(x)] \nonumber\\
&= n\bmp(x)\mu_{GLH}(1-\mu_{GLH}) + n(1-\bmp(x))\nu_{GLH}(1-\nu_{GLH}) \nonumber\\
&= \frac{n\bmp(x)e^\epsilon(g-1)}{(g+e^\epsilon-1)^2} + \frac{n(1-\bmp(x))(g-1)}{g^2}.
\label{eq:l2_GLH_Var_c_GLH}
\end{align}
By (\ref{eq:l2_GLH_l2_variance}), (\ref{eq:l2_GLH_E_hbmp_bmp}), and (\ref{eq:l2_GLH_Var_c_GLH}), we obtain:
\begin{align*}
&\bbE[(\bmp(x) - \hbmp_{GLH}(x))^2] \\
&= \frac{\bmp(x)g^2 e^\epsilon}{n(e^\epsilon-1)^2(g-1)} + \frac{(1-\bmp(x))(g+e^\epsilon-1)^2}{n(e^\epsilon-1)^2(g-1)} \\
&= \frac{(g+e^\epsilon-1)^2}{n(e^\epsilon-1)^2(g-1)} 
+ \frac{\bmp(x)(g^2e^\epsilon - g^2 - 2g(e^\epsilon-1) - (e^\epsilon-1)^2)}{n(e^\epsilon-1)^2(g-1)} \\
&= \frac{(g+e^\epsilon-1)^2}{n(e^\epsilon-1)^2(g-1)} + \frac{\bmp(x)(g^2 - 2g - e^\epsilon + 1)}{n(e^\epsilon-1)(g-1)}.
\end{align*}
Therefore (\ref{eq:l2_GLH}) holds.
\end{proof}

\begin{proof}[Proof of Theorem~\ref{thm:optimal_g} (Optimal $g$ in the GLH)]
By (\ref{eq:l2_GLH}) and \allowbreak $\theta_{GLH} = \frac{e^\epsilon-1}{g+e^\epsilon-1}$, the expected $l_2$ loss can be written as follows:
\begin{align}
&\bbE[(\bmp(x) - \hbmp_{GLH}(x))^2] \nonumber\\
&= \frac{(g+e^\epsilon-1)^2}{n(e^\epsilon-1)^2(g-1)} + \frac{\bmp(x)(g^2 - 2g - e^\epsilon + 1)}{n(e^\epsilon-1)(g-1)} \nonumber\\
&= \frac{(g+e^\epsilon-1)^2}{n(e^\epsilon-1)^2(g-1)} + \frac{\bmp(x)g(g-1) }{n(e^\epsilon-1)(g-1)} - \frac{\bmp(x)(g + e^\epsilon - 1)}{n(e^\epsilon-1)(g-1)} \nonumber\\
&= \frac{1}{n\theta_{GLH}^2(g-1)} + \frac{\bmp(x)g}{n(e^\epsilon-1)} - \frac{\bmp(x)}{n\theta_{GLH}(g-1)} \nonumber\\
&= \frac{1-\bmp(x)\theta_{GLH}}{n\theta_{GLH}^2(g-1)} + \frac{\bmp(x)g}{n(e^\epsilon-1)} 
\label{eq:optimal_g_l2_loss}
\end{align}
By $\theta_{GLH} = \frac{e^\epsilon-1}{g+e^\epsilon-1}$ and $1-\theta_{GLH} = \frac{g}{g+e^\epsilon-1}$, (\ref{eq:optimal_g_l2_loss}) can be expressed as follows: 
\begin{align}
\bbE[(\bmp(x) - \hbmp_{GLH}(x))^2] = \frac{1-\bmp(x)\theta_{GLH}}{n\theta_{GLH}^2(g-1)} + \frac{\bmp(x)(1-\theta_{GLH})}{n\theta_{GLH}}.
\label{eq:optimal_g_l2_loss_2}
\end{align}
Now we consider increasing $g$ while fixing $\theta_{GLH}$ in (\ref{eq:optimal_g_l2_loss_2}). 
Note that $0 \leq \bmp(x) \leq 1$ and $0 \leq \theta_{GLH} \leq 1$. 
Moreover, $n$ and $\bmp(x)$ do not depend on $g$. 
Therefore, (\ref{eq:optimal_g_l2_loss_2}) is monotonically decreasing in $g$. 
In addition, the first term in (\ref{eq:optimal_g_l2_loss_2}) goes to zero as $g$ goes to infinity. 
Therefore, 
\begin{align*}
\bbE[(\bmp(x) - \hbmp_{GLH}(x))^2] \rightarrow \frac{\bmp(x)(1-\theta_{GLH})}{n\theta_{GLH}} \hspace{3mm} (g \rightarrow \infty),
\end{align*}
which proves Theorem~\ref{thm:optimal_g}.
\end{proof}

\end{document}